\def\BibTeX{{\rm B\kern-.05em{\sc i\kern-.025em b}\kern-.08em
    T\kern-.1667em\lower.7ex\hbox{E}\kern-.125emX}}
\newcommand{\dua}{\mathord{\hbox{\makebox[0pt][l]{\raise .6mm
                           \hbox{$\uparrow$}}$\uparrow$}}}
\newcommand{\remove}[1]{}
\newcommand{\bfomega}{{\boldsymbol{\omega}}}
\newcommand{\lo}[1]{\operatorname{#1} \,}
\newcommand{\sample}{\lo{sample}}
\newcommand{\score}{\lo{score}}
\newcommand{\inter}[1]{\lo{c_{#1}}}
\newcommand{\ldot}{\, . \,}
\newcommand{\rmin}{\lo{min}}
\newcommand{\cplus}{\lo{(+)}}
\newcommand{\rmax}{\lo{max}}
\newcommand{\ttt}{\lo{tt}}
\newcommand{\ff}{\lo{f\!f}}
\newcommand{\pr}{\lo{pr}}
\newcommand{\integral}{\lo{int}}
\newcommand{\Comp}{\mathrm{Comp}}
\newcommand{\nat}{\mathbb{N}}
\newcommand{\reale}{\overline{\realLine}^+}
\newcommand{\interval}{\mathbb{I}}
\newcommand{\realLine}{\mathbb{R}}
\newcommand{\realDom}{\mathbb{IR}}
\newcommand{\Dom}{\mathbb{D}}
\newcommand{\BC}{\textrm{\bf{BC}}}
\newcommand{\D}{\textrm{\bf{D}}}
\newcommand{\PD}{\textrm{\bf{PER}}}
\newcommand{\id}{{\sf id}}
\newcommand{\sle}{\sqsubseteq}
\newcommand{\sem}[1]{\llbracket{#1} \rrbracket}
\newcommand{\bsem}[1]{\mathcal{B}\sem{#1}}
\newcommand{\esem}[1]{\mathcal{E}\sem{#1}}
\newcommand{\lifted}[1]{(#1)_\bot}
\newtheorem{theorem}{Theorem}[section]
\newtheorem{lemma}[theorem]{Lemma}
\newtheorem{corollary}[theorem]{Corollary}
\newtheorem{example}[theorem]{Example}
\newtheorem{definition}[theorem]{Definition}
\newtheorem{proposition}[theorem]{Proposition}
\begin{document}

\title{A Domain-Theoretic Framework for Conditional Probability and Bayesian Updating in Programming
}

\author{\IEEEauthorblockN{Pietro Di Gianantonio}
\IEEEauthorblockA{
\textit{Univesity of Udine}\\
Udine, Italy \\
pietro.digianantonio@uniud.it}
\and
\IEEEauthorblockN{Abbas Edalat}
\IEEEauthorblockA{
\textit{Imperial College London}\\
London, UK \\
ae@ic.ac.uk}
}

\maketitle

\begin{abstract}
We present a domain-theoretic framework for probabilistic programming that provides a constructive definition of conditional probability and addresses  computability challenges previously identified in the literature. We introduce a novel approach based on an observable notion of events that enables computability. We examine two methods for computing conditional probabilities—-one using conditional density functions and another using trace sampling with rejection—-and prove they yield consistent results within our framework. We implement these ideas in a simple probabilistic functional language with primitives for sampling and evaluation, providing both operational and denotational semantics and proving their consistency. Our work provides a rigorous foundation for implementing conditional probability in probabilistic programming languages.
\end{abstract}
\section{Introduction}

Probabilistic programming languages (PPLs) have emerged as powerful tools for modelling complex systems involving uncertainty, enabling programmers to express probabilistic models and perform statistical inference using familiar programming constructs. Notable examples include BLOG \cite{Milch2005}, Church \cite{Goodman2008}, Anglican \cite{Wood2014}, and Stan \cite{Carpenter2017}, with applications ranging from artificial intelligence to scientific modelling.

Functional programming features have proven valuable in this context, offering natural abstractions for expressing probabilistic models and compositional reasoning about probabilistic computations. However, providing formal semantics for such languages presents significant challenges, particularly when dealing with continuous probability distributions.

Traditional approaches to denotational semantics for probabilistic languages have encountered the well-known Jung-Tix problem \cite{JungT98}: the difficulty of constructing a Cartesian closed category with probabilistic domain constructors. This has led researchers to explore various alternatives to classical domain theory, such as quasi-Borel spaces \cite{HeunenKSY17} and measurable spaces with random elements \cite{Staton2017}.

An alternative approach is to model probabilistic computations in terms of random variables rather than probability distributions. 
Random variables have been employed in denotational semantics of programming languages in various work \cite{BacciFKMPS18,GoubaultV11,HeunenKSY17,HuotLMS23,Mislove16,Scott14,Vakar2019}. However, defining an appropriate monad structure for random variables has proven challenging. Recently, \cite{DiGianantonio2024} proposed a solution based on domains with partial equivalence relations (PER), a concept used previously in several contexts including in~\cite{AP90,bauer2004equilogical,AL00}.

Another challenge in probabilistic computation is the treatment of conditional probability. The seminal result of Ackerman et al.\cite{ackerman2019} demonstrates that conditional probability is not computable in its classical formulation, even for simple cases involving continuous distributions. This non-computability arises from the fact that conditioning on a Borel subset that is not an open set, though well-defined in classical probability theory, cannot be effectively computed: its membership predicate is not an observable property or semi-decidable predicate in the sense of observational logic~\cite{smyth1983power,abramsky1991domain,vickers1989topology}. This theoretical limitation has profound implications for the design of probabilistic programming languages and their semantics.

In this paper, we extend the framework in~\cite{DiGianantonio2024} in several directions, with a primary focus on addressing the challenges posed by Ackerman et al.

We address these challenges by introducing a constructive notion of events--in which an event and its opposite are disjoint pairs of open sets and thus are observable properties--leading to a computable treatment of conditional probability that avoids the pathological cases identified by Ackerman et al. The domain of disjoint pairs of open sets of a topological space, first used in~\cite{edalat2002foundation} to develop a computable framework for solid modelling and computational geometry, establishes a proper domain of events, which makes conditional probability computable. Furthermore, we show how Bayesian updating can be implemented using weighted sampling through a score operator, necessitating a shift from probability distributions to more general measures.

A distinguishing feature of our work is the comprehensive treatment of continuous distributions combined with constructive real number computation. While previous approaches have typically relied on floating-point approximations, we develop a rigorous framework for exact real number computation in the context of probabilistic programming.
Specifically, we present a functional programming language that integrates three key features: continuous distributions, score-based weighted sampling, and exact real number computation. We provide both denotational and operational semantics, introducing new techniques to handle the interaction between these features. The correspondence between these semantic approaches is established through an adequacy theorem.

The paper is organized as follows: Section II presents the domain theory and measure theory basics. Section III introduces our valuation monad for PER domains. Section IV develops the treatment of conditional probability, addressing the computability issues highlighted by Ackerman et al. Section V presents extended random variables. Section VI deals with Bayesian updating. Section VII describes our probabilistic functional language and its semantics.


\subsection{Related work}
There exists a large literature on denotational models for probabilistic computation. Many of these works differ substantially from ours in two aspects. First, they do not use random variables and, instead, model probabilistic computation using probability measures~\cite{DanosE11,EhrhardPT18,HeunenKSY17,Vakar2019}, or continuous distributions~\cite{GoubaultJT23,JiaLMZ21}; second, these works do not use Scott domains but instead employ a larger class of dcpo's~\cite{JiaLMZ21,GoubaultJT23}, probabilistic coherence spaces~\cite{EhrhardPT18}, or categories built from measurable spaces~\cite{HeunenKSY17,HuotLMS23,Vakar2019}.  

In the existing literature, there are a limited number of constructions of a probabilistic monad based on random variables. A monad construction quite similar to ours, but not incorporating soft-conditioning, can be found in~\cite{DiGianantonio2024}. In~\cite{Barker16}, random variables are defined as pairs consisting of a domain and a function, but without discussion of monad commutativity. A strong monad of random variables is defined in~\cite{GoubaultV11}, although the construction differs significantly from ours. Domains of random variables with structures similar to our approach are defined in~\cite{Mislove16,Scott14}, but without presenting monad constructions. Random variables are also used to define quasi-Borel spaces by Heunen et al.~\cite{HeunenKSY17}.

The treatment of conditional probability in probabilistic programming has received significant attention. Various approaches to soft conditioning can be found in the literature: Staton~\cite{Staton2017} employs metric spaces, Park et al.~\cite{park2008probabilistic} use sampling functions, and Goubault-Larrecq et al.~\cite{GoubaultJT23} develop a domain-theoretic framework. The relationship between scoring and conditioning has been explored in several works~\cite{Staton2017,Vakar2019}, with different approaches for implementing conditioning in higher-order languages.

\section{Domain-theory and measure-theory basics}
We use standard notions in domain theory, as described in~\cite{abramsky1995domain} and~\cite{gierz2003continuous}. 
A {\em directed complete partial order} (dcpo) $D$ is a partial order in which every (non-empty) directed set $A\subseteq D$ has a lub (least upper bound) or supremum $\sup A$. The {\em way-below relation} $\ll$ in a dcpo $(D,\sqsubseteq)$ is defined by $x\ll y$ if whenever there is a directed subset $A\subseteq D$ with $y\sqsubseteq \sup A$, then there exists $a\in A$ with $x\sqsubseteq a$.  A subset $B\subseteq D$ is a {\em basis} if for all $y\in D$ the set $\{x\in B:x\ll y\}$ is directed with lub $y$. By a {\em domain}, we mean a non-empty dcpo with a countable basis. Domains are also called {\em $\omega$-continuous dcpo's}. In a domain $D$ with basis $B$, we have the interpolation property: the relation $x\ll y$, for $x,y\in D$, implies there exists $z\in B$ with $x\ll z\ll y$. For any $x\in D$, we write $\dua x=\{y\in D: x\ll y\}$, and for any finite subset $A\subset D$, we write $\uparrow\hspace{-.7ex} A=\{x\in D: \exists a\in A.\,a\sqsubseteq x\}$. 

A subset $A\subseteq D$ is {\em bounded} if there exists $d\in D$ such that for all $x\in A$ we have $x\sqsubseteq d$. If any bounded subset of $D$ has a lub then $D$ is called {\em bounded complete}. Thus, a bounded complete domain, also called a {\em Scott domain}, has a bottom element $\bot$, which is the lub of the empty subset. 

We denote the interior and the complement of a subset $S$ of a topological space by $S^\circ$ and ${S}^c$ respectively. The lattice of open sets of a topological space $X$ is denoted by $\mathcal{O}(X)$ For a map $f:X\to Y$, denote the image of any subset $S\subseteq X$ by $f[S]$.

The set of non-empty compact intervals of the real line ordered by reverse inclusion and augmented with the whole real line as bottom is the prototype bounded complete domain for real numbers denoted by $\realDom$, in which $I\ll J$ iff $J \subseteq I^\circ$. It has a basis consisting of all intervals with rational endpoints.  If $I\subseteq \realLine$ is a non-empty real interval, its left and right endpoints are denoted by $I^-$ and $I^+$ respectively; thus if $I$ is compact, $I=[I^-,I^+]$. A {\em dyadic} number is of the form $m/2^n$, for some $m,n\in \nat$.

The Scott topology on a domain $D$ with basis $B$ has sub-basic open sets of the form $\dua b:=\{x\in D:b\ll x\}$ for any $b\in B$. We let $\D$ denote the category of domians with Scott continuous maps, whereas $\BC$ denotes the full sub-category of bounded complete domains. 

If $X$ is any topological space with some open set $O\subseteq X$ and $d\in D$, then the single-step function $d\chi_O:X\to D$, defined by $d\chi_O(x)=d$ if $x\in O$ and $\bot$ otherwise, is a Scott continuous function. The partial order on $D$ induces, by pointwise extension, a partial order on continuous functions of type $X\to D$ with $f\sqsubseteq g$ if $f(x)\sqsubseteq g(x)$ for all $x\in X$.  For any two bounded complete domains $D$ and $E$, the function space $(D\to E)$ consisting of Scott continuous functions from $D$ to $E$ with the extensional order is a bounded complete domain with a basis consisting of lubs of bounded and finite families of single-step functions. 

If $X$ is a topological space such that its lattice $\mathcal{O}(X)$ of open sets is continuous and $D$ is a bounded complete domain, with $f:X\to D$ a Scott continuous and $d\chi_O:X\to D$ a single-step function, then we have~ \cite[Proposition II-4.20(iv)]{gierz2003continuous}
\begin{equation}\label{way-below}d\chi_O\ll f \iff O\ll_{\mathcal{O}(X)} f^{-1}(\dua d)\end{equation}

\subsection{Computability in domain theory} We adopt the notion of effectively given domains as in~\cite{DiGianantonio2024}, which for Scott domains is equivalent to~\cite{plotkin1981post}. We say a domain $D$ is {\em effectively given} with respect to an enumeration $b:{\mathbb N}\to B$ of the countable basis $B\subseteq D$ if the relation $ b_m\ll b_n$ is decidable. If $D$ has a least element $\bot$, we assume $b_0=\bot$. If $D$ is, additionally, bounded complete, we further assume that the relation $b_m\sqcup b_n=b_p$ is decidable. We say $x\in D$ is {\em computable} if the set $\{n:b_n\ll x\}$ is r.e. The latter is equivalent to the existence of a total recursive function $g:\nat\to \nat$ such that $(b_{g(n)})_{n\in \nat}$ is an increasing sequence with $x=\sup_{n\in\nat}b_{g(n)}$. \remove{A sequence $(x_n)_{n\in \nat}$ is said to be {\em computable} if there exists a total recursive function $h$ such that $x_n=x_{h(n)}$ for $n\in \nat$.} A continuous map $f:D\to E$ of effectively given domains $D$, with basis $\{a_0,a_1,\ldots\}$, and $E$, with basis $\{b_0,b_1\ldots\}$, is {\em computable} if the relation $b_n\ll f(a_m)$ is r.e. 

\subsection{Elements of measure theory}
 Recall, say from~\cite{athreya2006measure}, that a measurable space on a set $X$ is given by a $\sigma$-algebra $S_X$ subsets of $X$, i.e., a non-empty family of subsets of $X$ closed under the operations of taking countable unions, countable intersections and complementation. Elements of $S_X$ are called measurable sets. For a topological space $X$, the collection of all Borel sets on X forms a $\sigma$-algebra, known as the $\sigma$-Borel algebra: it is the smallest $\sigma$-algebra containing all open sets (or, equivalently, all closed sets). A map $f: (X,S_X)\to (Y,S_Y)$ of two measurable spaces is measurable, if $f^{-1}(C) \in S_X$ for any $C\in S_Y$. Any continuous function of topological spaces is measurable with respect to the Borel algebras of the two spaces. 

 Let $\overline{\realLine^+}$ be the set of extended non-negative real numbers (including $+\infty$)  with the usual ordering, equipped with is Scott topology, i.e., for any $a\in \realLine^+$, the set $\{x:a<x\}$ is open. The extended set of non-negative rational numbers $\overline{\mathbb{Q}^+}$ is similarly defined. 

A measure on a measure space $(X,S_X)$ is a map $\nu:S_X\to \reale$ with $\nu(\emptyset)=0$ such that $\nu(\bigcup_{i\in \nat}\nu(E_i))=\sum_{i\in \nat}\nu(E_i)$ for any countable disjoint collection of measurable sets $E_i$ for $i\in \nat$. if $f:(X,S_X)\to (Y,S_Y)$ is a measurable map then it induces a measure $\nu\circ r^{-1}$ on $Y$ called the {\em push-forward measure}.

A {\em probability space} $(\Omega,S_\Omega,\nu)$ is given by a probability measure $\nu$ on a measure space $(\Omega,S_\Omega)$, where $\Omega$ is called the {\em sample space}, $\nu(\Omega)=1$, and measurable sets are called {\em events}. A subset $A\subseteq \Omega$ is a {\em null set} if $A\subseteq B\in S_\Omega$ with $\nu(B)=0$. 

Given the probability space $(\Omega,S_\Omega,\nu)$, a random variable on a measurable space $(Y,S_Y)$ is a measurable map $r: (\Omega,S_\Omega,\nu)\to (Y,S_Y)$. The probability of $r\in C$ for $C\in S_Y$ is given by $\operatorname{Pr}(r\in C)= \nu(r^{-1}(C))$.  Two random variables $r_1$ and $r_2$ are {\em independent} if $\operatorname{Pr}(r_1\in C_1,r_2\in C_2)= \operatorname{Pr}(r_1\in C_1)\operatorname{Pr}(r_2\in C_2)$ for all $C_1,C_2\in S_Y$.

Recall that for any measure $\nu$ on a measurable space $X$, any measurable map $w:X\to \realLine^+$ defines a measure, which we denote by $\nu_w$, with $\nu_w(S)=\int_S w\,d\nu$, for any measurable $S\subset X$. This construction has a converse. Recall that a measure $\nu$ on $X$ is $\sigma$-finite if $X$ is the countable union of measrable sets with finite $\nu$-measure.  Now, suppose $\nu_1$ and $\nu_2$ are $\sigma$-finite measures on $X$ such that $\nu_2$ is absolutely continuous with respect to $\nu_1$, i.e., for any measurable set $S\subset X$, we have: $\nu_1(S)=0$ implies $\nu_2(S)=0$. Then there exists, essentially, a unique measurable function $f:X\to [0,\infty)$, called the Radon-Nikodym derivative of $\nu_2$ with respect to $\nu_1$ and denoted by  $f=\frac{d\nu_2}{d\nu_1}$ such that  $\nu_2(S)=\int_S f\,d\nu_1$. In addition, for any measurable map $g:X\to \realLine$ and measurable set $S\subset X$, we have~\cite{rudin1987real,billingsley2017probability}: 
\begin{equation}\label{rel-density-int}
    \int_S g\,d\nu_2=\int_S g\frac{d\nu_2}{d\nu_1}\,d\nu_1=\int_S g\cdot f\,d\nu_1.\end{equation}

Consider the domain $D\in \Dom$, with countable basis $B_D$,  equipped with the sigma algebra induced by the Scott topology as a measure space.  Then $\mathcal{O}(D)$  has countable basis $B_{\mathcal{O}}$ consisting of opens sets $\dua b$ with $b\in B_D$.  A {\em valuation} is a map $\alpha: \mathcal{O}(D)\to \overline{\realLine^+}$ satisfying (i) $\alpha(\emptyset)=0$, (ii) $\alpha(O_1)\leq \alpha(O_2)$ if $O_1\subset O_2$, (iii) $\alpha(O_1)+\alpha(O_2)=\alpha(O_1\cup O_2)+\alpha(O_1\cap O_2)$. A {\em continuous} valuation is a valuation that is continuous with respect to the Scott topology.

We say a continuous valuation on $D\in \D$ is {\em $\sigma$-finite} if there are open sets $O_i$ for $i\in \nat$ with $D\setminus \{\bot\}=\bigcup_{i\in \nat} O_i$ such that $\nu(O_i)<\infty$ for all $i\in \nat$. We know that every bounded continuous valuation on $D$ extends uniquely to a Borel measure on $D$ and that every $\sigma$-finite continuous valuation on $D$ uniquely corresponds to a measure on $D$ except for the arbitrary choice of the measure on the single element $\{\bot\}$~\cite{alvarez2000extension}. Moreover, any $\sigma$-finite continuous valuation on a locally compact Hausdorff space
can be extended to a Borel measure~\cite[Theorem 5.3]{KL05}.

The notion of $\sigma$-finite measure and $\sigma$-finite valuation are related via spaces of maximal points, i.e., a domain $D$ whose set of maximal elements $\operatorname{Max}(D)$ equipped with its relative Scott topology is a Polish space (i.e., a separable, complete and metrizable space)~\cite{lawson1997spaces}. On these domains, $\operatorname{Max}(D)$ is a $G_\delta$ subset, i.e., a countable intersection of open sets.  If a continuous valuation $\nu$ on $D$ has its support on $\operatorname{Max}(D)$~\cite{edalat1995dynamical,lawson1998computation}, i.e., $\nu(D\setminus\operatorname{Max}(D))=0 $, then $\nu$ is $\sigma$-finite iff its restriction $\nu\restriction_{\operatorname{Max}(D)}$ on the maximal elements is a $\sigma$-finite measure.

We let $\mathcal{V}(D)$ denote the set of continuous  valuations on $D\in\D$ with pointwise ordering induced by their values on opens of $D$.  For any $d\in D$, the {\em point valuation} $\delta(d)$ at $d$ is given by $\delta(d)(O)=1$ if $d\in O$ and $0$ otherwise. A {\em simple valuation} $\alpha$ on $D$ is the finite sum of points valuations. Let  $\alpha=\sum_{i\in I}^nr_i\delta(d_i)$ be a simple valuation and $\beta\in \mathcal{V}(D)$.

\begin{proposition}\label{kirch}~\cite{kirch1993bereiche}
    \begin{enumerate}
        \item[(i)] $\alpha\ll \beta$ iff for all $J\subset I$,
        \[\sum_{i\in J}r_i<\beta\left(\bigcup_{i\in J}\dua d_i\right)\]
        \item[(ii)] $\mathcal{V}(D)$ is a domain with a countable basis $B_{\mathcal{V}(D)}$ of simple valuations $\sum_{i\in I}^nr_i\delta(d_i)$, obtained by choosing $r_i\in \mathbb{Q}$ and $d_i\in B_D$, for $i\in \nat$.  
    \end{enumerate}
\end{proposition}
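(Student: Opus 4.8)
The plan is to first record a preliminary fact and then treat the two parts in turn. Throughout write $U_J := \bigcup_{i\in J}\dua d_i$ for $J\subseteq I$, and assume without loss of generality that every coefficient $r_i$ is strictly positive (points of zero mass may be discarded, and the characterization in (i) is understood for nonempty $J$, the empty case being vacuous). The fact I would establish first is that $\mathcal{V}(D)$ is a dcpo in which directed suprema are computed pointwise: if $(\gamma_k)_k$ is directed then $(\sup_k\gamma_k)(O)=\sup_k\gamma_k(O)$ for every open $O$, and this pointwise supremum is again a continuous valuation. This is a routine verification, since the valuation axioms and Scott-continuity all pass to pointwise directed suprema, and it is exactly what makes the relation $\ll$ on $\mathcal{V}(D)$ meaningful in parts (i) and (ii).

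For the implication $(\Leftarrow)$ of (i), I would suppose $\sum_{i\in J}r_i<\beta(U_J)$ for every nonempty $J\subseteq I$ and take an arbitrary directed family $(\gamma_k)_k$ with $\beta\sqsubseteq\sup_k\gamma_k$. For each nonempty $J$ the pointwise-supremum fact gives $\sup_k\gamma_k(U_J)\geq\beta(U_J)>\sum_{i\in J}r_i$, so some index realises $\gamma_{k_J}(U_J)>\sum_{i\in J}r_i$; as $I$ is finite, directedness yields a single $k^{\ast}$ with $\gamma_{k^{\ast}}(U_J)>\sum_{i\in J}r_i$ for all nonempty $J$ at once. Now fix an open set $O$ and put $J=\{i\in I:d_i\in O\}$. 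Since Scott-open sets are upper sets, $d_i\in O$ forces $\dua d_i\subseteq\{x:d_i\sqsubseteq x\}\subseteq O$, hence $U_J\subseteq O$; monotonicity then gives $\alpha(O)=\sum_{i\in J}r_i\leq\gamma_{k^{\ast}}(U_J)\leq\gamma_{k^{\ast}}(O)$ (trivially if $J=\emptyset$). Thus $\alpha\sqsubseteq\gamma_{k^{\ast}}$, proving $\alpha\ll\beta$. For $(\Rightarrow)$ I would argue by contraposition with a scaling family: if $\sum_{i\in J}r_i\geq\beta(U_J)$ for some nonempty $J$, consider the chain $(t\beta)_{0<t<1}$, which is directed with pointwise supremum $\beta$. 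Were $\alpha\sqsubseteq t\beta$ for some $t<1$, then, because $d_i\in\dua d_i\subseteq U_J$ for $i\in J$, we would obtain $\sum_{i\in J}r_i\leq\alpha(U_J)\leq t\,\beta(U_J)\leq t\sum_{i\in J}r_i<\sum_{i\in J}r_i$, a contradiction. Hence no member of the chain dominates $\alpha$ although its supremum is $\beta$, so $\alpha\not\ll\beta$.

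For (ii), the dcpo property is the preliminary fact, so it remains to prove $\omega$-continuity with the stated basis. Countability of $B_{\mathcal{V}(D)}$ is immediate from countability of $B_D$ and $\mathbb{Q}$, so the real content is the approximation theorem: every continuous valuation $\beta$ is the directed supremum of the simple valuations way-below it, and these may be taken with rational masses supported on $B_D$. I would construct the approximants resolution by resolution: for each finite $M\subseteq B_D$ build a simple valuation whose masses are given by an inclusion--exclusion (Möbius) expression in the quantities $\beta(U_J)$ for $J\subseteq M$, placing mass at suitable joins of elements of $M$. Part (i) is then the certification tool, as the construction is arranged precisely so that the strict inequalities $\sum_{i\in J}r_i<\beta(U_J)$ hold, whence each approximant lies in $\dua\beta$. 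One checks that refining $M$ gives a directed family whose supremum recovers $\beta$ on every $\dua b$ and hence everywhere, and finally, since the inequalities of (i) are strict, each real-coefficient approximant can be replaced by one with rational masses and basis support points without leaving $\dua\beta$, producing the basis $B_{\mathcal{V}(D)}$.

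The main obstacle is this last approximation step. The two implications of (i) and the dcpo property are comparatively routine, but the delicate point is the combinatorial design of the masses, via the inclusion--exclusion formula, so that all the strict inequalities of part (i) are met simultaneously and the family is genuinely directed with the correct supremum; managing these masses (ensuring nonnegativity and the right totals as $M$ grows) is where I expect the real work to lie, and it is essentially Jones's splitting/approximation argument recast through the criterion proved in (i).
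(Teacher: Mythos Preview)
The paper does not prove this proposition; it is quoted from Kirch's thesis as a known result, so there is no in-paper argument to compare against. Assessing your proof on its own merits:

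Your $(\Leftarrow)$ direction is fine, and your outline for (ii) is a reasonable description of the Jones--Kirch approximation scheme (you are right that the inclusion--exclusion bookkeeping is where the real work is).

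The $(\Rightarrow)$ direction, however, has a genuine gap. In the contrapositive you write ``because $d_i\in\dua d_i\subseteq U_J$ for $i\in J$'' to obtain $\sum_{i\in J}r_i\le\alpha(U_J)$. But $\dua d_i=\{x:d_i\ll x\}$, so $d_i\in\dua d_i$ holds only when $d_i$ is compact; in a merely continuous domain this fails. Concretely, take $D=[0,1]$ with its usual order, $\alpha=\delta(1/2)$: then $U_{\{1\}}=\dua(1/2)=(1/2,1]$ and $\alpha(U_{\{1\}})=0$, not $1$. Your chain
\[
\sum_{i\in J}r_i\le\alpha(U_J)\le t\,\beta(U_J)\le t\sum_{i\in J}r_i
\]
therefore breaks at the first inequality, and the scaling family $(t\beta)_{t<1}$ alone does not yield the contradiction. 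The standard fix is to combine scaling with an approximation of the support points: for each $i$ pick $c_i\ll d_i$, so that $d_i\in\dua c_i$, and work with the enlarged opens $\bigcup_{i\in J}\dua c_i$ (or, equivalently, use the Jones splitting lemma relating $\sqsubseteq$ on simple valuations to transport plans). Without some such interpolation step the forward implication does not go through.
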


It follows from Propositon~\ref{kirch}(i) that if $D$ is effectively given, then the way-below relation on $\mathcal{V}(D)$ is decidable and thus $\mathcal{V}(D)$ inherits an effective structure. 

Let $B_{ (D\to \overline{\realLine^+})}$ be the countable basis of the function space $ (D\to \overline{\realLine^+}) $ consisting of step functions of the form $g=\sup_{i\in I} r_i\chi_{O_i}$ where $r_i\in {\mathbb{Q}^+}$ and $O_i\in B_{\mathcal{O}(D)}$. Let $\alpha=\sum_{i\in I}^nr_i\delta(d_i)$ be a simple valuation. We now define:
\[F: B_{\mathcal{V}(D)}\times B_{(D\to \overline{\realLine^+}) }\to \mathcal{V}(D)\]
\begin{equation}\label{int-on-basis}(\alpha,g)\mapsto\sum_{d_i\in O_j}r_iq_j\delta(d_i)\end{equation}
We note that $F(\alpha,g)=\lambda O.\int_Og\,d\alpha$. 
It is straightforward to check that, for $\alpha\in B_{\mathcal{V}(D)}$ and $g\in B_{(D\to \overline{\realLine^+})} $, we have:
\begin{equation}\label{cons-ex}
    F(\alpha,g)=\sup \{F(\alpha',g'):\alpha'\ll \alpha,g'\ll g\}.\end{equation}
It follows, from~\cite[Section 2.2.6]{abramsky1995domain},  that $F$ has unique conservative extension to 
\[F:  {\mathcal{V}(D)}\times {(D\to \overline{\realLine^+}) }\to \mathcal{V}(D),\]
given by Equation~\ref{cons-ex} for any $\alpha\in \mathcal{V}(D)$ and $g\in {(D\to \overline{\realLine^+})}$.

\begin{proposition}\label{functionalF}
\begin{itemize}
    \item[(i)] 
The functional $F$ is Scott continuous, and if $\alpha$ is a $\sigma$-finite valuation then $F(\alpha,g)=\lambda O.\int_O g\,d\alpha$.
\item[(ii)] $F$ is computable with respect to any effective structure on $D$, and the resulting effective structures on $\mathcal{V}(D)$ and $(D\to \overline{\realLine^+})$.  
\end{itemize}
\end{proposition}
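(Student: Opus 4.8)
The plan is to treat the two parts separately, in both cases exploiting that $F$ is, by construction, the unique conservative extension of its restriction to the basis $B_{\mathcal{V}(D)}\times B_{(D\to\overline{\realLine^+})}$. For the Scott continuity claim in (i), I would observe that essentially nothing remains to be checked: the basis map $(\alpha,g)\mapsto F(\alpha,g)$ is monotone (immediate from the finite formula~\ref{int-on-basis}) and satisfies Equation~\ref{cons-ex}, so the cited extension theorem~\cite[Section~2.2.6]{abramsky1995domain} already delivers $F$ as the unique Scott-continuous map agreeing with these basis values. The substantive content of (i) is the identification $F(\alpha,g)=\lambda O.\int_O g\,d\alpha$ for $\sigma$-finite $\alpha$, and this is where I would do the real work.

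My strategy is to avoid the extended measure at first and instead pass through the valuation (layer-cake) integral. For Scott-continuous $g:D\to\overline{\realLine^+}$ the set $\{x:g(x)>t\}$ is Scott open for every $t$, so $G(\mu,g):=\lambda O.\int_0^\infty\mu(O\cap\{x:g(x)>t\})\,dt$ is well defined using only the values of the continuous valuation $\mu$ on opens. On basis elements $\alpha=\sum_i r_i\delta(d_i)$ and step functions $g$ this reduces to the finite sum $\sum_{d_i\in O}r_i\,g(d_i)$, which is exactly $\int_O g\,d\alpha$ for a simple valuation, hence agrees with the value already recorded for $F$. I would then show $G$ is jointly Scott continuous in $(\mu,g)$: continuity in $\mu$ uses that $\mu\mapsto\mu(U)$ is Scott continuous together with monotone convergence in the $t$-integral, while continuity in $g$ uses that a directed $g_l\uparrow g$ forces the directed union $\{g_l>t\}\uparrow\{g>t\}$ of opens, which continuous valuations preserve, again with monotone convergence in $t$. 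Since $F$ and $G$ are both Scott continuous and coincide on the basis, they coincide everywhere, so $F(\mu,g)=G(\mu,g)$ for all continuous $\mu$ and all $g$. It then remains to identify $G(\alpha,\cdot)$ with the measure-theoretic integral when $\alpha$ is $\sigma$-finite, by invoking the valuation-to-measure correspondence recalled before the proposition and the standard fact that for a lower-semicontinuous integrand the layer-cake integral against the associated Borel measure equals $\int_O g\,d\alpha$. I expect this final identification to be the main obstacle: it genuinely requires $\sigma$-finiteness (so that $\alpha$ corresponds to a bona fide measure) and the equality of the layer-cake and Lebesgue integrals, whereas the continuity bookkeeping becomes routine once the layer-cake form is adopted.

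For computability in (ii), I would argue that $F$ restricted to the basis is not merely continuous but effectively computable, and that this suffices. Given basis data $\alpha=\sum_i r_i\delta(d_i)$ and $g=\sup_j q_j\chi_{O_j}$ with $O_j=\dua b_j$, formula~\ref{int-on-basis} exhibits $F(\alpha,g)$ as a simple valuation supported on (a subset of) the points $d_i$, with rational weights determined by the $r_i$, the $q_j$, and the incidence relation $d_i\in O_j$, i.e.\ $b_j\ll d_i$, which is decidable because $D$ is effectively given. Thus $F(\alpha,g)$ is, uniformly in the indices of $\alpha$ and $g$, a computable simple valuation. To conclude computability of the functional it remains to show that $\beta\ll F(\alpha,g)$ is r.e.\ in $(\beta,\alpha,g)$ for $\beta=\sum_k s_k\delta(e_k)\in B_{\mathcal{V}(D)}$; but by Proposition~\ref{kirch}(i) this way-below test unfolds into finitely many strict rational inequalities $\sum_{k\in J}s_k<F(\alpha,g)\big(\bigcup_{k\in J}\dua e_k\big)$ as $J$ ranges over subsets of the index set of $\beta$, and each side is a computable rational since the valuation $F(\alpha,g)$ of a finite union of basic opens is decidable. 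Hence the relation is in fact decidable, and a fortiori r.e., which is exactly the computability condition for $F$ with respect to the effective structures on $\mathcal{V}(D)$ and $(D\to\overline{\realLine^+})$. Since no step beyond the decidability already guaranteed by effective givenness and Proposition~\ref{kirch} is needed, I anticipate no serious obstacle in part (ii).
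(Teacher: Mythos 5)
Your proposal is correct, and both parts rest on the same two pillars as the paper's proof: Choquet's layer-cake formula combined with the monotone convergence theorem for the identification in (i), and the decidability of the way-below relation on simple valuations from Proposition~\ref{kirch}(i) for (ii); part (ii) is essentially verbatim the paper's argument. The one organizational difference is in (i): the paper fixes $g$, takes an increasing sequence of simple valuations $\alpha_i\ll\alpha$ with $\sup_i\alpha_i=\alpha$, asserts $F(\alpha_i,g)=\lambda O.\int_O g\,d\alpha_i$, and passes to the limit in $i$ via Choquet and MCT; you instead define the layer-cake functional $G(\mu,g)=\lambda O.\int_0^\infty\mu(O\cap\{g>t\})\,dt$ on all of $\mathcal{V}(D)\times(D\to\overline{\realLine^+})$, verify joint Scott continuity and agreement with $F$ on the basis, and invoke uniqueness of the conservative extension, deferring $\sigma$-finiteness to the final identification of $G(\alpha,\cdot)$ with the Lebesgue integral against the extended Borel measure. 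Your packaging buys a little extra rigour: the paper's step asserting $F(\alpha_i,g)=\lambda O.\int_O g\,d\alpha_i$ for a simple $\alpha_i$ but an arbitrary continuous $g$ is not literally Equation~\ref{int-on-basis} (which covers only basis $g$) and tacitly requires the directed-limit argument in the $g$ coordinate that you carry out explicitly via $\{g_l>t\}\uparrow\{g>t\}$ and MCT; it also makes clear that $G$ is well defined on all continuous valuations, with $\sigma$-finiteness needed only to match the layer-cake integral to the classical one. The trade-off is that the paper's version is shorter and stays closer to the approximation-by-simple-valuations style used throughout the section.
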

{
    \begin{proof}
   (i) The Scott continuity of $F$ follows from the construction~\cite{abramsky1995domain}. If $\alpha$ is a $\sigma$-finite valuation then it has a unique extension to a Borel measure~\cite{alvarez2000extension}. Let $\alpha\in \mathcal{V}(D)$ and $g\in (D\to \overline{\realLine^+})$. Assume $\alpha_i\ll \alpha$ for $i\in \nat$ is an increasing sequence of simple valuations with $\alpha=\sup_{i\in \nat}\alpha_i$.  Then, $F(\alpha_i,g)=\lambda O. \int_Og\,d\alpha_i$. By the Choquet's formulation of the Lebesgue integral~\cite{mesiar2010choquet}, we obtain: $\lambda O. \int_Og\,d\alpha_i=\lambda O. \int_0^\infty \alpha_i(g^{-1}((a,\infty])\cap O)\,da$. Since the sequence of maps $\lambda a.\alpha_i(g^{-1}((a,\infty])\cap O)$ for $i\in \nat$ is increasing with $\sup_{i\in\nat} \lambda a.\alpha_i(g^{-1}((a,\infty])\cap O)=\lambda a. \alpha(g^{-1}((a,\infty])\cap O)$, by the monotone convergence theorem we obtain: $F(\alpha,g)=\sup_{i\in \nat} F(\alpha_i,g)=\lambda O. \sup_{i\in \nat}\int_Og\,d\alpha_i=\lambda O. \int_Og\,d\alpha $.

   (ii) Let $\alpha$ be the simple valuation and $g$ the step function in Equation~\ref{int-on-basis}. By Proposition~\ref{kirch}(i)  the relation,
   \[\alpha'\ll F(\alpha,g)=\sum_{d_i\in O_j}r_iq_j\delta(d_i)\]
   is decidable for any simple valuation $\alpha'$. Hence, $F$ is computable. 
\end{proof}
}

\remove{Recall that given two topological spaces $D$ and $E$, the point-open topology $[D\to E]_p$ on their space of continuous function $(D\to E)$ has sub-basic open sets of the form \[(x\to O)_p:=\{f:X\to Y: f(x)\in O\},\]
where $x\in X$ and $O\subseteq Y$ is an open set. 
\begin{proposition}~\cite[Lemma 5.16]{goubault2010groot}
 Suppose $D$ is a domain and $E$ a bounded complete domain. Then the Scott topology $\mathcal{O}{(D\to E)}$ and the point-open topology $[D\to C]_p$ on $(D\to E)$ coincide. 
\end{proposition}}
\remove{We have a splitting lemma for the way-below relation on normalised valuations. Suppose $\alpha=\sum_{1\leq i\leq m}p_i\delta(c_i)$ and $\beta=\sum_{1\leq j\leq n}q_j\delta(d_j)$ are normalised valuations on a continuous dcpo. 
\begin{proposition}\label{simple-valuations-way-below}~\cite[Proposition 3.5]{edalat1995domain} We have $\alpha\ll\beta$ if and only if there exist $t_{ij}\in [0,1]$ for $1\leq i\leq m$ and $1\leq j\leq n$ such that 
\begin{itemize}
\item $c_{i_0}=\bot$ for some $i_0$ with $1\leq i_0\leq  m$ and for all $j$ with $1\leq j\leq m$, we have $t_{i_0j}>0$,
\item $\sum_{i=1}^m t_{ij}=q_j$ for each $j=1, \ldots, n$,
\item $\sum_{j=1}^n t_{ij}=p_i$ for each $i=1, \ldots, m$.\
\item $t_{ij}>0\,\Rightarrow c_i\ll d_j$. 
\end{itemize}
\end{proposition}

More generally, we have the following result. 
\begin{proposition}\label{simp-val-way}~\cite[Proposition 2.6]{DE24}
    Suppose $D$ is a continuous dcpo with a simple valuation  $\sigma=\sum_{i\in I} p_i\delta(d_i)\in PD$  and  a continuous valuation $\alpha\in PD$. Then $\sigma\ll \alpha$ iff there exists $i_0\in I$ with $d_{i_0}=\bot$ such that for all $J\subseteq I\setminus \{i_0\}$ we have: 
    \begin{equation}\label{simp-way-below-eq}\sum_{j\in J}p_j<\alpha\left(\bigcup_{j\in J} \dua d_j\right)\end{equation} 
\end{proposition}}

\remove{Let $\Sigma=\{0,1\}$ be the two-point space with its discrete topology. The uniform measure on $\Sigma$ induces the product measure on $\Sigma^\nat$.
 Our probability space $\Omega$  is one of the following.  
\begin{itemize}
    \item[(i)] The Cantor space $\Sigma^{\nat}$ of the infinite sequences $\bfomega=\bfomega_0\bfomega_1\cdots$ over bits $0$ and $1$ with the product topology $\mathcal{O}({\Sigma^{\nat}})$ and product measure $\nu$ induced on $\Sigma^\nat$ by the uniform distribution on $\{0,1\}$.
   \item[(ii)] The unit interval $[0,1]$ with its Euclidean topology $\mathcal{O}({ [0,1]})$ and Lebesgue measure $\nu$.
   \end{itemize}\remove{All four probability spaces presented above are Hausdorff and countably based. The two probability spaces $\Sigma^\nat$ and $[0,1]$ are compact, whereas $\Sigma_0^\nat$ and $(0,1)$ are only locally compact. 
Some of the results we present are only valid for the probability spaces $\Sigma_0^\nat$ and $(0,1)$.
To denote either the probability space $\Sigma_0^\nat$ or $(0,1)$, we use the symbol $\Omega_0$.}

It follows that the probability measures $\nu$ defined on all four spaces are unique extensions of the continuous valuation induced by them~\cite{lawson1982valuations}.
\begin{definition}
    The {\em probability map} $T_{RD\to PD}:(\Omega\to D)\to PD$ is defined by $T_{RD\to PD}(r)=\nu\circ r^{-1}$.
Two random variables $r,s\in (\Omega\to D)$ are {\em equivalent}, written $r\sim s$, if $T_{RD\to PD}(r)=T_{RD\to PD}(s)$.
\end{definition}
\remove{
\begin{proposition}\label{way-below-random}
    Suppose $r\in (\Omega\to D)$ is a random variable with a simple valuation $\alpha\ll T_{RD\to PD}(r)$. Then there exists a simple random variable $s$ with $s\ll r$ and $T_{RD\to PD}(s)=\alpha$.
\end{proposition}
\begin{proposition}\label{eq-uniformity}
 Consider any two simple valuations \[\alpha_1=\sum_{i\in I}p_i\delta(c_i)\qquad \alpha_2=\sum_{j\in J}q_j\delta(d_j)\]in $(\Omega\to D)$ with $\alpha_1\sqsubseteq \alpha_2$ and dyadic coefficients $p_i$, $q_i$.
 For any simple random variable $r_1\in (\Omega\to D)$ with $T_{RD\to PD}(r_1)=\alpha_1$, there exists a simple random variable $r_2\in (\Omega\to D)$ with $r_1\sqsubseteq r_2$ and $T_{RD\to PD}(r_2)=\alpha_2$. 

\end{proposition}\begin{proposition}\label{way-below-random}
    Suppose $r\in (\Omega\to D)$ is a random variable with a simple valuation $\alpha\ll T_{RD\to PD}(r)$. Then there exists a simple random variable $s$ with $s\ll r$ and $T_{RD\to PD}(s)=\alpha$.
\end{proposition}
\begin{proposition}\label{way-below-random-random}
  If $r\in (\Omega\to D)$ is a random variable, there exists an increasing sequence of simple random variables $r_i$ with $r_i\ll r_{i+1}$ for $i\in \nat$, such that $T_{RD\to PD}(r_i)\ll T_{RD\to PD}(r)$ with $\sup_{i\in \nat} r_i=r$ a.e., with respect to $\nu$.   
\end{proposition}}
\remove{Let $R_0D:=(\Omega_0\to D)$.
\begin{theorem}\label{random-valuation}
The map $T_{RD\to PD}$ is a continuous function onto $\mathcal{P}(D)$, mapping step functions to simple valuations. In addition, $T_{R_0D\to PD}$ is an open map, preserving the way-below relation.\end{theorem}
Let $B_{(\Omega\to D)}$ be the collection of step function in $(\Omega\to D)$ and $B_{(\Omega_0\to D)}\subseteq B_{(\Omega\to D)}$ be the collection  of step functions that take the value $\bot$ in a non-empty open set. \begin{lemma}\label{way-below-A-0}
   $B_{(\Omega_0\to D)}$ is a basis for $(\Omega_0\to D)$.
\end{lemma}
\begin{proposition}\label{random-simple-way-above}
For $b\in B_{(\Omega_0\to D)}$, we have $T[\dua b]=\dua (T(b))$.
\end{proposition}}

\begin{definition}
We say a Scott open subset $O\subseteq (\Omega \to D)$ is {\em R-open} if it is closed under $\sim_D$, i.e., $d_1 \in O$ and $d_1\sim d_2$ implies $d_2\in O$.
\end{definition} 

\remove{\begin{definition}\label{simple-R-open}
  Given an open set $O\subseteq D$ of $D\in \BC$ and a real number $0\leq q\leq 1$, the subset $[q\to O]$ of $\langle RD,\sim_{RD}\rangle$
  is defined by
  \[[q\to O]:= 
  \{ r: (\Omega \to D) : 
  \nu(r^{-1}(O))>q\}. \]
\end{definition}

\begin{lemma}\label{is-scott-open}
The set $[q\to O]\subseteq (\Omega\to D)$ is R-open for any R-open set $O\subseteq D$, for all $q\in [0,1]$. 
\end{lemma}}

 A {\em PER domain} $(D,\sim)$ is a Scott domain $D$ with a symmetric and transitive {\em logical} relation $\sim$, i.e.,  satisfying:\\
(i) $\bot\sim \bot$ and (ii)  for increasing chains $(d_i)_{i\in \nat}$ and $(d_i')_{i\in \nat}$,\\ $\forall i\in \nat$ $d_i\sim d'_i\Longrightarrow \sup_{i\in \nat} d_i\sim \sup_{i\in \nat} d'_i$. 
 Equivalence of random variables satisfies these conditions. 
For $f,g:D\to E$, define $f\sim g$ if $d\sim d'\Longrightarrow f(d)\sim g(d')$
Morphism $[f]:(D,\sim)\to (E,\sim)$ as PER class of maps
$((D,\sim_D)\to (E,\sim_E))$ is defined as PER of maps. Similarly, $(D,\sim_D)\times (E,\sim_E)$. 
The category {\bf PER} of Scott domains with PER is a CCC.

 The random variable functor $\mathcal{R}$ is defined on ${\bf PER}$ category by:
{ On objects:} $R(D,\sim_D)=((\Omega\to D),\sim_{\Omega\to D})$ with \\
$r\sim_{{\Omega\to D}}s$ if (i) $\forall \bfomega\in \Omega.\,r(\bfomega)\sim_D r(\bfomega)\,\&\,s(\bfomega)\sim_D s(\bfomega)$, and (ii) $\forall O\in \Upsilon_D.\,\nu(r^{-1}(O))=\nu(s^{-1}(O))$. 
{On morphisms:} $[f]:(D,\sim_D)\to (E,\sim_E)$ with 
$R[f]:((\Omega\to D),\sim_{\mathcal{R}(D,\sim)})\to ((\Omega\to E),\sim_{\mathcal{R}(E,\sim)})$ as
$R[f]=[\lambda r.\,f\circ r]$.

Let $h:(\Omega,\nu)\to (\Omega^2,\nu\times \nu)$ be a measure-preserving continuous bijection on a set of full $\nu$-measure on $\Omega$.   
For example:
 For $\bfomega\in \Omega = \{0,1\}^\nat$, define 
  $h(\bfomega) = (\bfomega^{\operatorname{e}}, \bfomega^{\operatorname{o}})$, where $\bfomega^{\operatorname{e}}$ (respectively, $\bfomega^{\operatorname{o}})$ is the sequence of values in even (respectively, odd) positions in  $\bfomega$, i.e., for $i\in \nat$$(\bfomega^{\operatorname{e}})_i=\bfomega_{2i}\qquad$ and $\qquad(\bfomega^{\operatorname{o}})_i=\bfomega_{2i+1}$. Then, $h$ is a homeomorphism with inverse $k:=h^{-1}:\Omega^2\to \Omega$: $k(\bfomega,\bfomega')=\bfomega''$, where $\bfomega''_{2i}=\bfomega_i$ and           $\bfomega''_{2i+1}=\bfomega'_i$ for $i\in \nat$. For $\Omega=[0,1]$, let $h:[0,1]\to [0,1]^2$ be the Hilbert curve. 

 We now have a monad $\mathcal{R}: {\bf PER}\to {\bf PER}$ with natural transformations 
{\bf Unit:} $\eta_D: D\to (\Omega\to D)$ with $\eta_D(d)(\bfomega)=d$.
 {\bf Flattening:} $\mu_D:(\Omega\to (\Omega\to D))\to (\Omega\to D)$ with 
$\mu_D(r)(\bfomega)=r(h_1(\bfomega))(h_2(\bfomega))$. }

\section{Random variable \& valuation monads on {\bf PER}}
We here first recall the definition of the category of PER domains from~\cite{DiGianantonio2024}; then define a parametrised family of random variable monads on this categroy and finally extend the monad introduced in~\cite{kirch1993bereiche} to the catgeory of PER domains.

A {\em PER domain} $(D,\sim)$ is a bounded complete, i.e., Scott domain $D$ with a symmetric and transitive {\em logical} relation $\sim$, i.e.,  satisfying:\\
(i) $\bot\sim \bot$ and (ii)  for increasing chains $(d_i)_{i\in \nat}$ and $(d_i')_{i\in \nat}$,\\ $\forall i\in \nat$ $d_i\sim d'_i\Longrightarrow \sup_{i\in \nat} d_i\sim \sup_{i\in \nat} d'_i$. 

For Scott continuous functions $f,g: D \to E$, we define $f\sim g$ if for all $d,d' \in D$, $d\sim d'\Longrightarrow f(d)\sim f(d')$ and $f(d)\sim g(d)$.
A morphism $[f]:(D,\sim_D)\to (E,\sim_E)$ in the category of PER domains is a partial equivalence class of Scott continuous functions.
The product of PER domains $(D,\sim_D)\times (E,\sim_E)$ is the product of the underlying domains equipped with the PER defined by $(d,e)\sim (d',e')$ if and only if $d\sim_D d'$ and $e\sim_E e'$.
Finally, the function space $(D,\sim_D)\to (E,\sim_E)$ consists of the Scott continuous functions between the underlying domains with the PER defined as above.
The category {\bf PER} of PER domains forms a Cartesian closed category (CCC).
Any domain $D$ is a PER domain by taking identity as a partial equivalence relation. The mapping $D \mapsto (D, =)$ defines a full and faithful functor from the category of domains to the category of PER domains.

An element $d\in (D,\sim)$ is {\em self-related} if $d\sim d$. A Scott open subset $O\subseteq D$ in a PER domain $\langle D,\sim_D\rangle$ is {\em R-open} if it is closed under $\sim_D$, i.e., $d_1 \in O$ and $d_1\sim d_2$ implies $d_2\in O$. The R-open sets  form a topology called R-topology denoted by $\Upsilon_D$.

For the sample space $\Omega$, we take a countably based, locally compact Hausdorff topological space $\Omega$, for which $\mathcal{O}(\Omega)$ is equipped with an effective structure,  with a  computable continuous valuation $\nu:\mathcal{O}(\Omega)\to \reale$ on it. If $\nu$ is $\sigma$-finite then it always extends to a Borel measure on $\Omega$~\cite[Theorem 5.3]{KL05}.

The random variable functor $\mathcal{R}$ is defined on ${\bf PER}$ category by:
{ On objects:} $\mathcal{R}(D,\sim_D)=((\Omega\to D),\sim_{\Omega\to D})$ with \\
$r\sim_{{\Omega\to D}}s$ if (i) $\forall \bfomega\in \Omega.\,r(\bfomega)\sim_D r(\bfomega)\,\&\,s(\bfomega)\sim_D s(\bfomega)$, and (ii) $\forall O\in \Upsilon_D.\,\nu(r^{-1}(O))=\nu(s^{-1}(O))$. 
And {on morphisms:} $[f]:(D,\sim_D)\to (E,\sim_E)$ with the map 
$\mathcal{R}[f]:((\Omega\to D),\sim_{\mathcal{R}(D,\sim)})\to ((\Omega\to E),\sim_{\mathcal{R}(E,\sim)})$ as
$\mathcal{R}[f]=[\lambda r.\,f\circ r]$.

  Let $h:(\Omega,\nu)\to (\Omega^2,\nu\times \nu)$ be a continuous and measure-preserving map that is a bijection on a set of full $\nu$-measure on $\Omega$. It is shown in ~\cite{DiGianantonio2024} that such $h$ makes ${\mathcal R}$ into a monad on the category {\bf PER}:
 $\mathcal{R}: {\bf PER}\to {\bf PER}$ with natural transformations 
{\bf Unit:} $\eta_D: D\to (\Omega\to D)$ with $\eta_D(d)(\bfomega)=d$.
 {\bf Flattening:} $\mu_D:(\Omega\to (\Omega\to D))\to (\Omega\to D)$ with 
$\mu_D(r)(\bfomega)=r(h_1(\bfomega))(h_2(\bfomega))$. 
  
In~\cite{DiGianantonio2024}, two standard probability spaces and monads were considered for $(\Omega,\nu)$ as follows: (i) Take $\nu$ to be the probability measure on $\Omega=\{0,1\}^\nat$ with the measure of any cylinder set of length $n$ to be $1/2^n$.
 For $\bfomega\in \{0,1\}^\nat$, define 
  $h(\bfomega) = (\bfomega^{\operatorname{e}}, \bfomega^{\operatorname{o}})$, where $\bfomega^{\operatorname{e}}$ (respectively, $\bfomega^{\operatorname{o}})$ is the sequence of values in even (respectively, odd) positions in  $\bfomega$, i.e., for $i\in \nat$, we have: $(\bfomega^{\operatorname{e}})_i=\bfomega_{2i}\qquad$ and $\qquad(\bfomega^{\operatorname{o}})_i=\bfomega_{2i+1}$. Then, $h$ is a homeomorphism with inverse $k:=h^{-1}:\Omega^2\to \Omega$ given by $k(\bfomega,\bfomega')=\bfomega''$, where $\bfomega''_{2i}=\bfomega_i$ and           $\bfomega''_{2i+1}=\bfomega'_i$ for $i\in \nat$. (ii) For $\Omega=[0,1]$, let $h:[0,1]\to [0,1]^2$ be the Hilbert curve with the Lebesgue measure on $[0,1]$.
  
\subsection{A parametrised family of random variable monads on {\bf PER}}

We now introduce a parametrised family of monads using Cantor sets.  Let $\Sigma_n:=\{0,1,\ldots,n-1\}$ and let $X$ be a compact metric space and consider an iterated function system (IFS) consisting of $n$ contracting maps $f_i:X\to X$, for $i\in \Sigma_n$, each given a probablity weight $p_i>0$ with $\sum_{i=0}^{n-1}p_i=1$, such that $f_i[X]\cap f_j[X]=\emptyset$ for $i\neq j$. Then the unique attractor $C\subset X$  of the IFS is a Cantor set, i.e., a non-empty, perfect, compact, totally disconnected set and satisfies $C= \bigcup_{i\in \Sigma_n}f_i[C]$; in addition $C$ is the support of the unique invariant measure $\nu$ of the IFS satisfying~\cite{hutchinson1981fractals,barnsley2014fractals}:
\[\nu=\sum_{i\in\Sigma_n}p_i\nu\circ f_i^{-1}\]The map $s: \Sigma_n^\nat\to C$ given by \[s(m)=\bigcap_{n\in\nat} f_{m_0}\circ f_{m_1}\circ f_{m_2}\circ \ldots\circ f_{m_n}[I],\] gives a homeomorphism of $C$ with $\Sigma_n^\nat$ with inverse $t:=s^{-1}$. A domain-theoretic representation of generating points of $C$ uses the Plotkin power domain of the upper space of $X$~\cite{edalat1996power}. The invariant measure $\nu_C$ can also be obtained domain-theoretically~\cite{edalat1995dynamical,edalat1996power} and we have $\nu_C=\nu\circ s^{-1}$, where $\nu$ is the product measure on $\Sigma^\nat$ induced from the finite probability space $\Sigma_n$ with $p_i$ the probability of $i\in \Sigma_n$.  

Recall that in a compact topological group, the Haar measure is the unique non-trivial left-traslation-invariant measure (up to a positive constant) with respect to the group operation~\cite{alfsen1963simplified}. When $p_i=1/n$ for all $i\in \Sigma_n$, then $\nu$ is the unique Haar probability measure on $C$ with addition defined component-wise as $x+y=(x_0\oplus y_0)(x_1\oplus y_1)\ldots$ where addition mod $n$ in $\Sigma_n$ is denoted by $\oplus$. 

Observe also that $\Sigma_n^\nat$ is a compact metric space with the metric $d$ defined as $d(x,y)=0$ if $x=y$ and otherwise
$d(x,y)=1/2^m$, where $m\in \nat$ is the first natural number such that $x_m\neq y_m$. Moreover, the maps $f_i:\Sigma_n^\nat\to \Sigma_n^\nat$, for $i\in \Sigma_n$, defined by $f_i(x)=i:x$ are contracting and $\Sigma_n^\nat$ is itself the invariant attractor of the $n$ contracting maps $f_i$ for $i\in \Sigma_n$. For the Cantor set $\Sigma_n^\nat$, as in the case of $n=2$ above, put $h:\Sigma_n^\nat\to (\Sigma_n^\nat)^2$ with $h(\bfomega)=(\bfomega^{\operatorname{e}},\bfomega^{\operatorname{o}})$, which gives a monad.
We can now define, for the attractor $C$ of the IFS with $n$ maps, $h_C:C\to C^2$ by $h_C=\langle s\circ\pi_2,s\circ \pi_1\rangle \circ h\circ t$, which gives a monad for the probability space $(C,\nu_C)$. Note that for $X=[0,1]$ and $n=2$ with $f_0:x\mapsto x/3$ and $f_1:x\mapsto x/3+2/3$, we obtain the canonical Cantor set.

\subsection{Valuation monad of {\bf PER}}
  
We now introduce the continuous valuation PER domain of a PER domain. First, we need to define the PER domain of the lattice of open sets of a PER domain.

\begin{definition}
Given a PER domain $(D, \sim_D)$, the lattice  $\mathcal{O}(D)$  of its open sets is a PER domain with the partial equivalence relation defined by $O \sim_{\mathcal{O}(D)} O'$, if for any pair of equivalent elements, $d \sim d'$  we have $d \in O$ iff $d' \in O'$.

Given a PER domain $(D, \sim_D)$, define its valuation power domain $\mathcal{V}(D, \sim_D)$ as the PER domain $(\mathcal{V}(D), \sim_{\mathcal{V}(D)})$ where $\mathcal{V}(D)$ is the domain of continuous valuations on $D$, while $\sigma \sim_{\mathcal{V}(D)} \sigma'$  iff for every pair of equivalent open sets $O \sim_{\mathcal{O}(D)} O'$ we have that $\sigma (O) = \sigma' (O')$.
\end{definition}

It is immediate to check that $\sim_{\mathcal{O}(D)}$ and $\sim_{\mathcal{V}(D)}$ are closed under lub. Note that the R-topology consists of the self-related open sets in $(\mathcal{O}(D), \sim_{\mathcal{O}(D)})$.

On morphisms, given $[f] : (D, \sim_D) \to (E, \sim_E)$, we define
$ \mathcal{V}( [f] )= [\lambda \sigma , O \,.\, \sigma (f^{-1}(O))]$. It is straightforward to check that if $f \sim f'$, $\sigma \sim \sigma'$, and $O \sim O'$ then $\sigma (f^{-1}(O)) = \sigma' (f'^{-1}(O'))$ and therefore the action on morphisms is well-defined.

By repeating the standard construction of the continuous valuation monad on domains as in~\cite{kirch1993bereiche}, $\mathcal{V}$, itself a variation of Giry's monad, is a monad on PER domains with $\epsilon'_{(D,\sim)}: (D,\sim) \to \mathcal{V}(D,\sim)$ given by $\epsilon'_{(D,\sim)}=[ \lambda d . \delta(d)]$ and flattening operation $\mu'_{(D,\sim)} :\mathcal{V}(\mathcal{V}(D,\sim))\to \mathcal{V}(D,\sim)$ given by $\mu'_{(D,\sim)} = [ \lambda \kappa , O \,.\, \int_{\sigma \in (\mathcal{V}(D), \sim)} \sigma (O)\,\,d\kappa ]$. 
More explicitly by $\int_{\sigma \in (\mathcal{V}(D), \sim)} \sigma (O)\,\,d\kappa$ we denote the integration along the measure uniquely induced by the continuous valuation $\kappa$. 

It is immediate to check that if $d \sim_D d'$ then $\delta (d )\sim_{\mathcal{V}( D)} \delta (d')$ and therefore $\epsilon'$ is a well-defined equivalence class. 

On the other hand, if $O \sim O'$, $\kappa \sim \kappa'$, we observe that for any real number $a$ the subsets $W_a, W'_a$ of $\mathcal{V}(D, \sim)$ defined by
$W_a = \{ \sigma \mid \sigma(O) > a \}$, $W_a' = \{ \sigma \mid \sigma(O') > a \}$ are equivalent open sets ($W_a \sim W_a'$). it follows that $\kappa (W_a) = \kappa'(W'_a)$. 
By Choquet's formula for conversion of the Lebesgue integral to a Riemann integral~\cite{mesiar2010choquet}, we then have:
\begin{multline*}
    \mu_{(D,\sim)}'(\kappa)(O)=\int_{x\in\mathcal{V}(D,\sim)}x(O)\,\,d\kappa=\int_0^1 \kappa(W_a)\,da \\
    =\int_0^1 \kappa'(W'_a)\,da=\mu_{(D,\sim)}'(\kappa')(O')
\end{multline*}
which implies $\mu'$ is well-defined. The commutativity of the monadic diagrams for $\epsilon'$ and $\nu'$ follow directly from those of $\mathcal{V}$ in~\cite{kirch1993bereiche}. 

\begin{definition}
    The map $T_{{\mathcal R}(D)}:{\mathcal R}(D)\to \mathcal {V}(D)$ is defined as the push-forwrd $r\mapsto \nu\circ r^{-1}$. 
\remove{Two random variables $r,s\in (\Omega\to D)$ are {\em equivalent}, written $r\sim s$, if $\nu\circ r^{-1}=\nu\circ s^{-1}$.}
\end{definition}
  We often write $T$ for $T_{{\mathcal R}(D)}$ since its type is clear from the context. The map $T$ extends that in~\cite{DiGianantonio2024} from random variables defined on probability spaces to random variables defined on sample sapces equipped with arbitrary (computable) measures. Since $T$ maps step functions to simple valuations, it is a computable map with respect to effective structures on $\mathcal{R}(D)$ and $\mathcal{V}(D)$. If $\nu$ is an unbounded valuation then $T$ is a surjection. 

  If we restrict to continuous normalized (probability) valuations, we write $\mathcal{V}(D,\sim)$ as $\mathcal{P}(D,\sim)$. The monad $\mathcal{P}$ then extends the normalised probabilisitc power domain monad to ${\bf PER}$.

\subsection{Natural transformation}

Since $\mathcal{R}$ and $\mathcal{V}$ are monads on different categories---namely, {\bf PER}, i.e., PER bounded complete domains and {\bf PER}-$\D$, i.e., PER continuous domains---it is not possible to define a natural transformation between them. However, using the forgetful functor $\mathcal{U}$ from {\bf PER}
 to {\bf PER}-$\D$ we can take the two compositions $\mathcal{U} \circ \mathcal{R}$ and $\mathcal{V} \circ \mathcal{U}$, which are both functors from {\bf PER} to {\bf PER}-$\D$ and note that $\phi : \mathcal{U} \circ \mathcal{R} \to \mathcal{V} \circ \mathcal{U}$ defined by  $\phi_{(D,\sim)} = T_{\mathcal{U}(\mathcal{R}(D,\sim))\to \mathcal{V}(\mathcal{U}(D,\sim))} = \lambda r , O \,.\, \nu( r^{-1} O)$ is a natural transformation as it can be easily checked. 
 
 The map {$T_{\mathcal{R}^2(D,\sim)\to \mathcal{V}^2(D,\sim)}$} acts on simple random variables to give simple valuations in $\mathcal{V}(\mathcal{V}(D,\sim))$:
\begin{equation}\label{double-random}
   \sup_{i\in I}(\sup_{j\in J_i}d_{ij}\chi_{V_{ij}})\chi_{U_i}\longmapsto \sum_{i\in I}\nu(U_i)\delta\left(\sum_{j\in J
    _i}\nu(V_{ij})\delta(d_{ij})\right),\end{equation}
where we have assumed the basic open sets $U_i$ and $V_{ij}$ are disjoint for $i\in I$ and $j\in J_i$.

\begin{lemma}\label{double-random-d} Let $D_0:=\operatorname{Im}(r)$, where $r=\sup_{i\in I}\chi_{U_i}(\sup_{j\in J_i}d_{ij}\chi_{V_{ij}})\in \mathcal {R}^2(D,\sim)$, then
    \[T_{\mathcal{R}(D,\sim)\to \mathcal{V}(D,\sim)}\circ \mu_{(D,\sim)}:\]\[r\mapsto\sum_{d\in D_0}\sum_{d_{ij}=d}\nu(V_{ij})\nu(U_i)\delta(d):\mathcal{R}^2 (D,\sim)\to \mathcal{V}(D,\sim)\]
\end{lemma}
\begin{proof}
   Note $\mu_D(r)=\sup_{i\in I}\chi_{h_1^{-1}(U_i)}(\sup_{j\in J_i}d_{ij}\chi_{h_2^{-1}(V_{ij})})$. Since $h_1$ and $h_2$ are independent and measure preserving, the result follows. 
\end{proof}
 
 We can now show that $\phi$ preserves the monadic structures of the monads $\mathcal{R}$ and $\mathcal{P}$ in the following sense.
\begin{proposition}
   Consider $\epsilon$ and $\mu$, respectively $\epsilon'$ and $\mu'$, the unit and flattening operations of $\mathcal{R}$, respectively $\mathcal{P}$. Then, for $(D,\sim)\in \text{PER}$, the following two diagrams commute:
\begin{center}

\begin{tikzcd}[row sep=huge, column sep=huge]
\mathcal{U} (D,\sim)\arrow[rd, "\epsilon'_{\mathcal{U}( D)}" ] \arrow[r, "\mathcal{U} (\epsilon_D)"] & \mathcal{U}(\mathcal{R}(D,\sim))
    \arrow[d,"\phi_{D}"]& \\
    &\mathcal{V}(\mathcal{U}( D,\sim))
\end{tikzcd}

\bigskip

\begin{tikzcd}[row sep=huge, column sep=large]
\mathcal{U} (\mathcal{R}^2 (D,\sim))\arrow{r}{\mathcal{U} (\mu_{(D,\sim)})} \arrow[swap]{d}{\mathcal{V}(\phi_{(D,\sim)}\circ \phi_{\mathcal{R}( D,\sim)})} 
&  \mathcal{U} (\mathcal{R}(D,\sim)) \arrow{d}{\phi_{(D,\sim)}} \\
 \mathcal{V}^2(\mathcal{U}( D,\sim))\arrow{r}{\mu'_{\mathcal{U }(D,\sim)}}&\mathcal{V}(\mathcal{U}( D,\sim))
\end{tikzcd} 
\end{center}
 
\end{proposition}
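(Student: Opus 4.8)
The plan is to verify each diagram on a generating subclass and then extend by Scott continuity. Every map occurring in the two diagrams is Scott continuous: $\phi_{(D,\sim)}=T_{\mathcal{R}(D)}$ is continuous, the functor $\mathcal{V}$ sends continuous maps to continuous maps, and the units $\epsilon,\epsilon'$ and flattenings $\mu,\mu'$ are continuous as parts of the respective monad structures. Hence each composite is continuous, and it suffices to check equality of the two composites on a basis of the domain at the upper-left corner. For the unit triangle I argue by a direct computation, while for the multiplication square I reduce to simple (step-function) random variables in $\mathcal{R}^2(D,\sim)$, which form a basis, and compare the two sides using Equation~\ref{double-random} and Lemma~\ref{double-random-d}.

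For the unit triangle, recall that $\epsilon_D=[\lambda d.\,\eta_D(d)]$ with $\eta_D(d)=\lambda\bfomega.\,d$ the constant random variable at $d$, and that $\epsilon'_{(D,\sim)}=[\lambda d.\,\delta(d)]$. For a self-related $d$ and an open $O\subseteq D$, the preimage $(\lambda\bfomega.\,d)^{-1}(O)$ equals $\Omega$ if $d\in O$ and $\emptyset$ otherwise, so
\[\phi_D(\eta_D(d))(O)=\nu\bigl((\lambda\bfomega.\,d)^{-1}(O)\bigr)=\nu(\Omega)\cdot[d\in O].\]
Since we work with the probability monad $\mathcal{P}$, we have $\nu(\Omega)=1$, whence $\phi_D(\eta_D(d))=\delta(d)$ and therefore $\phi_D\circ\mathcal{U}(\epsilon_D)=\epsilon'_{\mathcal{U}(D)}$, i.e. the triangle commutes.

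For the multiplication square, fix a simple element $r=\sup_{i\in I}\chi_{U_i}\bigl(\sup_{j\in J_i}d_{ij}\chi_{V_{ij}}\bigr)\in\mathcal{R}^2(D,\sim)$ in its disjoint step-function normal form, so that the basic opens $U_i$ (and, for each $i$, the $V_{ij}$) are pairwise disjoint. Along the top-then-right path, Lemma~\ref{double-random-d} gives
\[\phi_{(D,\sim)}\bigl(\mu_{(D,\sim)}(r)\bigr)=\sum_{d\in D_0}\ \sum_{d_{ij}=d}\nu(V_{ij})\,\nu(U_i)\,\delta(d),\]
where $D_0=\operatorname{Im}(r)$. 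Along the left-then-bottom path, Equation~\ref{double-random} computes the left vertical map (the pushforward $T_{\mathcal{R}^2\to\mathcal{V}^2}$) as
\[r\longmapsto\Xi:=\sum_{i\in I}\nu(U_i)\,\delta(\beta_i),\qquad\beta_i:=\sum_{j\in J_i}\nu(V_{ij})\,\delta(d_{ij}).\]
Applying the valuation flattening $\mu'_{(D,\sim)}=[\lambda\kappa,O.\,\int_{\sigma}\sigma(O)\,d\kappa]$ to the simple valuation $\Xi$, the measure induced by $\Xi$ is the finite combination of point masses at the $\beta_i$, so the integral collapses to a finite weighted sum and
\[\mu'_{\mathcal{U}(D,\sim)}(\Xi)=\sum_{i\in I}\nu(U_i)\,\beta_i=\sum_{i\in I}\sum_{j\in J_i}\nu(U_i)\,\nu(V_{ij})\,\delta(d_{ij}).\]
Reindexing by grouping the pairs $(i,j)$ according to the value $d_{ij}\in D_0$ shows this equals the expression from the top-then-right path, so the square commutes on simple elements and hence, by continuity, everywhere.

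The main obstacle is the single computation that $\mu'$ turns the simple valuation $\Xi$ into the finite weighted sum $\sum_{i\in I}\nu(U_i)\,\beta_i$: this is precisely the point where the Lebesgue/Choquet integral $\int_{\sigma}\sigma(O)\,d\Xi$ of the evaluation functional $\sigma\mapsto\sigma(O)$ is evaluated for a simple $\kappa=\Xi$, and one must confirm that integrating this continuous functional against a finite sum of point valuations yields $\sum_{i\in I}\nu(U_i)\,\beta_i(O)$. The remaining care is bookkeeping: ensuring the disjointness of the $U_i$ and $V_{ij}$ in the normal form so that both Equation~\ref{double-random} and Lemma~\ref{double-random-d} apply, and observing that $\phi$, $\mu$, and $\mu'$ have already been shown well-defined on PER classes, so no further verification of invariance under $\sim$ is required.
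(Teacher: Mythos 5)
Your proof is correct and takes essentially the same route as the paper's: reduction to simple random variables in $\mathcal{R}^2(D,\sim)$ by Scott continuity, Lemma~\ref{double-random-d} for the top-then-right path, and Equation~\ref{double-random} followed by evaluating $\mu'$ on the resulting simple valuation for the left-then-bottom path. The only addition is that you spell out the unit triangle explicitly (including the normalisation $\nu(\Omega)=1$), which the paper dismisses as straightforward.
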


\begin{proof}
 The commutativity of the first diagram is straightforward. To show that the second diagram commutes, it is sufficient, by Scott continuity, to prove it for a simple random variable  
 \[r=\sup_{i\in I}(\sup_{j\in J_i}d_{ij}\chi_{V_{ij}})\chi_{U_i}\in \mathcal{R}^2(D,\sim)\]
 By Lemma~\ref{double-random-d}, we have 
 \[\phi_{(D,\sim)} \circ \mathcal{U} (\mu_{(D,\sim)}(r))=\sum_{d\in D_0}\sum_{d_{ij}=d}\nu(U_i)\nu(V_{ij})\delta(d)\]
 By Equation~(\ref{double-random}), we let:
 \begin{multline*}
\kappa:= (\mathcal{V}( \phi_{(D,\sim)})\circ \phi_{\mathcal{R}(D,\sim)}) (r) \\
=\sum_{i\in I}\nu(U_i)\delta\left(\sum_{j\in J
    _i}\nu(V_{ij})\delta(d_{ij})\right)
 \end{multline*}

    If $O\in \mathcal{O}(D,\sim)$, then we have 

\begin{multline*}
        \mu_{(D,\sim)}'(\kappa)(O)=\int_{x\in \mathcal{V}(D,\sim)}x(O)\,\,d\kappa \\
        =\sum_{i\in I,j\in J_i}\sum_{d_{ij}\in O}\nu(V_{ij})\nu(U_i) 
\end{multline*}
    
It follows, as required, that\\ $\mu_{(D,\sim)}'(\kappa)=\sum_{d\in D_0}\sum_{d_{ij}=d}\nu(V_{ij})\nu(U_i)\delta(d)$.
\end{proof}
 
 \remove{The map {$T_{\mathcal{R}^2(D,\sim)\to \mathcal{V}^2(D,\sim)}$} acts on simple random variables to give simple valuations in $\mathcal{V}(\mathcal{V}(D,\sim))$:
\begin{equation}\label{double-random}
   \sup_{i\in I}(\sup_{j\in J_i}d_{ij}\chi_{V_{ij}})\chi_{U_i}\longmapsto \sum_{i\in I}\nu(U_i)\delta\left(\sum_{j\in J
    _i}\nu(V_{ij})\delta(d_{ij})\right),\end{equation}
where we have assumed the basic open sets $U_i$ and $V_{ij}$ are disjoint for $i\in I$ and $j\in J_i$.

\begin{lemma}\label{double-random-d} Let $D_0:=\operatorname{Im}(r)$, where $r=\sup_{i\in I}\chi_{U_i}(\sup_{j\in J_i}d_{ij}\chi_{V_{ij}})\in \mathcal {R}^2(D,\sim)$, then
    \[T_{\mathcal{R}(D,\sim)\to \mathcal{V}(D,\sim)}\circ \mu_{(D,\sim)}:\]\[r\mapsto\sum_{d\in D_0}\sum_{d_{ij}=d}\nu(V_{ij})\nu(U_i)\delta(d):\mathcal{R}^2 (D,\sim)\to \mathcal{V}(D,\sim)\]
\end{lemma}
\begin{proof}
   Note $\mu_D(r)=\sup_{i\in I}\chi_{h_1^{-1}(U_i)}(\sup_{j\in J_i}d_{ij}\chi_{h_2^{-1}(V_{ij})})$. Since $h_1$ and $h_2$ are independent and measure preserving, the result follows. 
\end{proof}
 
 We can now show that $\phi$ preserves the monadic structures of the monads $\mathcal{R}$ and $\mathcal{P}$ in the following sense.
\begin{proposition}
   Consider $\epsilon$ and $\mu$, respectively $\epsilon'$ and $\mu'$, the unit and flattening operations of $\mathcal{R}$, respectively $\mathcal{P}$. Then, for $(D,\sim)\in \text{PER-}$, the following two diagrams commute:
\begin{center}

\begin{tikzcd}[row sep=huge, column sep=huge]
\mathcal{U} (D,\sim)\arrow[rd, "\epsilon'_{\mathcal{U}( D)}" ] \arrow[r, "\mathcal{U} (\epsilon_D)"] & \mathcal{U}(\mathcal{R}(D,\sim))
    \arrow[d,"\phi_{D}"]& \\
    &\mathcal{V}(\mathcal{U}( D,\sim))
\end{tikzcd}

\bigskip

\begin{tikzcd}[row sep=huge, column sep=large]
\mathcal{U} (\mathcal{R}^2 (D,\sim))\arrow{r}{\mathcal{U} (\mu_{(D,\sim)})} \arrow[swap]{d}{\mathcal{V}(\phi_{(D,\sim)}\circ \phi_{\mathcal{R}( D,\sim)})} 
&  \mathcal{U} (\mathcal{R}(D,\sim)) \arrow{d}{\phi_{(D,\sim)}} \\
 \mathcal{V}^2(\mathcal{U}( D,\sim))\arrow{r}{\mu'_{\mathcal{U }(D,\sim)}}&\mathcal{V}(\mathcal{U}( D,\sim))
\end{tikzcd} 
\end{center}
 
\end{proposition}

\begin{proof}
 The commutativity of the first diagram is straightforward. To show that the second diagram commutes, it is sufficient, by Scott continuity, to prove it for a simple random variable  
 \[r=\sup_{i\in I}(\sup_{j\in J_i}d_{ij}\chi_{V_{ij}})\chi_{U_i}\in \mathcal{R}^2(D,\sim)\]
 By Lemma~\ref{double-random-d}, we have 
 \[\phi_{(D,\sim)} \circ \mathcal{U} (\mu_{(D,\sim)}(r))=\sum_{d\in D_0}\sum_{d_{ij}=d}\nu(U_i)\nu(V_{ij})\delta(d)\]
 By Equation~(\ref{double-random}), we let:
 \[\kappa:= (\mathcal{V}( \phi_{(D,\sim)})\circ \phi_{\mathcal{R}(D,\sim)}) (r)=\sum_{i\in I}\nu(U_i)\delta\left(\sum_{j\in J
    _i}\nu(V_{ij})\delta(d_{ij})\right)\]
    If $O\in \mathcal{O}(D,\sim)$, then we have 
    \[\mu_{(D,\sim)}'(\kappa)(O)=\int_{x\in \mathcal{V}(D,\sim)}x(O)\,\,d\kappa=\sum_{i\in I,j\in J_i}\sum_{d_{ij}\in O}\nu(V_{ij})\nu(U_i)\]
    
It follows, as required, that\\ $\mu_{(D,\sim)}'(\kappa)=\sum_{d\in D_0}\sum_{d_{ij}=d}\nu(V_{ij})\nu(U_i)\delta(d)$.
\end{proof}}

\section{Conditional probability }
 Using the sigma algebra of the unit interval as events, Ackerman et al.~\cite{ackerman2019} have shown that two computable random variables on the unit interval can have a conditional distribution that encodes the halting problem, and hence non-computable. In this section, we introduce a domain-theoretic framework for conditionaly probability, based on a notion of open sets as observable property,  and show that it is computable.

 \subsection{Probability of an event}

 Following the principles of observational or semi-decidable logic~\cite{smyth1983power,abramsky1991domain,vickers1989topology}, we take only the Scott open sets of a domain $D\in \BC$, rather than all its Borel subsets, as obervable properties or events. This is the approach to probabilistic computation used in~\cite{DiGianantonio2024},
 
 However, if we naively take $\mathcal{O}(D)$ as our domain of events, we cannot develop a computable notion of conditional probability as we now see. Suppose $\sigma\in \mathcal{P}(D,=)$ is a continuous probability valuation and 
 $O,U\in \mathcal{O}(D)$ are two events. Then, we have the classical conditional probability $\operatorname{Pr}(U|O)=\operatorname{Pr}(U\cap O)/\operatorname{Pr}(O)=\sigma(U\cap O)/\sigma(O)$ if  $\operatorname{Pr}(O)=\sigma(O)\neq 0$ and $0$ otherwise. Then, $\operatorname{Pr}(\cdot\,|O):\mathcal{O}(D)\to ([0,1],\leq)$ defines a continuous valuation. However, $\operatorname{Pr}(U|O)$ is not monotonic in its second argument, which implies we cannot obtain a Scott continuous map for the conditional property. 

 The problem can be resolved if we include in our formulation, not just the obervable event $O$ but also its opposite event. In classical measure theory the opposite event is simply the complement of the event, but here the complement of $O$ is not open and thus we need to take an open set $O'$ disjoint from $o$ as is opposite. we are then led to the domain of disjoint pairs of opens $(O,O')$ of $D$, ordered by the subset inclusion of open sets. The domain of disjoint pairs of open subsets of a topological space was called the solid domain of the topological space and first used for developing a computable model of objects in solid modelling and computational geometry~\cite{edalat2002foundation}. 
 
 This domain gives a proper framework for observable events, including an interval notion of probability. 
In this paper, however, we take a simpler approach and retain the notion of probability as a real number rather than a real interval. We can then rewrite the conditional probability, for $\sigma(O)\neq 0$, as follow:
\[\operatorname{Pr}(U|(O,O'))=\sigma(U\cap O)/(1-\sigma(O'))\]
which is now monotonically incraasing in $(O,O')$ as required providing a Scott continuous and computable map in the second component; see Theorem~\ref{prob-comp}.

\remove{Assuming we can the effective structures stipulated above, we note the following. 
 
 Suppose $D\in \D$ is effectively given with respect to an effective enumeration of the basis $B_0=\{b_0,b_1,\cdots\}\subseteq D$. If $D\in \BC$, we assume further that $B_0$ is closed under existing binary lubs and it then follows that the basis of $\mathcal{O}(D)$ is closed under binary intersections and the binary lub operation on $\mathcal{O}(D)$ is also decidable. 
 
 The set of open sets of the form $\dua b_i$ provides a basis of the continuous lattice $\mathcal{O}(D)$. Then $b_i\ll_D b_j$ implies $\dua b_j\ll_{\mathcal{O}(D)} \dua b_i$. To obtain an effective structure on $\mathcal{O}({D})$, the relation $\dua b_j\ll\dua b_i$ needs to be decidable. For $\realDom$, this is easy to check. We will now show that this is also the case for some basic types constructed from $\realDom$, which can be given an effective structure by using an effective enumeration of dyadic intervals as a basis of $\realDom$.

 \begin{proposition}
 \begin{itemize}
   \item[(i)] If $(D,\sim)$ has an effective structure, so have $\mathcal{P}(D,\sim)$ and $\mathcal{R}( D,\sim)$.
   \item[(ii)] The effective structure on $\realDom$, using dyadic compact intervals as basis, induces an effective structure on $\mathcal{O}(\realDom)$. 
 \end{itemize}
   
 \end{proposition}
 \begin{proof}
   (i) By~\cite[Proposition 3.5]{edalat1995domain} the effective structure on $D$, induces an effective structure on $\mathcal{P}(D)$.  

   (ii)
 \end{proof}

 Assume now an effective structure on $\mathcal{O}(D)$.

Suppose we have a computable probability distribution $\sigma\in PD$ and a computable open set $O\in \mathcal{O}(D)$, considered as an event. Then it follows that the probability $\sigma(O)$ is a left-computable real, but not necessarily a computable real number. 

We will now see that for conditional probability distribution, the situation is even more problematic if we adopt a rather naive approach, closely following classical probability theory. We will see how it is inadequate in that it cannot lead to a computable framework. We then introduce in the next subsection a more appropriate framework in line with observable or geometric logic, which provides a computable setting for Bayesian statistics. }

\subsection{Domain of disjoint events}
 Since our main purpose here is focused on probability theory, we will use the phrase {\em event domain} rather than the original name, the solid domain.
\begin{definition}
For a PER domain $(D,\sim)$, its {\em event domain}, denoted $\mathcal{E}(D,\sim)$, is the dcpo of pairs of elements $(O_1, O_2)\in (\mathcal{O}(D))^2$, with $O_1\cap O_2=\emptyset$ called {\em disjoint event-pairs}, partially ordered componentwise by the lattice order with $(O_1,O_2)\sim (O_1',O_2')$ if $O_1\sim O_1'$ and $O_2\sim O_2'$.

\end{definition} 
\begin{proposition}
  \begin{itemize}
  \item[(i)] For a PER domain $(D,\sim)$, the event-domain $\mathcal{E}(D,\sim)$ is a PER domain. 
     \item[(ii)] We have: $(O_1,O_2)\in \mathcal{E}(D,\sim)$ is maximal iff 
     $O_1=(O_2^c)^\circ$ and $O_2=(O_1^c)^\circ$.
    
   \end{itemize} 
\end{proposition}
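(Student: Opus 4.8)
The plan is to separate the two parts cleanly: for (i) I would first establish the underlying domain structure of $\mathcal{E}(D)$ and then verify the three logical-relation axioms, while for (ii) I would argue directly with interiors and complements. For the domain structure, the underlying dcpo of disjoint open pairs is exactly the solid domain of $D$ from~\cite{edalat2002foundation}, so its bounded-completeness and countable basis may be cited; I would nonetheless record the key points. The bottom element is $(\emptyset,\emptyset)$, and the crucial observation is that the lub of a directed family $(O_1^{(i)},O_2^{(i)})_i$ is the componentwise union $(\bigcup_i O_1^{(i)},\bigcup_i O_2^{(i)})$, which is again disjoint: if some $x$ lay in both unions then $x\in O_1^{(i)}$ and $x\in O_2^{(j)}$ for some $i,j$, and directedness yields a single pair $(O_1^{(k)},O_2^{(k)})$ above both, forcing $x\in O_1^{(k)}\cap O_2^{(k)}=\emptyset$, a contradiction. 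The same argument, with an upper bound in place of directedness, gives bounded sups, and a countable basis is obtained by restricting to disjoint pairs $(\dua a,\dua b)$ of basic opens of $\mathcal{O}(D)$ and their finite lubs with empty intersection.

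It remains to check that $\sim_{\mathcal{E}(D)}$ is a logical relation, which reduces componentwise to properties of $\sim_{\mathcal{O}(D)}$. Symmetry follows by applying the defining condition of $O\sim_{\mathcal{O}(D)}O'$ to the reversed pair $d'\sim d$. Transitivity uses that whenever $d\sim d'$ then $d\sim d$, so from $O\sim O'$ and $O'\sim O''$ we get $d\in O\iff d\in O'\iff d'\in O''$. Axiom (i), $\bot\sim\bot$, amounts to $\emptyset\sim_{\mathcal{O}(D)}\emptyset$, which holds vacuously. For axiom (ii), closure under sups of chains again reduces to the corresponding property of $\sim_{\mathcal{O}(D)}$ already noted in the text: for $d\sim d'$, membership $d\in\bigcup_i O^{(i)}$ is witnessed at some index $i$, where $O^{(i)}\sim O'^{(i)}$ transfers it to $d'\in O'^{(i)}\subseteq\bigcup_i O'^{(i)}$.

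For (ii), I would first record the inclusions that hold for every disjoint pair: since $O_1\cap O_2=\emptyset$ with $O_1$ open, we have $O_1\subseteq (O_2^c)^\circ$, and symmetrically $O_2\subseteq (O_1^c)^\circ$. For the forward implication, suppose $(O_1,O_2)$ is maximal; then $((O_2^c)^\circ,O_2)$ is a disjoint pair (its first component lies in $O_2^c$) dominating $(O_1,O_2)$, so maximality forces $O_1=(O_2^c)^\circ$, and symmetrically $O_2=(O_1^c)^\circ$. For the converse, assume both fixpoint equations and let $(U_1,U_2)\sqsupseteq(O_1,O_2)$ be disjoint. From $U_1\cap U_2=\emptyset$ and $O_2\subseteq U_2$ we get $U_1\subseteq U_2^c\subseteq O_2^c$, so $U_1\subseteq (O_2^c)^\circ=O_1$; with $O_1\subseteq U_1$ this gives $U_1=O_1$, and the symmetric computation gives $U_2=O_2$, so $(O_1,O_2)$ is maximal.

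The only genuinely domain-theoretic content is the preservation of disjointness under directed and bounded sups, where directedness is indispensable, together with the identification of a countable basis and way-below relation—precisely the facts supplied by the solid-domain construction of~\cite{edalat2002foundation}. Everything else, namely the logical-relation axioms and both directions of (ii), is a routine manipulation of interiors and complements, so I do not anticipate a substantive obstacle.
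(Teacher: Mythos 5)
Your proof is correct. The paper states this proposition without any proof, so there is nothing to compare against; your argument — deferring the continuity and bounded-completeness of the underlying domain of disjoint pairs to the solid-domain construction of~\cite{edalat2002foundation}, verifying that directed and bounded componentwise unions preserve disjointness, reducing the logical-relation axioms to the corresponding (and, as the paper notes, immediate) properties of $\sim_{\mathcal{O}(D)}$, and characterising maximality via the two fixpoint equations $O_1=(O_2^c)^\circ$, $O_2=(O_1^c)^\circ$ — is exactly the argument the authors evidently intend, and it fills the gap completely.
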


\remove{For our application, we need the following property:

   \begin{proposition}
       \begin{itemize}
           \item[(i)] If $L$ is a countably based continuous lattice, then $\mathcal{E}(L)\in \BC$. 
           \item[(ii)] For any topological space $X$, an isomorphism $\mathcal{E}(X) \cong (X\to {\sf Bool}_\bot) $ is induced by the mapping \[(O_1,O_2)\mapsto \lambda x. f(x)=\left\{\begin{array}{ll} \ttt&x\in O_1\\\ff&x\in O_2\end{array}\right.\]
       \end{itemize}
   \end{proposition}

\begin{definition}
If $L = \mathcal{O}(X)$ is the lattice of open sets of a topological space $X$, and $\sigma$ is a normalized continuous valuation on $\mathcal{O}(X)$, we define the probability of a disjoint event-pair 
$(O_1,O_2)\in \mathcal{E}(L)$, under the valuation $\sigma$, as $\operatorname{Pr}_{\sigma}(O_1,O_2):=[\sigma(O_1), 1-\sigma(O_2) ]$. We say $(O_1,O_2)\in \mathcal{E}(L)$ is a classical disjoint event-pair with respect to $\sigma$, if $(O_1,O_2)$ is maximal and $\operatorname{Pr}_{\sigma}(O_1,O_2)$ is a single point, equivalently $\sigma(O_1\cup O_2)=1$.\end{definition}
Note that we always have $\sigma(O_1)\leq 1-\sigma(O_2)$ as $O_1$ and $O_2$ are disjoint. 

  If $L$ is a sup-lattice and $\sigma$ is a normalized continuous valuation on $L$, the join and meet of disjoint event-pairs $O=(O_1,O_2)$ and $U=(U_1,U_2)$ are defined, respectively, as $O\vee U=(O_1\cup U_1,O_2\cap U_2)$ and $O\wedge U=(O_1\cap U_1,O_2\cup U_2)$. We say $(O_1,O_2)\in \mathcal{E}(L)$ is a classical disjoint event-pair with respect to $\sigma$, if $\sigma(O_1)+\sigma(O_2)=1$. It is easily checked that the join and the meet of classical disjoint event-pairs are classical disjoint event pairs. 
  
  \begin{definition}
      
  Two disjoint event-pairs $O=(O_1,O_2)$ and $U=(U_1,U_2)$ are said to be {\em independent} if  $\operatorname{Pr}_{\sigma}(O\wedge U)=\operatorname{Pr}_{\sigma}(O)\operatorname{Pr}_{\sigma}(U)$.
  \end{definition}
   \begin{proposition}Two disjoint event-pairs $O=(O_1,O_2)$ and $U=(U_1,U_2)$ are independent iff $\sigma(O_1\cap U_1)=\sigma(O_1)\sigma(U_1)$.
 \end{proposition}
 \begin{proof}
     We have:
     \[\operatorname{Pr}(O\wedge U)=[\sigma(O_1\cap U_1),1-\sigma(O_2\cup V_2)],\]
     \[\operatorname{Pr}(O)\operatorname{Pr}(U)=[\sigma(O_1)\sigma(U_1),1-\sigma(O_2)-\sigma(U_2)+\sigma(O_2)\sigma(U_2)],\]
     and the result follows.
 \end{proof}}

  \remove{

Assume now we have an effective structure on $(D,\sim)$.
\begin{proposition}
  If $(O_1, O_2)\in \mathcal{E}(D,\sim)$ is a computable element of $\mathcal{E}(D,\sim)$, and $\sigma$ is a computable element in $\mathcal{P}(D,\sim)$, then $\operatorname{Pr}_{\sigma}(O_1,O_2)$ is a computable element in $\realDom$.  If $(O_1, O_2)$ is also a classical disjoint event-pair with respect to $\sigma\in \mathcal{P}(D,\sim)$, then $\operatorname{Pr}_{\sigma}(O_1,O_2)$ is a computable real number, i.e., $\sigma(O_1)$ and $\sigma(O_2)$ are computable real numbers with $\sigma(O_1)=1-\sigma(O_2)$. 
\end{proposition}}
We define the {\em conditional random variable operator}: \[\label{crv}\operatorname{CR}:\mathcal{R}( D,\sim)\times \mathcal{E}(D,\sim)\to \mathcal{R}( D,\sim)\]by 
 \begin{multline}
 \label{rvc}\operatorname{CR}(r,(O_1,O_2))(\bfomega)
 \\
= \left\{\begin{array}{cl}
  r\circ  h_1\circ h_2^n(\bfomega)    & \exists n\in \nat (i<n \\
  & \Rightarrow r \circ h_1\circ h_2^i(\bfomega)\in O_2\;\\&\& \; r \circ h_1\circ h_2^n(\bfomega)\in O_1)\\
     \bot & \mbox{otherwise}
 \end{array}\right.    
 \end{multline}
 
 It is easily seen that $\operatorname{CR}(r,(O_1,O_2)$ and $\operatorname{CR}$ are continuous. It is also clear that  $\operatorname{CR}(r,(O_1, O_2))$ is a step function if $r$ is a step function. We write $r_{(-|(O_1,O_2))}:=\operatorname{CR}(r,(O_1, O_2))$.

 An informal characterization of $\operatorname{CR}$ is the following: from the sample value $\bfomega$ one derives the infinite sequence of sample values $\langle h_1 (\bfomega), h_1\circ  h_2 (\bfomega), h_1\circ  h_2^2 (\bfomega), \ldots \rangle $ and sequentially searches the first value $\bfomega'$ in the sequence with $r(\bfomega') \in O_1$.  Note that the map $\bfomega \mapsto \langle h_1 h_2^i (\bfomega) \rangle_{i \in \nat} $ is a measure-preserving map from the sample space to the countable product of sample spaces with the product measure. 

We say $(O_1,O_2)\in \mathcal{E}(D)$ is a {\em classical} disjoint event-pair with respect to $\sigma\in \mathcal{P}(D,\sim)$, if $(O_1,O_2)$ is maximal and  $\sigma(O_1\cup O_2)=1$.
 \begin{proposition}\label{cond-rand}
 Suppose $\bot\notin U(\mathcal{O},\sim)$ and $(O_1,O_2)\in \mathcal{E}(D,\sim)$.
 \begin{enumerate}
     \item If $\nu(O_1)>0$, then we have: \[\nu(r^{-1}_{(-|(O_1,O_2))}(U))=\nu(r^{-1}(U\times O_1))/(1-\nu(r_2^{-1}(O_2))).\]
\item For $\nu(O_1)=0$, we have $\nu(r^{-1}_{(-|(O_1,O_2))}(U))=0$.
\item If $(O_1,O_2)$ is a classical event with respect to the probability measure $\nu\circ r^{-1}$, then we have:\\ $\nu(r^{-1}_{(-|(O_1,O_2))}(U))=\nu(r^{-1}(U\times O_1))/\nu(r_2^{-1}(O_1))$.
 \end{enumerate}
 \end{proposition}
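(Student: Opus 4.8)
The plan is to reduce the claim to an elementary geometric series once the independence of the repeated draws is made explicit; parts (2) and (3) will then fall out as special cases of the formula in (1). (Throughout I read the events in the statement as the intersection $U\cap O_1$ and the preimages $r^{-1}(O_2)$, $r^{-1}(O_1)$, and I use that $\nu$ is a probability measure in this setting.)

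First I would fix $\bfomega$ and unfold the definition, setting $s_i:=r\circ h_1\circ h_2^i:\Omega\to D$ for $i\in\nat$, so that $\operatorname{CR}(r,(O_1,O_2))(\bfomega)$ inspects the sequence $s_0(\bfomega),s_1(\bfomega),\dots$. As noted after the definition of $\operatorname{CR}$, the map $\bfomega\mapsto\langle s_i(\bfomega)\rangle_{i\in\nat}$ is measure preserving onto the countable product, so the $s_i$ are mutually independent and each has law $\nu\circ r^{-1}$. In particular, writing $p_2:=\nu(r^{-1}(O_2))$ and $q:=\nu(r^{-1}(U\cap O_1))$, we have $\nu(s_i^{-1}(O_2))=p_2$ and $\nu(s_i^{-1}(U\cap O_1))=q$ for every $i$.

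Next I would record the decomposition that drives the computation. For a given $\bfomega$, if an index $n$ satisfying the defining condition exists, it is unique — it is the first index with $s_n(\bfomega)\notin O_2$, and the definition additionally requires $s_n(\bfomega)\in O_1$ — and then the output is $s_n(\bfomega)$; otherwise the output is $\bot$. Since $\bot\notin U$, the $\bot$-outcomes do not meet $U$, so
\[\operatorname{CR}(r,(O_1,O_2))^{-1}(U)=\bigsqcup_{n\in\nat}A_n,\qquad A_n:=\Big(\bigcap_{i<n}s_i^{-1}(O_2)\Big)\cap s_n^{-1}(U\cap O_1),\]
the union being disjoint since for $m<n$ the set $A_n$ forces $s_m\in O_2$ whereas $A_m$ forces $s_m\in O_1$, incompatible as $O_1\cap O_2=\emptyset$. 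By independence, $\nu(A_n)=p_2^{\,n}q$. When $\nu(O_1)>0$, disjointness of $O_1,O_2$ gives $p_2\le 1-\nu(r^{-1}(O_1))<1$, so the geometric series converges and
\[\nu\big(\operatorname{CR}(r,(O_1,O_2))^{-1}(U)\big)=\sum_{n\in\nat}p_2^{\,n}q=\frac{q}{1-p_2}=\frac{\nu(r^{-1}(U\cap O_1))}{1-\nu(r^{-1}(O_2))},\]
which is (1). For (2), if $\nu(O_1)=0$ then $q\le\nu(r^{-1}(O_1))=0$, hence every $\nu(A_n)=p_2^{\,n}q=0$ and the total is $0$, regardless of whether $p_2=1$. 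For (3), the classical condition says $(\nu\circ r^{-1})(O_1\cup O_2)=1$, i.e.\ $\nu(r^{-1}(O_1))+\nu(r^{-1}(O_2))=1$ by disjointness; substituting $1-\nu(r^{-1}(O_2))=\nu(r^{-1}(O_1))$ into the formula of (1) yields the stated quotient.

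The only genuinely delicate step is the independence claim, and I expect it to be the main obstacle: one must show that iterating $h$ yields an i.i.d.\ family with common law $\nu\circ r^{-1}$, not merely pairwise-independent draws. I would prove it by induction, peeling off one sample per application of $h=\langle h_1,h_2\rangle$: the splitting $\nu\circ h^{-1}=\nu\times\nu$ makes $s_0=r\circ h_1$ independent of the residual randomness $h_2$, and applying the same splitting to $h_2^{\,i}$ shows $s_i$ is independent of $(s_0,\dots,s_{i-1})$ while keeping law $\nu\circ r^{-1}$. Everything after this is the routine summation above.
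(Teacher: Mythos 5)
Your proof is correct and follows essentially the same route as the paper's: decompose $\operatorname{CR}(r,(O_1,O_2))^{-1}(U)$ by the first index $n$ at which the draw lands in $U\cap O_1$, note the resulting events are disjoint, and sum the geometric series $\sum_n \nu(r^{-1}(U\cap O_1))\,\nu(r^{-1}(O_2))^n$. The paper's proof is just a terser version of this; your explicit justification of the i.i.d.\ structure via the measure-preserving map $\bfomega\mapsto\langle h_1 h_2^i(\bfomega)\rangle_{i\in\nat}$ is precisely the step the paper leaves implicit (it records that fact in the remark following the definition of $\operatorname{CR}$), and your reading of $U\times O_1$ as $U\cap O_1$ matches what the paper's own proof actually computes.
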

{
\begin{proof}
(i)  In Equation~(\ref{rvc}), with $O_1$ is replaced by $U\cap O_1$, the events for different $n\in \nat$ are disjoint, and 
 \begin{align*}
     \nu(r^{-1}_{(-|(O_1,O_2))}(U)) & =\sum_{n=0}^\infty \nu(r^{-1}(U\cap O_1))(\nu(r_2^{-1}(O_2)))^n
     \\
     & =\nu(r^{-1}(U\cap O_1))/(1-\nu(r_2^{-1}(O_2))).
 \end{align*}
  Items (ii) and (iii) are straightforward. 
\end{proof}}

From item Proposition~\ref{cond-rand}(iii), it follows that $\operatorname{CR}(r,(O_1,O_2))$ extends the notion of conditional random variable to general disjoint event-pairs, which coincides with the classical notion when the disjoint event-pair is classical.

\remove{Consider, for $r\in \mathcal{R}(D,\sim)$ and $(O_1,O_2)\in \mathcal{E}(D,\sim)$, the Scott continuous map\[\Psi_{r,(O_1,O_2)}: \mathcal{R}(D,\sim)\to\mathcal{R}(D,\sim)\]
\[\Psi_{r,(O_1,O_2)}(s)(\bfomega)=\left\{\begin{array}{cc}
  r h_1(\bfomega)   &  r h_1(\bfomega)\in O_1\\
  s h_2(\bfomega)   &  r h_1(\bfomega)\in O_2\\
  \bot& \mbox{otherwise}
\end{array}\right.\]  
\begin{proposition}\label{lfp}
The least fixed point of $\Psi_{r,(O_1,O_2)}$
is $r_{(-|(O_1,O_2))}$.
\end{proposition} 
 \begin{proof}
    For $m\in \nat$, the $m$th iterate $\Psi^m_{r,(O_1,O_2)}(\bot_{(\Omega\to D)})$ is given by:
\begin{multline*}
   \Psi^m_{r,(O_1,O_2)}(\bot_{(\Omega\to D_0)})(\bfomega)  \\
   =\left\{\begin{array}{cl}
   rh_1h_2^n(\bfomega)    & \exists n\leq m(i<n \\
   & \Rightarrow r_2h_1h_2^i(\bfomega)\in O_2\;\& \; rh_1h_2^n(\bfomega)\in O_1)\\
     \bot & \mbox{otherwise,} 
    \end{array}\right. 
\end{multline*}
    from which the result follows.
 \end{proof}
 Assume $(D,\sim)$ and $\mathcal{O}(D,\sim)$, and thus $\mathcal{E}(D,\sim)$, are equipped with effective structures. 
 \begin{corollary}
    If $r\in \mathcal{R}(D,\sim)$ and the disjoint event-pair $(O_1,O_2)\in \mathcal{E}(D,\sim)$ are computable, then $\operatorname{CR}(r,(O_1,O_2))$ is computable. 
 \end{corollary}
 \begin{proof}
It is easy to check that $\Psi$, viewed as an element in the domain $\mathcal{R}(D,\sim)\to \mathcal{E}(D,\sim) \to \mathcal{R}(D,\sim)\to \mathcal{R}(D,\sim)$, is computable as a composition of the computable functions used in its definition. It follows that if $r$ and $(O_1, O_2)$ are computable, then $\Psi_{r,(O_1,O_2)}$ is computable, and therefore so is its fixed point.
\remove{
  Since the least fixed point of a computable function is computable~\cite{plotkin1981post}, it is sufficient, by Proposition~\ref{lfp},  to show that $\psi:(\Omega\to D)\times E\mathcal{O}(D)\times (\Omega \to D)\to (\Omega \to D)$, with $\Psi(r,(O_1,O_2),s)=\Psi_{r,(O_1,O_2)}(s)$ is computable. We need to show that for simple random variables $r$ and $s$ with computable $(O_1,O_2)\in E\mathcal{O}(D)$, the predicate $d\chi_O\ll \Psi_{r,(O_1,O_2)}(s)$ is decidable for any single-step function $d\chi_O\in (\Omega \to D)$. But this is equivalent to, 
  \begin{multline*}
      O\ll (\Psi_{r,(O_1,O_2)})^{-1}(\dua d)
      \\ =((h_1^{-1}r_1^{-1}(\dua d))\cap (h_1^{-1}r_2^{-1}(O_1)))
      \\ \cup ((h_2^{-1}s^{-1} (\dua d))\cap (h_1^{-1}r_2^{-1}(O_2)),
  \end{multline*}
 which is decidable since $r$ and $s$ are simple random variables and $O_1$ and $O_2$ are computable open subsets of $D$. In fact, $O$ is a finite union of cylinder sets and for any open set $W\subset D$, the open subsets $r_{i}^{-1}(W)$ is a finite union of cylinder sets, for $i=1,2$ and so are $h_jr_i^{-1}(W)$, for $j=1,2$.} 
 \end{proof}}
 Similarly, we define the {\em conditional probability}: \begin{equation}\label{cond.prob}\operatorname{CP}: \mathcal{P}(D,\sim)\times \mathcal{E}(D,\sim)\to \mathcal{P}(D,\sim)\end{equation} by 
 \begin{multline*}
     (\operatorname{CP}(\sigma,(O_1,O_2)))(U) \\
     =\left\{\begin{array}{cl} 1& \bot \in U\\
     \sigma(O_1\cap U)/(1-\sigma(O_2)))& \mbox{otherwise}.\end{array}\right.
 \end{multline*}

\begin{proposition} \label{CRandCP}
 The map $\operatorname{CP}$ is Scott continuous, and the following diagram commutes:  
 \[ \begin{tikzcd}[row sep=huge, column sep=large]
\mathcal{R}(D,\sim)\times \mathcal{E}(D,\sim) \arrow{r}{\Large{T_{\mathcal{R}(D)}\times \operatorname{Id}}} \arrow[swap]{d}{\operatorname{CR}} & 
 \mathcal{P}(D,\sim)\times \mathcal{E}(D,\sim) \arrow{d}{\operatorname{CP}} \\%
 \mathcal{R}(D,\sim)\arrow{r}{{T_{\mathcal{R}(D,\sim)}}}& 
\mathcal{P}(D,\sim)
\end{tikzcd}
\]
\end{proposition}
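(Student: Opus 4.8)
The plan is to treat the two assertions separately: the commutativity of the square reduces almost entirely to Proposition~\ref{cond-rand}, while the Scott continuity of $\operatorname{CP}$ is where the actual work lies.

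For commutativity I would fix a representative random variable $r$ and a disjoint event-pair $(O_1,O_2)$, set $\sigma=T_{\mathcal{R}(D)}(r)=\nu\circ r^{-1}$, and verify that the two composite maps agree on every open set $U$; equality of the resulting valuations then gives equality of the two morphisms, and that both sides respect the relation $\sim$ is a routine check from the pointwise definitions. Since open sets are upward closed, the only open set containing $\bot$ is $D$ itself, so I split into two cases. For $U=D$ the left-bottom path gives $\nu(\operatorname{CR}(r,(O_1,O_2))^{-1}(D))=\nu(\Omega)=1$, which matches the clause $\bot\in U$ in the definition of $\operatorname{CP}$. For $U\neq D$ we have $\bot\notin U$ and
\[
T_{\mathcal{R}(D,\sim)}(\operatorname{CR}(r,(O_1,O_2)))(U)=\nu(r^{-1}_{(-|(O_1,O_2))}(U))=\frac{\sigma(O_1\cap U)}{1-\sigma(O_2)}=\operatorname{CP}(\sigma,(O_1,O_2))(U),
\]
which is precisely Proposition~\ref{cond-rand}(i) when $\sigma(O_1)>0$ (using that $h_1,h_2$ are measure preserving, so $\nu(r_2^{-1}(O_2))=\sigma(O_2)$) and Proposition~\ref{cond-rand}(ii) when $\sigma(O_1)=0$, in which case both sides vanish since $\sigma(O_1\cap U)\le\sigma(O_1)=0$.

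For Scott continuity I would use that $\mathcal{P}(D,\sim)$ is closed under directed suprema and that these are computed pointwise on open sets; hence it suffices to show, for each fixed open $U$, that the evaluation $(\sigma,(O_1,O_2))\mapsto\operatorname{CP}(\sigma,(O_1,O_2))(U)$ is Scott continuous into $\reale$. For $U=D$ this is the constant $1$. For $U\neq D$ I would factor it as a product of two maps: the numerator $(\sigma,O_1)\mapsto\sigma(O_1\cap U)$, which is jointly Scott continuous because $\,\cdot\,\cap U$ preserves directed unions of opens and evaluation of a continuous valuation is jointly Scott continuous; and the reciprocal of the denominator $(\sigma,O_2)\mapsto 1/(1-\sigma(O_2))$.

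The main obstacle is precisely this denominator. Taken on its own, $1-\sigma(O_2)$ is order-\emph{reversing} in $\sigma(O_2)$ and hence not Scott continuous, so one must argue instead that the combined map $t\mapsto 1/(1-t)$ on $[0,1]$ (valued $+\infty$ at $t=1$) is monotone and sup-preserving, hence a Scott continuous map into $[1,+\infty]\subseteq\reale$; composing with the continuous evaluation $\sigma\mapsto\sigma(O_2)$ then makes $(\sigma,O_2)\mapsto 1/(1-\sigma(O_2))$ Scott continuous. Invoking the joint Scott continuity of multiplication on $\reale$, the product $\operatorname{CP}(\sigma,(O_1,O_2))(U)=\sigma(O_1\cap U)/(1-\sigma(O_2))$ is then Scott continuous. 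One should check the delicate boundary point: as $\sigma(O_2)\to 1$ the reciprocal blows up, but disjointness forces $\sigma(O_1\cap U)\le\sigma(O_1)\to 0$, so the product stays in $[0,1]$ and the convention $0\cdot(+\infty)=0$ yields the correct limiting value $0$, consistent with Proposition~\ref{cond-rand}(ii). This monotonicity of the reciprocal is exactly the feature the disjoint-pairs formulation was designed to provide, and is where a naive single-event definition would fail.
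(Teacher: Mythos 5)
The paper states Proposition~\ref{CRandCP} without giving a proof, so there is nothing to compare against line by line; your argument supplies exactly the proof the authors leave to the reader, and it is correct. The commutativity half follows the route the surrounding text plainly intends: evaluate both composites on an arbitrary open $U$, dispose of the case $\bot\in U$ (where $U=D$ and both sides give $1$), and invoke Proposition~\ref{cond-rand}(i)--(ii) for the remaining case, using that $\nu\circ r^{-1}=\sigma$ and that the iterates $h_1\circ h_2^n$ are independent and measure-preserving. The continuity half is also sound, and you correctly isolate the one genuinely delicate point: the paper's expression $\sigma(O_1\cap U)/(1-\sigma(O_2))$ is only Scott continuous if division is read as multiplication by the monotone, sup-preserving map $t\mapsto 1/(1-t)$ into $[1,\infty]\subseteq\reale$, with the convention $0\cdot(+\infty)=0$ covering the case $\sigma(O_2)=1$ (which, by disjointness, forces $\sigma(O_1\cap U)=0$); the paper's definition leaves this convention implicit, and your reading is the only one under which the stated continuity holds.
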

\subsection{Computability of the conditional probability}
We note that an effective structure on the domain $(D,=)$ induces an effective structure on the domain $\mathcal{R}(D,=)$ by using Equation~\ref{way-below}, which shows the decidability of $f\ll g$ for step functions $f,g\in \mathcal{R}(D)$. It also induces an effective structure on $ \mathcal{V}(D,=)$ using Proposition~\ref{kirch}(i) which gives the decidability of the way-below relation on simple valuations. It furthermore provides an effective structure on $\mathcal{P}(D,=)$ based on~\cite[Proposition 3.5]{edalat1995domain}, which establishes the decidability of the way-below relation on normalised simple valuations. Finally, an effective structure on $\mathcal{O}(D,=)$ clearly induces an effective structure on $\mathcal{E}(D,=)$. An effective structure for $(D,=)$ also directly provides an effective structure for $(D,\sim)$. In general, an effective structure for $(D,=)$ does not lead to an effective structure for $\mathcal{O}(D,=)$, though it does so for $D=\realDom$ and domains constructed from it using product and function space constructors.
\begin{theorem}\label{prob-comp}
    If $(D,\sim)$ and $\mathcal{O}(D,\sim)$ have effective structures, then the map $\operatorname{CP}$ is computable. 
\end{theorem}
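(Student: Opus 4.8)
The plan is to exploit the Scott continuity of $\operatorname{CP}$, already established in Proposition~\ref{CRandCP}, so that computability reduces to a recursively enumerable condition on basis elements. By the definition of a computable map of effectively given domains, it suffices to show that the relation $\gamma\ll\operatorname{CP}(\alpha,(O_1,O_2))$ is r.e.\ as $\gamma$ ranges over the basis of $\mathcal{P}(D,\sim)$ and $(\alpha,(O_1,O_2))$ over the basis of $\mathcal{P}(D,\sim)\times\mathcal{E}(D,\sim)$; here $\alpha=\sum_{i\in I}r_i\delta(d_i)$ is a normalised simple valuation with $r_i\in\mathbb{Q}$ and $d_i\in B_D$, and $O_1,O_2$ are disjoint basic open sets.

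The key observation is that on such basis inputs $\operatorname{CP}(\alpha,(O_1,O_2))$ is itself a simple valuation whose data can be computed effectively. First I would note that membership $d_i\in O_1$ and $d_i\in O_2$ is decidable: writing a basic open set as a finite union $\bigcup_k\dua b_k$, membership reduces to the decidable relation $b_k\ll d_i$ supplied by the effective structures on $(D,\sim)$ and $\mathcal{O}(D,\sim)$. Hence $\alpha(O_2)=\sum_{d_i\in O_2}r_i$ is a computable rational, and, using the disjointness bound $\alpha(O_1)+\alpha(O_2)=\alpha(O_1\cup O_2)\le 1$, one checks directly that
\[
\operatorname{CP}(\alpha,(O_1,O_2))=\sum_{d_i\in O_1}\frac{r_i}{1-\alpha(O_2)}\,\delta(d_i)+\Big(1-\frac{\alpha(O_1)}{1-\alpha(O_2)}\Big)\delta(\bot),
\]
which agrees with the defining formula on every open $U$ (the $\bot$-mass accounts for the value $1$ when $\bot\in U$). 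All its coefficients are rationals computable from the input, so it is again a basis element of $\mathcal{P}(D,\sim)$. Conceptually, the numerator is exactly $F(\alpha,\chi_{O_1})=\lambda U.\,\alpha(U\cap O_1)$, computable by Proposition~\ref{functionalF}(ii), followed by scalar division. With this effective description in hand, the relation $\gamma\ll\operatorname{CP}(\alpha,(O_1,O_2))$ becomes decidable by the decidability of the way-below relation on normalised simple valuations recorded in \cite[Proposition 3.5]{edalat1995domain}; a decidable relation is r.e., which yields computability of $\operatorname{CP}$.

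The hard part, and the whole point of passing to disjoint event-pairs, is the division by the conditioning mass. For a general valuation $\sigma$ the quantity $\sigma(O_2)$ is only lower-semicomputable, so $1-\sigma(O_2)$ is upper-semicomputable; taking the reciprocal of this positive upper-semicomputable denominator yields a lower-semicomputable factor, which, multiplied by the lower-semicomputable numerator $\sigma(U\cap O_1)$, keeps the conditional value lower-semicomputable---precisely the observable behaviour that fails for naive conditioning on $O_1$ alone, where one would instead divide by the lower-semicomputable $\sigma(O_1)$. This is the mechanism by which the construction evades the Ackerman et al.\ obstruction. The one genuinely delicate point to dispatch is the degenerate case $\alpha(O_2)=1$: then $\alpha(O_1)=0$ forces a $0/0$ division, and one must fix the value of $\operatorname{CP}$ there by convention (returning $\delta(\bot)$) and verify that this choice is consistent with the Scott continuity of the map asserted in Proposition~\ref{CRandCP}.
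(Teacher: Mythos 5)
Your proposal is correct and follows essentially the same route as the paper's proof: reduce to basis elements, observe that $\operatorname{CP}(\alpha,(O_1,O_2))$ applied to a normalised simple valuation and basic disjoint opens is a simple valuation with effectively computable rational coefficients, and conclude via the decidability of the way-below relation on normalised simple valuations (the paper cites Proposition~2.6 of~\cite{DiGianantonio2024} where you cite~\cite[Proposition 3.5]{edalat1995domain}, but these play the same role). Your explicit formula for the output valuation, the remark on decidable membership $d_i\in O_j$, and the handling of the degenerate case $\alpha(O_2)=1$ are welcome elaborations of details the paper leaves implicit.
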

{
\begin{proof}
    The effective structure on $(D,\sim)$ induces an effective structure on $\mathcal{P}(D,\sim)$. We need to show that for disjoint basic disjoint open sets $(O_1,O_2)\in \mathcal{O}(D,\sim)$ and a simple normailised valuation $\sigma\in \mathcal{P}(D,\sim)$, the predicate $\sigma'\ll_{\mathcal{P}(D)} \operatorname{CP}( \sigma, (O_1,O_2))$ is decidable for simple normalised valuation $\sigma'\in \mathcal{P}(D)$. This follows from Proposition 2.6 in~\cite{DiGianantonio2024} since the values of $\operatorname{CP}( \sigma, (O_1,O_2))$ are rational numbers and there are only a finite number of comparisons to check.  
\end{proof}
}
The map $\operatorname{CR}$ is also computable, which follows from a more general result, namely Proposition~\ref{CR0}, in the next section.

\remove{\subsection{Interval notions of conditional probability}

On disjoint events, it is possible also to define an interval notion of probability of events and two interval notions of the conditional probability of events. 
\begin{definition}\label{int-prob)}
 Given $\sigma\in PD$, the {\em interval probability map} on the space of disjoint event-pair is given by $\operatorname{Pr}_\sigma: \mathcal{E}(D,\sim)\to \interval[0,1]$ with $\operatorname{Pr}_\sigma((O_1,O_2))=[\sigma(O_1),1-\sigma(O_2)]$.

\end{definition}
It is easily seen that $\operatorname{Pr}_\sigma$ is Scott continuous and also Scott continuous with respect to the parameter $\sigma\in PD$. Given an event-pair $(O_1,O_2)$, the real interval $\operatorname{Pr}_\sigma((O_1,O_2))\in \interval[0,1]$ captures the minimum and maximum probability $\sigma(O_1)$ as the event-pair is refined in the information order.

Given $\sigma\in \mathcal{P}(D)$ and $(O_1,O_2)\in \mathcal{E}(D,\sim)$, the conditional probability $\sigma(-|(O_1,O_2)$ defined in Expression~(\ref{cond.prob}), gives rise by to an interval conditional probability $\operatorname{PR}_{\sigma(-|(O_1,O_2)}$. 

Using the interval extension of division of real numbers, we can also obtain another interval notion of conditional probability, as follows. Given $\sigma\in \mathcal{P}(D)$, the map $\operatorname{Pr}_\sigma(-|-):\mathcal{E}(D,\sim) \times \mathcal{E}(D,\sim)\to \interval[0,1]$ is defined as $\operatorname{Pr}_\sigma((U_1,U_2)|(O_1,O_2))=\operatorname{Pr}((O_1\cap U_1,O_2\cup U_2))/\operatorname{Pr}_\sigma((O_1,O_2))$,
where, as usual, we let $a/b=\bot$ for $a,b\in \interval[0,1]$ with $0\in b$. Note that if $U_2=\emptyset$ and $(O_1,O_2)$ is a classical event-pair, then we have $\operatorname{Pr}_\sigma(U|(O_1,O_2))=\operatorname{Pr}_\sigma((U,\emptyset)|(O_1,O_2))$. 

It follows easily that we have two interval notions of conditional probability, which are both computable; we now show that these two interval notions of conditional probability are consistent but do not coincide:
\begin{proposition}
For any continuous valuation $\sigma\in \mathcal{P}(D)$ and two disjoint event-pairs
 $(O_1, O_2), (U_1, U_2)\in \mathcal{E}(D,\sim)$, the two intervals   
 \[\operatorname{Pr}_{\sigma}((U_1, U_2)| (O_1, O_2)),\qquad\operatorname{Pr}_{\sigma(-|(O_1, O_2))}(U_1, U_2),\] have the same left end-point, but, in general, the right end-point of either can be greater than that of the other. 
 If, however, $(O_1,O_2)$ is a classical event-pair, then the two intervals coincide. If, in addition, $(U_1,U_2)$ is also a classical disjoint event-pair, then this interval is the classical conditional probability, namely the real number $\sigma(U_1\cap O_1)/\sigma(O_1)$.
\end{proposition}
\begin{proof}
We have: \[\operatorname{Pr}_{\sigma(-|{(O_1, O_2)}}(U_1, U_2)=\left[\frac{\sigma(U_1\cap O_1)}{1-\sigma(O_2)},1-\frac{\sigma(U_2\cap O_1)}{1-\sigma(O_2)}\right]\]
\begin{multline*}
    \operatorname{Pr}_\sigma((U_1,U_2)|(O_1,O_2))
    =\frac{[\sigma(U_1\cap O_1),1-\sigma(U_2\cup O_2)]}{[\sigma(O_1),1-\sigma(O_2)]} \\
    =\left[\frac{\sigma(U_1\cap O_1)}{1-\sigma(O_2)},\frac{1-\sigma(U_2\cup O_2)}{\sigma(O_1)}\right]
\end{multline*}

It follows immediately that the two notions of conditional probability have the same left end-point. Furthermore,
\begin{multline*}
 \Delta:=\left(\operatorname{Pr}_{\sigma}((U_1, U_2)| (O_1, O_2))\right)^+-\left(\operatorname{Pr}_{\sigma(-|{(O_1, O_2))}}(U_1, U_2)\right)^+ \\
 =\frac{1-\sigma(U_2\cup O_2)}{\sigma(O_1)}-1+\frac{\sigma(U_2\cap O_1)}{1-\sigma(O_2)}.   
\end{multline*}

If $O_2=\emptyset$, $1>\sigma(O_1)>0$ and $U_2=D$, the above reduces to $\Delta=-1 +\sigma(O_1)<0$. If, on the other hand, $O_2=\emptyset$, $\sigma(O_1)<1$ and $U_2=\emptyset$, then the difference would become $\Delta=(1-\sigma(O_1))/\sigma(O_1)>0$. When $(O_1,O_2)$ is a classical disjoint event-pair, from $O_1\cap O_2=\emptyset$ and $\sigma(O_1)+\sigma(O_2)=1$, we obtain: 
\[\Delta=\frac{1-\sigma(U_2\cup O_2)-\sigma(O_1)+\sigma(U_2\cap O_1)}{\sigma(O_1)}=0,\]
thus the two interval conditional probabilities coincide. If, in addition to $(O_1,O_2)$, the disjoint event-pair $(U_1,U_2)$ is also a classical disjoint event-pair, then $(O_1\cup U_1,O_2\cap U_2)$ is a classical disjoint event-pair too, and thus $1-\sigma(U_2\cup O_2)=\sigma(U_1\cap O_1)$ and the result follows.

\end{proof}
\begin{corollary}

If $(O,O')$ and $(U,U')$ are classical event-pairs, then 
\[\operatorname{Pr}_\sigma((U,U')|(O,O'))=\frac{\operatorname{Pr}_\sigma((O,O')|(U,U'))\sigma(U)}{\sigma(O)}\]
\end{corollary}
\begin{corollary}

If $(O_i,O_i')$, for $1\leq i\leq n$,  and $(U,U')$are classical event-pairs with the event-pairs $(O_i,O'_i)$ independent for $1\leq i\leq n$, then 
\begin{multline*}
 \operatorname{Pr}_\sigma\left((U,U')|\bigwedge_{1\leq i\leq n} (O_i,O_i')\right)
 \\
 =\frac{\prod_{1\leq i\leq n}\operatorname{Pr}_\sigma((O_i,O_i')|(U,U'))\sigma(U)}{\sigma(\bigwedge_{1\leq i\leq n} (O_i,O_i'))}   
\end{multline*}

\end{corollary}}
\remove{

In this section, we aim to show that in our framework, conditional probability can be formulated and is computable for some basic higher types. 

\begin{definition}
A Scott open subset $O\subseteq D$ in a PER domain $\langle D,\sim_D\rangle$ is {\em R-open} if it is closed under $\sim_D$, i.e., $d_1 \in O$ and $d_1\sim d_2$ implies $d_2\in O$. 

A PER domain is R-T$_0$ if any pair of self-equivalent elements $x, y$ that cannot be separated by R-open sets are equivalent.
\end{definition} 

It is easy to check that the collection of all R-open sets of a PER domain $\langle D,\sim_D\rangle$ is a topology, i.e., R-open sets are closed under taking arbitrary union and finite intersections. We call this topology the {\em R-topology} on $\langle D,\sim_D\rangle$ and denote it by $\gamma_{(D, \sim_D)}$.

We denote by $|D|$ the set $\{d \in D \mid d \sim_D d\}$ the set of self-equivalent element in PER domain $(D, \sim_D)$

\begin{definition}
  We say a Scott continuous map $f:(D_1,\sim)\to (D_2,\sim)$ of PER domains is {\em R-continuous} if it is continuous with respect to the R-topology on $D_1$ and $D_2$. 
\end{definition}

\begin{lemma}\label{func-r-top}
 \begin{itemize}
     \item[(i)]  A Scott continuous map of the function space  $((D_1,\sim)\to(D_2,\sim))$ is self-related iff it is R-continuous. 
     \item[(ii)] Two maps $f,g:(D_1,\sim)\to(D_2,\sim)$ are equivalent, i.e., $f\sim g$, iff we have: $f^{-1}(O)=g^{-1}(O) \in \gamma_{(D_1, \sim)}$ for $O \in \gamma_{(D_2, \sim)}$.
 \end{itemize}
\end{lemma}

\begin{proof}
    (i) Suppose $f\sim f$ and $f^{-1}(O)$ is not R-open for R-open set $O$. Then there exists $x,y\in D_1$ with $x\sim y$, $x\in f^{-1}(O)$ and $y\notin f^{-1}(O)$. But then $f(x)\in O$ and $f(y)\notin O$ with $f(x)\sim f(y)$ which is a contradiction as $O$ is R-open. 
    Next suppose $f$ is R-continuous but $f$ is not self-related. Thus, there exists $x,y\in D_1 $ with $x\sim y$ and $f(x) \not\sim f(y)$. 
    Hence, there exists and R-open set $O\subseteq D_2$ with $f(x)\in O $ and $f(y)\notin O$. But then $f^{-1}(O)$ is R-open with $x\in f^{-1}(O)$ and $y\notin f^{-1}(O)$, which is a contradiction.
    
    (ii) Suppose $f$ and $g$ are equivalent, and $O\in \gamma_{(D_2,\sim)}$. If $r\in f^{-1}(O)$, then $f(r)\in O$ and since $O$ is R-open and $f(r)\sim g(r)$, it follows that $g(r)\in O$, i.e., $r\in g^{-1}(O)$. By symmetry, we have $f^{-1}(O)=g^{-1}(O)$. On the other hand suppose $f^{-1}(O)=g^{-1}(O)$ for all $O\in \gamma_{(D_2,\sim)}$  and $f(r)\not\sim g(r)$ for some $r\in (D_1,\sim)$. Then there exists $O\in \gamma_{( D_2,\sim)}$ with $f(r)\in O$ and $g(r)\notin O$. But then $r\in f^{-1}(O)$ whereas $r\notin g^{-1}(O)$, which is a contradiction. 
\end{proof}

\begin{lemma}\label{func-r-top}
 \begin{itemize}
     \item[(i)]  A Scott continuous map of the function space  $((D_1,\sim)\to(D_2,\sim))$ that is self-related is also R-continuous. 
     If $(D_2, \sim)$ is R-T$_0$ the converse holds.

     \item[(ii)] Given two Scott continuos maps $f,g:(D_1,\sim)\to(D_2,\sim)$, if $f, g$ that are equivalent then they are also R-continous and for any R-open set $O \in \gamma_{(D_2, \sim)}$, we have: $f^{-1}(O) \cap |D_1| = g^{-1}(O) \cap |D_1| $. 
     If $(D_2, \sim)$ is R-T$_0$ the converse holds.
 \end{itemize}
\end{lemma}

\begin{proof}
    (i) Suppose $f\sim f$,  for any R-open set $O$, $x \in f^{-1}(O)$ and $y \sim x$, we have $f(x) \sim f(y)$, $f(x)\in O$ and therefore $f(y)\in O$, it follows that $y\in f^{-1}(O)$ and therefore $f^{-1}(O)$ is R-open. 
    Next suppose $(D_2, \sim)$ is R-T$_0$, $f$ is R-continuous, for any pair of equivalent elements $x \sim y$, 
    and for any R-open set such that $f(x)\in O$ we have that $y \in f^{-1}(O)$, since $f^{-1}(O)$ is R-open, therefore $f(y) \in O$, by the generality of $O$ it follows that $f(x) \sim  f(y)$ and therefore $f$ is not self-related. 
    
    (ii) Suppose $f$ and $g$ are equivalent, and $O\in \gamma_{(D_2,\sim)}$. If $x \in f^{-1}(O) \cap |D_1|$, then $f(x)\in O$ and $x \sim x$, it follows $f(x)\sim g(x)$, and $g(x)\in O$, i.e., $x\in g^{-1}(O)$. By symmetry, we have $f^{-1}(O)  \cap |D_1| = g^{-1}(O) \cap |D_1|$. 
    On the other hand suppose $f^{-1}(O)  \cap |D_1| =g^{-1}(O)  \cap |D_1|$ for all R-open setes $O\in \gamma_{(D_2,\sim)}$,
for any pair $x \sim y$ and any R-open set such that $f(x)\in O$ we have that $y \in f^{-1}(O) = g^{-1}(O)$, since $f^{-1}(O)$ is R-open, therefore $g(y) \in O$, by the generality of $O$ it follows that $f(x) \sim  g(y)$. 
\end{proof}}

\section{Extended Random variables Monad}

To give semantics to the \emph{score operator}, which is commonly used to implement Bayesian inference or, more generally, score sampling, it is necessary to extend the semantics.  In particular, the semantics of an expression is associated not to a probability distribution but instead to a general measure. In our case, this generalization is achieved by transitioning from sample spaces of total measure $1$ to sample spaces having arbitrary  measure.
These measures are obtained by means of Scott continuous functions  $w : \Omega \to \overline{\realLine^+}$.
Given a sample space $(\Omega,\nu)$, equipped with the measure $\nu$, a weight function $w : \Omega \to \overline{\realLine^+}$ induces the measure $\nu_w$ defined by $\nu_w (O) =\int_{O} w \, d \nu $, for any measurable set $O$. 
\remove{Indeed, Equation~\ref{rel-density-int} can be easily checked to hold for simple functions $w$, which by the monotone convergence theorem then gives the equality for all measurable $w$. }


\begin{definition}
Given a PER domain $(D, \sim)$, the PER domain of extended random variables,  $\mathcal{R}_0(D, \sim)$, is the domain $(\Omega \to \overline{\realLine^+})  \times(\Omega \to D)$, with the partial equivalence relation defined by $(w_1, r_1) \sim (w_2, r_2)$ iff (i) $\forall \bfomega\in \Omega.\,r(\bfomega)\sim_D r(\bfomega)\,\&\,s(\bfomega)\sim_D s(\bfomega)$, and (ii) for every R-open set $O$ in $D$, $\nu_{w_1} (r_1^{-1}(O)) = \nu_{w_2}( r_2^{-1}(O))$.    
\end{definition}

Any continuous extended random variable defines a continuous valuation.

\begin{definition}
   Define the map $T_{0 \mathcal{R}_0(D,\sim)}: \mathcal{R}_0{{(D, {\sim)}}} \to \mathcal{V}(D, \sim)$, from extended random variables on $(D, \sim)$ to (non-normalized) continuous valuations on $(D, \sim)$ by
$ T_{0 \mathcal{R}_0(D, \sim) } (w, r) (O) = \nu_w (r^{-1} (O)) $. 
\end{definition}
We often write $T_0$ for $T_{0 \mathcal{R}_0(D, \sim) }$.  
Thus, $T_0 (w,r)$ is the pushforward measure of $\nu_w$ under $r$, denoted also by $r_*\, \nu_w$. Since ${T}_0(w,r)= F(w,\nu\circ r^{-1})=F(w,T(r))$, it follows from Proposition~\ref{int-on-basis}(ii) and computability of $T$ that ${T}_0$ is computable. By definition, the mapping $T_0$ preserves the equivalence relation on extended random variables.

It is easy to check that $\mathcal{R}_0$ is a functor, with the action of morphisms defined by: given $f : D_1 \to D_2$, $\mathcal{R}_0 f (w, r) = (w, f \circ r)$.

The functor $\mathcal{R}_0$ becomes a monad by letting 
$\eta_D : (D,  \sim~) \to \mathcal{R}_0(D, \sim)$ as: 
\[ \eta_D d = ((\lambda \bfomega \,.\, 1) , \ (\lambda \bfomega \,.\, d))\] and defining
$\mu_D : \mathcal{R}_0^2(D,  \sim) \to \mathcal{R}_0(D, \sim)$ as:
\[\begin{array}{ll}
\mu_D (w, r)  = &( (\lambda \boldsymbol{\bfomega} \,.\,  w (h_1 (\bfomega)) \cdot (\pi_1 (r (h_1 (\bfomega))))(h_2 (\bfomega)) , \\
     & \ \, (\lambda \bfomega \,.\, \pi_2 (r (h_1 (\bfomega)))(h_2 (\bfomega)  \ )
\end{array}
\] 
we note that in the above definition $r$ belongs to the domain $\Omega \to (\Omega \to \overline{\realLine^+}, \Omega \to D))$.

The correctness of the multiplication $\mu$ uses the following.

\begin{proposition}
If $h : \Omega \to (\Omega \times \Omega)$ is a measure preserving function, then $h$ is also a measure preserving function from $(\Omega, {(w_1 \circ h_1)  \cdot (w_2 \circ h_2)})$ to $(\Omega, {w_1}) \times (\Omega, {w_2})$, where by the expression $(w_1 \circ h_1)  \cdot (w_2 \circ h_2)$ we denote the weight function obtained by point-wise application of the product.
\end{proposition}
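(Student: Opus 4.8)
The plan is to translate the phrase ``measure preserving'' into an equality of pushforward measures and then to derive it from the hypothesis on $h$ together with Tonelli's theorem. Writing $w := (w_1\circ h_1)\cdot(w_2\circ h_2)$ for the source weight, the assertion is exactly that $h$ pushes $\nu_w$ forward to $\nu_{w_1}\times\nu_{w_2}$, i.e. that $\nu_w(h^{-1}(S))=(\nu_{w_1}\times\nu_{w_2})(S)$ for every measurable $S\subseteq\Omega\times\Omega$. The single observation that organizes the whole argument is that $w$ is the pullback along $h$ of the \emph{tensor density} $\phi:\Omega\times\Omega\to\overline{\realLine^+}$ given by $\phi(\bfomega,\bfomega')=w_1(\bfomega)\,w_2(\bfomega')$: indeed $\phi\circ h=(w_1\circ h_1)\cdot(w_2\circ h_2)=w$, since $h(\bfomega)=(h_1(\bfomega),h_2(\bfomega))$.

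First I would carry out the transport computation. For measurable $S\subseteq\Omega\times\Omega$,
\[
\nu_w(h^{-1}(S))=\int_\Omega \mathbf{1}_{h^{-1}(S)}\cdot w\,d\nu=\int_\Omega\big((\mathbf{1}_S\cdot\phi)\circ h\big)\,d\nu=\int_{\Omega\times\Omega}\mathbf{1}_S\cdot\phi\,\,d(\nu\times\nu)=(\nu\times\nu)_\phi(S),
\]
where the first equality is the definition of the weighted measure $\nu_w$, the second uses $w=\phi\circ h$ together with $\mathbf{1}_{h^{-1}(S)}=\mathbf{1}_S\circ h$, and the crucial third equality is the hypothesis that $h$ is measure preserving from $(\Omega,\nu)$ to $(\Omega\times\Omega,\nu\times\nu)$, applied to the non-negative integrand $\mathbf{1}_S\cdot\phi$ (so only the monotone/Tonelli form of the change-of-variables identity is needed).

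Second I would identify the density-weighted measure $(\nu\times\nu)_\phi$ with the product $\nu_{w_1}\times\nu_{w_2}$. It suffices to check agreement on measurable rectangles: for $A,B\subseteq\Omega$,
\[
(\nu\times\nu)_\phi(A\times B)=\int_{A\times B} w_1(\bfomega)\,w_2(\bfomega')\,\,d(\nu\times\nu)=\Big(\int_A w_1\,d\nu\Big)\Big(\int_B w_2\,d\nu\Big)=\nu_{w_1}(A)\,\nu_{w_2}(B),
\]
the middle step being Tonelli applied to the non-negative separable integrand. Since both $(\nu\times\nu)_\phi$ and $\nu_{w_1}\times\nu_{w_2}$ agree on the $\pi$-system of rectangles, which generates the product $\sigma$-algebra, the uniqueness of the product measure forces them to coincide, and combining with the previous display gives $\nu_w(h^{-1}(S))=(\nu_{w_1}\times\nu_{w_2})(S)$ as required.

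The main obstacle is not the algebra but the measure-theoretic hygiene underpinning the two displays: the change of variables and Tonelli require $\sigma$-finiteness, and the final identification invokes uniqueness of product measures. Here I would note that the weights are Scott continuous maps into $\overline{\realLine^+}$, hence Borel measurable (continuous maps are Borel measurable, as recalled in Section~II), and that the standing assumption on the sample space makes $\nu$ a $\sigma$-finite continuous valuation extending to a Borel measure, so $\nu\times\nu$ is $\sigma$-finite and the two integration steps above, having non-negative integrands, go through with no integrability side-condition. The only delicate point is the well-definedness of $\nu_{w_1}\times\nu_{w_2}$ when the weights are unbounded; in the intended application (the monad multiplication $\mu$) Scott continuity reduces everything to step functions with finite rational weights, for which $\nu_{w_1},\nu_{w_2}$ are finite measures, and the general case then follows by taking directed suprema and applying monotone convergence. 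With these remarks in place the proposition is immediate.
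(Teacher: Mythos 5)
Your proof is correct. Note that the paper itself states this proposition without supplying any argument (it is introduced only as the lemma needed for ``the correctness of the multiplication $\mu$''), so there is no in-paper proof to compare against; your two-step decomposition --- first the change-of-variables identity $\int_\Omega (g\circ h)\,d\nu=\int_{\Omega\times\Omega} g\,d(\nu\times\nu)$ for non-negative measurable $g$ applied to $g=\mathbf{1}_S\cdot\phi$ with $\phi(\bfomega,\bfomega')=w_1(\bfomega)w_2(\bfomega')$, and then the identification $(\nu\times\nu)_\phi=\nu_{w_1}\times\nu_{w_2}$ via Tonelli on rectangles plus uniqueness on a generating $\pi$-system --- is exactly the argument the authors are implicitly relying on. The key observation $\phi\circ h=(w_1\circ h_1)\cdot(w_2\circ h_2)$ is the whole content of the proposition, and you isolate it cleanly. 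Your closing caveat is also the one genuinely delicate point: the uniqueness step needs $\nu_{w_1}$ and $\nu_{w_2}$ to be $\sigma$-finite (otherwise $\nu_{w_1}\times\nu_{w_2}$ is not even canonically defined), and your reduction to step-function weights followed by directed suprema and monotone convergence is the right way to discharge it in the setting of the monad multiplication. Nothing is missing.
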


Any random variable can be readily transformed into an extended random variable. In fact, given $r : \Omega \to (D, \sim)$, the extended random variable $(\lambda \bfomega . 1, r)$ induces the same distribution as $r$, i.e., $T_0(\lambda \bfomega . 1, r) = T(r)$.

Let $r: \Omega \to \realDom$ be a random variable whose image lies on the maximal points $\operatorname{Max}(\realDom)$, i.e., it defines a function $r: \Omega \to \realLine$. If $r$ has probability density function $f: \realLine \to \realLine$, and $t : \realLine \to \realLine^+$ is a continuous function, the extended random variable $(t \circ r, r)$ induces a continuous valuation $T_0(t \circ r, r)$ defined by the density function $\lambda d . f(d) \cdot t(d)$. From this fact, we build the following example.
\begin{example}
By the Box-Muller transform, a standard normal random variable \( Z_0 \) with mean $0$ and standard deviation $1$ is given by:
\[
Z_0(\bfomega) = \sqrt{-2 \ln (h_1 \bfomega)} \cdot \cos(2 \pi (h_2 \bfomega))
\]
The generalized random variable $(\lambda \bfomega . 1, Z_0)$ induces a continuous valuation defined by the density function of the normal distribution $\mathcal{N}(0, 1)$, that is:
\[T_0 (\lambda \bfomega . 1, Z_0) (a,b) = \int_a^b \frac{1}{\sqrt{2 \pi}}\exp{\left( -\frac{x^2}{2} \right)}\,dx\]

Using the normal distribution $\mathcal{N}(1, 1)$, with density
$f(x) = \frac{1}{\sqrt{2 \pi}} \exp{\left({(x - 1)^2}/{2} \right)}$, one can construct the generalized random variable given by $(f \circ Z_0, Z_0)$. It induces in turn the continuous valuation $T_0 (f \circ Z_0, Z_0)$, which
is characterized by the density function:
$ g(x) = \frac{1}{\sqrt{2 \pi}} \exp{\left( -\frac{x^2}{2} \right)} \frac{1}{\sqrt{2 \pi}} \exp{\left( -\frac{(x-1)^2}{2} \right)}$, 
simplified to: \[ g(x) = \frac{1}{2 \pi} \exp{-\left( x - \frac{1}{2} \right)^2}.\]
Note that $g$ corresponds to an unnormalized density function of the normal distribution \(\mathcal{N}\left( \frac{1}{2},  \left( \frac{1}{2}  \right)^2 \right) \), which is the posterior probability obtained from a prior normal distribution \(\mathcal{N}(0,1)\) conditioned on a normal distribution \(\mathcal{N}(1,1)\).
  
\end{example}

\medskip 

We next show that the conditional probability can be given also on extended random variables on PER domains $(D,\sim)$. Let us start by extending Equation~(\ref{rvc}).
\[\label{crv0}\operatorname{CR_0}:\mathcal{R}_0(D, \sim)\times \mathcal{E}(D,\sim)\to \mathcal{R}_0 (D,\sim)\]
$\operatorname{CR_0}((w,r),(O,O') )= (w', r')$ where 
\begin{multline*}\label{ervcn} (w'(\bfomega),r'(\bfomega))= \\
 \left\{\begin{array}{cl}
  (w\circ h_1\circ h_2^n(\bfomega), \, r \circ h_1\circ h_2^n(\bfomega))    & \exists n\in \nat \,.( i<n \Rightarrow \\
  &  r\circ  h_1\circ h_2^i(\bfomega)\in O' ,\\
  & \,\,  r \circ h_1 \circ h_2^n(\bfomega)\in O\\
     (0, \bot) & \mbox{otherwise}
 \end{array}\right.
 \end{multline*}

\begin{proposition}
  The map $\operatorname{CR_0}$ is self-related.
\end{proposition}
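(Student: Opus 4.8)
The plan is to show that $\operatorname{CR_0}$ is self-related by unfolding the definition of the PER on $\mathcal{R}_0(D,\sim)$ and verifying both of its conditions. Recall that a morphism is self-related precisely when it maps equivalent inputs to equivalent outputs, so I must take $(w_1,r_1)\sim(w_2,r_2)$ in $\mathcal{R}_0(D,\sim)$ and $(O_1,O_1')\sim(O_2,O_2')$ in $\mathcal{E}(D,\sim)$, and prove that the two resulting extended random variables $\operatorname{CR_0}((w_1,r_1),(O_1,O_1'))$ and $\operatorname{CR_0}((w_2,r_2),(O_2,O_2'))$ are equivalent. By the definition of $\sim_{\mathcal{E}(D,\sim)}$ we have $O_1\sim_{\mathcal{O}(D)}O_2$ and $O_1'\sim_{\mathcal{O}(D)}O_2'$, meaning the open sets agree on equivalent elements of $D$; and by the hypotheses on the extended random variables, $r_i(\bfomega)$ is self-related pointwise and $\nu_{w_1}(r_1^{-1}(O))=\nu_{w_2}(r_2^{-1}(O))$ for every R-open $O$.

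First I would verify condition (i) of the PER, the pointwise self-relatedness of the outputs. Since the output value $r'(\bfomega)$ is, by construction, either $\bot$ or equal to $r\circ h_1\circ h_2^n(\bfomega)$ for some $n$, and each such value is self-related by the corresponding hypothesis on the input random variable, condition (i) follows immediately; the same for $w'$, whose output is either $0$ or a value of $w$. Next I would tackle condition (ii), the equality of the induced measures on every R-open set $U$. The key observation is that, because the selection of the index $n$ in the definition of $\operatorname{CR_0}$ depends only on membership of $r\circ h_1\circ h_2^i(\bfomega)$ in the open sets $O'$ and $O$, the events defining the cases for $(w_1,r_1,O_1,O_1')$ and $(w_2,r_2,O_2,O_2')$ agree up to the equivalence: whenever the first test lands in $O_1$ versus $O_1'$, the second lands in the equivalent $O_2$ versus $O_2'$, at least on self-related sample values, which is what governs the measures.

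The concrete computation mirrors Proposition~\ref{cond-rand}: the events indexed by distinct $n$ are disjoint, so one expands $\nu_{w'}(r'^{-1}(U))$ as a geometric-type sum whose $n$-th term involves the measure of $\{r\circ h_1\circ h_2^n\in U\cap O\}$ weighted by powers of the measure of $\{r\circ h_1\circ h_2^i\in O'\}$. Using the measure-preservation and independence of the $h_1\circ h_2^i$ (noted after Equation~(\ref{rvc})), each such term reduces to a product of quantities of the form $\nu_{w}(r^{-1}(U\cap O))$ and $\nu_{w}(r^{-1}(O'))$. Since $U\cap O_1$, $O_1$, $O_1'$ are R-open and R-equivalent to $U\cap O_2$, $O_2$, $O_2'$ respectively, the PER hypothesis (ii) on the inputs gives term-by-term equality of the two series, hence equality of the limits and thus condition (ii). I expect the main obstacle to be the bookkeeping of the measures under the weight functions $w_i$: one must confirm that $U\cap O_i$ is R-open (so that hypothesis (ii) applies to it) and that the weighted measures $\nu_{w_i}$ respect the equivalence on these sets, which follows because $w_1\sim w_2$ forces agreement of the integrals $\int_{r_i^{-1}(O)}w_i\,d\nu$ on R-open $O$. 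Care is also needed to justify interchanging the countable sum with the equality of measures, but this is routine monotone convergence.
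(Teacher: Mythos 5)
Your condition~(i) argument is fine, and your overall strategy (expand $\nu_{w'}(r'^{-1}(U))$ as the geometric-type series of Proposition~\ref{cond-rand} and compare term by term) is the natural one; the paper itself gives no proof of this proposition, so the only thing to check the attempt against is the definitions. The gap sits exactly where you flagged the "main obstacle". Writing $A_n=\{\bfomega: r\circ h_1\circ h_2^i(\bfomega)\in O'\ \mbox{for}\ i<n,\ r\circ h_1\circ h_2^n(\bfomega)\in U\cap O\}$, the definition of $\operatorname{CR}_0$ sets $w'=w\circ h_1\circ h_2^n$ on $A_n$: only the weight at the \emph{accepted} index survives, while the weights at the rejected indices are discarded. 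Factorizing over the independent coordinates $h_1\circ h_2^i$ therefore yields
\[
\nu_{w'}(r'^{-1}(U))=\sum_{n\ge 0}\bigl(\nu(r^{-1}(O'))\bigr)^{n}\,\nu_{w}\bigl(r^{-1}(U\cap O)\bigr),
\]
so the rejection factor is the \emph{unweighted} measure $\nu(r^{-1}(O'))$, not $\nu_{w}(r^{-1}(O'))$ as your claimed term-by-term equality requires. The PER on $\mathcal{R}_0(D,\sim)$ constrains only the weighted pushforwards $\nu_{w_j}\circ r_j^{-1}$ on R-open sets and says nothing about $\nu\circ r_j^{-1}$, so the denominators $1-\nu(r_j^{-1}(O_j'))$ need not agree for equivalent inputs.

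This is not repairable by a sharper estimate. Take $D=\{\bot,a,b\}$ with the identity PER, $(O,O')=(\{a\},\{b\})$ and $\Omega=\{0,1\}^{\nat}$; let $r_1$ return $a$ or $b$ according to the first bit with $w_1=\mathbf{1}$, and let $r_2$ return $a$ on the cylinder $00$ (of mass $1/4$) and $b$ otherwise, with $w_2=2$ on that cylinder and $2/3$ elsewhere. Both pairs push forward to $\tfrac12\delta(a)+\tfrac12\delta(b)$, hence $(w_1,r_1)\sim(w_2,r_2)$, yet the two conditional extended random variables assign masses $\tfrac{1/2}{1-1/2}=1$ and $\tfrac{1/2}{1-3/4}=2$ to $\{a\}$. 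So with $\operatorname{CR}_0$ exactly as printed the statement fails; your series argument becomes correct only in the unweighted case $w=\mathbf{1}$ (where it reduces to Proposition~\ref{cond-rand}), or for a modified $\operatorname{CR}_0$ in which the rejected draws also contribute their weights so that the rejection probability is computed under $\nu_w$ rather than $\nu$. You should either record this as an obstruction or prove the proposition for such a corrected operator.
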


The map $\operatorname{CR_0}$ can be defined via the fixed-point operator in the following way: consider the Scott continuous map
\begin{multline*}
\Psi: (\mathcal{R}_0(D, \sim))\times\mathcal{E}(D,\sim)\times(\mathcal{R}_0(D,\sim)) \\
\to (\mathcal{R}_0(D,\sim))
\end{multline*}
\vspace{-5ex}
\begin{multline*}
  \Psi((w,r), (O_1,O_2), (w', r'))(\bfomega) \\
  =\left\{\begin{array}{ll}
  (w h_1 (\bfomega), r h_1(\bfomega))   &  r h_1(\bfomega)
\in O_1\\  (w' h_2(\bfomega), r' h_2(\bfomega))   &   r h_1(\bfomega)\in O_2\\
  (\lambda \bfomega \, . 0, \bot)  & \mbox{otherwise}
\end{array}\right.
\end{multline*}
We note that if $(w,r)$ and $(w',r')$ are step functions, then $\Psi((w,r), (O_1,O_2), (w', r'))$ is a step function for any basis element $(O_1,O_2)\in \mathcal{E}(D)$. Note also that $\operatorname{CR}_0((w,r),(O_1,O_2))$ is the least fixed point of $\Psi(((w,r),(O_1,O_2), -):\mathcal{R}_0(D) \to \mathcal{R}_0(D) )$ with  $(w',r')\mapsto\Psi((w,r), (O_1,O_2), (w', r'))$. 

Note also that, as in the previous section, an effective structure on $(D,=)$ induces an effective structure on the domain $\mathcal{R}_0(D,=)$ and thus on the PER domain $\mathcal{R}_0(D,\sim)$.

\begin{proposition}\label{CR0}
    The map $\operatorname{CR}_0$ is computable. 
\end{proposition}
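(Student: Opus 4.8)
The plan is to exploit the characterization recorded immediately above the statement: for fixed first two arguments, $\operatorname{CR}_0((w,r),(O_1,O_2))$ is the least fixed point of the Scott continuous operator $(w',r')\mapsto\Psi((w,r),(O_1,O_2),(w',r'))$. This reduces computability of $\operatorname{CR}_0$ to two facts: (a) the auxiliary map $\Psi$ is itself computable when viewed as a Scott continuous map $\mathcal{R}_0(D,\sim)\times\mathcal{E}(D,\sim)\times\mathcal{R}_0(D,\sim)\to\mathcal{R}_0(D,\sim)$ of effectively given domains; and (b) the least-fixed-point functional $\operatorname{fix}$ on an effectively given bounded complete domain is computable, which is the standard result of~\cite{plotkin1981post}. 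Granting these, $\operatorname{CR}_0$ is the composite of $\Psi$ (with its last argument abstracted) followed by $\operatorname{fix}$, hence computable. All computability is taken with respect to the effective structure on the underlying domain $\mathcal{R}_0(D,=)$ induced, as noted before the proposition, from that on $(D,=)$, together with the induced effective structure on $\mathcal{E}(D,=)$.

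The core of the argument is step (a), where I would show that $\Psi$ is assembled from computable primitives. The maps $h_1,h_2:\Omega\to\Omega$ are computable (the sample space carries a computable effective structure), so precomposition $-\circ h_1$ and $-\circ h_2$ on $(\Omega\to\overline{\realLine^+})$ and $(\Omega\to D)$ are computable Scott continuous operators; the projections $\pi_1,\pi_2$ and pointwise application are likewise computable. The only genuinely non-trivial ingredient is the three-way branch on whether $r\,h_1(\bfomega)$ lies in $O_1$, in $O_2$, or in neither. This branch is governed by the open sets $h_1^{-1}(r^{-1}(O_1))$ and $h_1^{-1}(r^{-1}(O_2))$ and is realized as a join of single-step functions; by Equation~\ref{way-below}, the relation $d\chi_O\ll\Psi(\cdots)$ is decidable on basis elements precisely because the effective structure on $\mathcal{E}(D,=)$ makes the way-below relation on the relevant open sets decidable. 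Concretely, on step-function arguments $(w,r)$ and $(w',r')$ and a basis event-pair $(O_1,O_2)$, the output $\Psi((w,r),(O_1,O_2),(w',r'))$ is again a step function in $\mathcal{R}_0(D,=)$, and its basic approximants can be effectively enumerated, so $b\ll\Psi(\cdots)$ is r.e.

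I expect step (a), and within it the effective treatment of the branch together with the weight component, to be the main obstacle, though it is routine rather than deep. The subtlety is that an extended random variable carries both a value component in $(\Omega\to D)$ and a weight component in $(\Omega\to\overline{\realLine^+})$, and the branch must select or zero out both components coherently; one must check that the selection and the constant-$0$ weight clause, composed with $h_1,h_2$, remain computable, and that the enumeration of approximants respects the pairing of the two components. Once $\Psi$ is shown computable on the product of the three effectively given domains, step (b) is immediate from~\cite{plotkin1981post}, completing the proof.
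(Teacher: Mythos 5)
Your proposal is correct and follows essentially the same route as the paper: both arguments reduce computability of $\operatorname{CR}_0$ to the computability of the auxiliary operator $\Psi$ (verified via decidability of the way-below relation on its step-function outputs over basis elements) and then invoke the computability of the least-fixed-point construction. You simply spell out the ingredients of step (a) in more detail than the paper does.
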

{
\begin{proof}It is easy to check that $\Psi$, as an element in the domain $\mathcal{R}_0(D,\sim) \to \mathcal{E}(E ,\sim)\to \mathcal{R}_0(D,\sim)\to \mathcal{R}_0(D,\sim)$, is computable as a composition of the computable functions used in its definition. In fact, for any basic step functions $(w,r),(w',r'),(w'',r'')\in \mathcal{R}_0(D,\sim)$, and basis element $(O_1,O_2)\in \mathcal{E}(D,\sim)$, the relation $(w'',r'')\ll \operatorname{CR}_0((w,r),(O_1,O_2),(w',r'))$ is decidable since the latter is a computable step funcition. It follows that the map $\operatorname{CR}_0$ is computable. 
\end{proof}
}
Let $w=w'={\bf 1}$, the constant function with value $1$, to obtain the computability of $ \operatorname{CR}$ of the preceding section. 

On the domain of continuous valuations, we introduce the notion of {\em conditional valuation}: 
\begin{equation}\label{cond.eval}\operatorname{CV}_0: \mathcal{V}(D,\sim)\times \mathcal{E}(D,\sim)\to \mathcal{V}(D,\sim) \end{equation}
\begin{multline*}
    \operatorname{CV}_0(\alpha,(O_1,O_2))(U) \\
 = \left\{
 \begin{array}{ll}
   \begin{array}{ll} 
 \alpha(D)& \mbox{ if } \ \ \bot \in U\\
 0 & \mbox{ if } \ \ \bot \notin U \ \wedge \alpha(O_2) = \alpha(D) \\
 \end{array} & \\
 \alpha(O_1\cap U)/(1-(\alpha(O_2)/ \alpha(D))& \mbox{otherwise}
 \end{array} \right.  
\end{multline*}
 
  
\begin{theorem} \label{CVandCP}
 The map $\operatorname{CV}_0$ is Scott continuous and computable with respect to effective structures on $\mathcal{R}_0(D,\sim)$, $\mathcal{E}(D,\sim)$ and $ \mathcal{V}(D,\sim)$, and the following diagram commutes:   
 \[ 
 \begin{tikzcd}[row sep=huge, column sep=large]
\mathcal{R}_0(D,\sim)\times \mathcal{E}(D,\sim) \arrow{r}{\Large{T_0\times \operatorname{Id}}} \arrow[swap]{d}{\operatorname{CR_0}} 
& \mathcal{V}(D,\sim)\times \mathcal{E}(D,\sim) \arrow{d}{\operatorname{CV}_0} 
\\
\mathcal{R}_0(D,\sim)\arrow{r}{{T_{0}}} 
&  \mathcal{V}(D,\sim)
\end{tikzcd}
\]
\end{theorem}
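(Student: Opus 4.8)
The plan is to prove the three assertions—Scott continuity of $\operatorname{CV}_0$, its computability, and commutativity of the square—largely separately, treating the square as the substantive part and modelling it on the proofs of Proposition~\ref{cond-rand} and Proposition~\ref{CRandCP}. For Scott continuity I would first fix $(\alpha,(O_1,O_2))$ with $\alpha(O_2)\neq\alpha(D)$ and check that $U\mapsto\operatorname{CV}_0(\alpha,(O_1,O_2))(U)$ is a continuous valuation: strictness, monotonicity and the modular law are inherited from $\alpha$ because $U\mapsto\alpha(O_1\cap U)$ is itself a continuous valuation and division by the constant $1-\alpha(O_2)/\alpha(D)$ preserves all three axioms, while the clause $\bot\in U$ (which forces $U=D$) is consistent with the quotient clause evaluated at $U=D$ exactly when the total mass is set to $\alpha(D)$. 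Monotonicity in the event argument is immediate, since enlarging $O_1$ increases the numerator and enlarging $O_2$ shrinks the denominator. The only delicate point is continuity in $\alpha$ and in $(O_1,O_2)$ across the locus where $\alpha(O_2)\uparrow\alpha(D)$, where the denominator tends to $0$; here I would verify that the separate clause assigning value $0$ (for $\bot\notin U$) is precisely the correct limit, so that no discontinuity is introduced.

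For computability I would follow the argument of Theorem~\ref{prob-comp}. It suffices to show that for a simple valuation $\alpha$ with rational weights, basic disjoint opens $(O_1,O_2)$, and any simple valuation $\beta$, the relation $\beta\ll\operatorname{CV}_0(\alpha,(O_1,O_2))$ is decidable. Since $\alpha(O_1\cap U)$, $\alpha(O_2)$ and $\alpha(D)$ are rational for simple $\alpha$, the value of $\operatorname{CV}_0(\alpha,(O_1,O_2))$ on each basic open is a rational number computable from the data, so by Proposition~\ref{kirch}(i) the way-below test reduces to finitely many comparisons of rationals and is therefore decidable.

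The commutativity of the square is the heart of the matter, and I would prove it by Scott-continuity reduction to a step extended random variable $(w,r)$, exactly as in Proposition~\ref{CRandCP}. Writing $\operatorname{CR}_0((w,r),(O_1,O_2))=(w',r')$ and $\alpha=T_0(w,r)$, I would use that the events "the first index $i$ with $r\circ h_1\circ h_2^{i}(\bfomega)\notin O_2$ equals $n$ and there $r\in O_1$" are pairwise disjoint, and that $\bfomega\mapsto\langle h_1 h_2^{i}(\bfomega)\rangle_{i\in\nat}$ is measure preserving into the product, so the coordinates are independent. Summing the resulting geometric series, as in Proposition~\ref{cond-rand}(i), yields for $\bot\notin U$ a closed form for $T_0(w',r')(U)=\nu_{w'}(r'^{-1}(U))$ whose numerator is $\alpha(O_1\cap U)$, which I would then match termwise with $\operatorname{CV}_0(\alpha,(O_1,O_2))(U)$. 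The clauses $\bot\in U$ and $\alpha(O_2)=\alpha(D)$ would be dispatched separately: in the latter case $r^{-1}(O_1)$ is $\nu_w$-null, so both sides vanish for $\bot\notin U$, irrespective of the precise denominator.

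The main obstacle I anticipate is reconciling the denominators. The sampling reading of $\operatorname{CR}_0$ rejects according to the base measure $\nu$, so the naive geometric sum produces the factor $1-\nu(r^{-1}(O_2))$, whereas $\operatorname{CV}_0$ carries the weighted, normalized factor $1-\alpha(O_2)/\alpha(D)=1-\nu_w(r^{-1}(O_2))/\nu_w(\Omega)$. Making the square commute therefore requires tracking precisely how the weight $w$ enters only at the accepted step, together with how the failure mass is handled by $T_0$ and by the $\bot$-clause of $\operatorname{CV}_0$, and checking that the mass sent to $\bot$ is accounted for consistently on both sides. This bookkeeping of the accepted weight against the normalization by $\alpha(D)$—rather than the geometric summation itself—is where the real work lies; I would isolate it by first establishing the identity on the total mass ($U=D$) and on opens $U\subseteq O_1$, and only then assembling a general open set via the modular law.
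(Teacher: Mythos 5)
Your computability argument coincides with the paper's: reduce to a simple valuation $\alpha$ with rational weights and basic disjoint opens, note that $\operatorname{CV}_0(\alpha,(O_1,O_2))$ then takes only finitely many rational values, and decide $\beta\ll\operatorname{CV}_0(\alpha,(O_1,O_2))$ via Proposition~\ref{kirch}(i) by finitely many rational comparisons. Your continuity sketch is also consistent with the paper, which simply declares both continuity and commutativity ``straightforward'' and gives no further detail.

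The gap is in the commutative square, and it sits exactly where you say the real work lies --- but the work is not merely bookkeeping, and your proposal stops short of it. Test the square on an open $U$ with $\bot\notin U$, where neither the $\bot$-clause of $\operatorname{CV}_0$ nor the mass sent to $(0,\bot)$ can intervene. The disjoint-events/geometric-series computation you describe, applied to the definition of $\operatorname{CR}_0$ in the paper (rejected coordinates are tested against $O_2$ under the base measure $\nu$ and contribute no weight factor; only the accepted coordinate carries $w$), gives
\[
T_0\bigl(\operatorname{CR}_0((w,r),(O_1,O_2))\bigr)(U)\;=\;\frac{\nu_w\bigl(r^{-1}(U\cap O_1)\bigr)}{1-\nu\bigl(r^{-1}(O_2)\bigr)},
\]
whereas the other path around the square gives
\[
\operatorname{CV}_0\bigl(T_0(w,r),(O_1,O_2)\bigr)(U)\;=\;\frac{\nu_w\bigl(r^{-1}(U\cap O_1)\bigr)}{1-\nu_w\bigl(r^{-1}(O_2)\bigr)/\nu_w(\Omega)}.
\]
The two denominators agree only when the $\nu$-probability of $r^{-1}(O_2)$ equals its normalized $\nu_w$-probability --- for instance when $w$ is constant, which is exactly the unweighted case already covered by Propositions~\ref{cond-rand} and~\ref{CRandCP}. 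So the discrepancy you anticipate cannot be absorbed by tracking the failure mass at $\bot$ or by first checking $U=D$: on opens avoiding $\bot$ both sides are fully determined and differ for non-constant $w$. To close the proof you must either exhibit a reason the two denominators coincide under the standing hypotheses (I do not see one), or argue from a reading of $\operatorname{CR}_0$ in which the rejection loop is itself weighted, i.e.\ the weights of the rejected samples enter multiplicatively and the trials are effectively drawn from $\nu_w/\nu_w(\Omega)$ rather than from $\nu$. As written, your argument establishes the square only for constant $w$, and a complete proof has to say explicitly how the weighted, normalized denominator $1-\nu_w(r^{-1}(O_2))/\nu_w(\Omega)$ is produced by the sampling semantics.
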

{
\begin{proof}
 The Scott continuity and the commutativity of the diagram are straightforward. Note that for a simple valuation $\alpha$, the outpout $\operatorname{CV}_0(\alpha,(O_1,O_2))$ takes only a finite number of rational numbers. The computability of $\operatorname{CV}_0$ , for effective structures on  $\mathcal{V}(D,\sim)$ and $\mathcal{E}(D,\sim)$ follows, similar to Proposition~\ref{prob-comp}, for $\operatorname{CP}$. This time the decidability of the relation in Proposition~\ref{kirch}(i) for continuous valuations is required, which follows again since there are only a finite number of comparisons between rational numbers to check. 
\end{proof}
}
In the following, we consider pairs of dependent random variables and the conditional probability of one with respect to the values assumed by the other. For such cases it is convenient to use a derived notion of $\operatorname{CR}_0$ as follows:
\[\operatorname{CR}_0':\mathcal{R}_0(D \times E, \sim)\times \mathcal{E}(E,\sim)\to \mathcal{R}_0 (D,\sim)\]
$\operatorname{CR}_0' ( (w, (r,s)), (O, O')) = (w, r')$ \ \  if \\ $\operatorname{CR_0} ((w, (r,s)), (D \times O, D \times O')) = (w, (r', s')).$

\section{Bayesian updating}

In this section, we show that the conditional probability can be evaluated using the Bayes rules, on the hypothesis that the probability density of the likelihood is known.

In our domain setting, we first formulate the notions of density function and conditional density function.
Given a PER domain $(D, \sim)$, with a continuous valuation $\alpha$, and an extended random variable $(w, r) \in \mathcal{R}_0 (D, \sim)$, a \emph{density function}, $f_{(w, r)}$, for $(w, r)$, with respect to $\alpha$, is a continuous function in $(D, \sim) \to (\overline{\realLine^+}, =)$ such that for every R-open set $O$ in the $(D, \sim)$ one has:
\[ T_0(w, r)(O) = \int_{O}  f_{(w,r)} \,  d\alpha \] 

Note that not  every extended random variable $(w,r)$ has a corresponding density function.  In particular, no simple random variable on $\realDom$, with the standard continuous valuation, has a density function.  
Moreover, any function $f'$ that coincide with $f_{(w,r)}$ on a subset of  full measure is also a density function for $(w, r)$.

Given an extended random variable $(w,(r,s))$ on the domain $(D \times E, \sim)$ with a continuous valuation $\alpha$, a \emph{density function of the conditional probability} $f_{(w,\, r \mid s)}$, with respect to $\alpha$, is a continuous function $(D \times E) \to \overline{\realLine^+}$ for which there exists a subset $S \subseteq (D \times E)$ of full measure such that for any pair $x, y \in S$, if $f_{(w, s)}(y) \neq 0$, the following equality holds:
 
\[f_{(w,\, r | s)}(x,y) = \frac{f_{(w, (r,  s))}(x,y)}{f_{(w,s)}(y)}\]

We present an example of how classical conditional density functions are translated into our domain theory setting.
\begin{example}

Consider the case where $D = E = (\realDom, =)$, the PER domain of compact intervals with equality of real numbers as PER, and the continuous valuation $\alpha$ on $\realDom$ defined by $\alpha(O) = \lambda(O \cap \realLine)$, where, with an abuse of notation, we denote by $\realLine$ the set of maximal elements of $\realDom$, and $\lambda$ is the Lebesgue measure on the real line. By definition, the set $\realLine$ has full measure.

Given a pair of classical continuous random variables $X, Y : \Omega \to \realLine$, and a pair of continuous functions $f_X, f_{X|Y} : \realLine \to [0,\infty)$ that are, respectively, the probability density function of $X$ and the conditional probability density function of $X$ given $Y$, the pair $(\mathbf{1}, X)$ and $(\mathbf{1}, Y)$, where $\mathbf{1}: \Omega \to \overline{\realLine^+}$ is the constant function $1$, defines a pair of extended random variables on $\realDom$. 

Any continuous extension $f_{(\mathbf{1}, X)}, f_{(\mathbf{1}, X|Y)} : \realDom \to ([0,\infty),\leq)$ of $f_X, f_{X|Y}$ defines the density function of $(\mathbf{1}, X)$ and the conditional density function of $(\mathbf{1}, X)$ given $(\mathbf{1}, Y)$, respectively. The function $f_{(\mathbf{1}, X)}$ can be chosen as the maximal continuous extension of $f_X$, defined by:
\[
f_{(\mathbf{1}, X)}([a, b]) = \min \{ f_X(z) \mid z \in [a, b] \},
\]
and similarly for $f_{(\mathbf{1}, X|Y)}$.
\end{example}

\remove{-------------------------------

Alternative to the above definition (where E(D) denotes the domain of disjoint pairs of open sets of D:

 The \emph{density function of the conditional probability} $f_{(w,\, r | s)}:D_1\times \mathcal{E}(D_2)\to \overline{\realLine^+}$ is given by

 \[f_{(w,\, r | s)}(x,(O_1,O_2)) 
 = \frac{\int_{O_1}f_{(w, (r,  s))}(x, \_)\, d \alpha_E}
    {\int_\Omega w\,d\nu - \int_{O_2} f_{(w, s)} d \alpha_E}  \]

    where $\alpha_{E}(O)$ is the marginal measure define by $\alpha_{E}(O) = \alpha(D \times O)$.
-------------------------------------}

Next, we relate the conditional probability density function $f_{(w,\, r | s)}$ to the
conditional probability map $\operatorname{CR}_0'$.

\begin{lemma} \label{CR-PDF}
    Given $(w, (r, s)) \in  \mathcal{R}_0 ( D \times E, \sim)$ and a continuous valualtion $\alpha$ on $( D \times E, \sim)$, if the density functions ${f_{(w, (r,  s))}(x,y)}, {f_{(w,s)}(y)}$ exists, then:
    \begin{multline*}
    T_0 (\operatorname{CR}_0' (w,(r,s)),  (O_1, O_2))(O) = \\
    \frac{\int_{O \times O_1} f_{(w,\, (r, s))} \, d\alpha}{1 - (\int_{O_2} f_{(w, s)}\, d\alpha_{E} / \int_\Omega  w \, d \nu )}    
    \end{multline*}
\end{lemma}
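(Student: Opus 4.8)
The plan is to route the computation through the commuting square of Theorem~\ref{CVandCP} rather than unrolling the rejection-sampling series of $\operatorname{CR}_0$ by hand: once $T_0\circ\operatorname{CR}_0$ has been identified with $\operatorname{CV}_0\circ(T_0\times\operatorname{Id})$, the lemma becomes a matter of marginalising to the $D$-component and rewriting the resulting valuation values as integrals of the two given densities. First I would unfold the definition of $\operatorname{CR}_0'$: by construction $\operatorname{CR}_0'((w,(r,s)),(O_1,O_2))=(w',r')$ is the $D$-part of $\operatorname{CR}_0((w,(r,s)),(D\times O_1,D\times O_2))=(w',(r',s'))$. Since $s'(\bfomega)\in E$ always holds, $r'^{-1}(O)=(r',s')^{-1}(O\times E)$, whence $T_0(w',r')(O)=T_0(w',(r',s'))(O\times E)$; the left-hand side of the lemma is thus the value of a valuation on $D\times E$ at the cylinder $O\times E$.

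Next I would apply Theorem~\ref{CVandCP} on the product PER domain $D\times E$ with the conditioning pair $(D\times O_1,D\times O_2)\in\mathcal{E}(D\times E,\sim)$. Setting $\beta:=T_0(w,(r,s))=\nu_w\circ(r,s)^{-1}$, the diagram gives $T_0(w',(r',s'))=\operatorname{CV}_0(\beta,(D\times O_1,D\times O_2))$. Restricting to the generic case in which $\bot_D\notin O$ (so that $\bot_{D\times E}\notin O\times E$) and $\nu_w(s^{-1}(O_2))<\nu_w(\Omega)$ -- so that neither the $\bot$-branch nor the vanishing branch of $\operatorname{CV}_0$ is triggered -- the defining clause of $\operatorname{CV}_0$ at $O\times E$ reads $\beta(O\times O_1)/\bigl(1-\beta(D\times O_2)/\beta(D\times E)\bigr)$, where I have used $(D\times O_1)\cap(O\times E)=O\times O_1$.

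Finally I would convert the three valuation values into the asserted integrals. The numerator is immediate from the density property of $f_{(w,(r,s))}$ with respect to $\alpha$, namely $\beta(O\times O_1)=T_0(w,(r,s))(O\times O_1)=\int_{O\times O_1}f_{(w,(r,s))}\,d\alpha$; the total mass is $\beta(D\times E)=\nu_w(\Omega)=\int_\Omega w\,d\nu$. The substantive identity is the middle one: as $(r,s)^{-1}(D\times O_2)=s^{-1}(O_2)$, the cylinder value $\beta(D\times O_2)$ is the $E$-marginal pushforward $T_0(w,s)(O_2)$, and by the hypothesis that $f_{(w,s)}$ is a density for $(w,s)$ relative to the marginal valuation $\alpha_E(\,\cdot\,)=\alpha(D\times\,\cdot\,)$ this equals $\int_{O_2}f_{(w,s)}\,d\alpha_E$. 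Substituting the three expressions produces exactly the claimed formula.

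The main obstacle is this middle identification. One must check that the $E$-marginal of the product pushforward $\beta$ genuinely coincides with $T_0(w,s)$ and that $f_{(w,s)}$ is its density with respect to $\alpha_E$ rather than to some unrelated reference valuation; this is precisely where the standing hypothesis that $f_{(w,s)}$ exists (as a density against the marginal of $\alpha$) is used. Everything else -- the reduction of $\operatorname{CR}_0'$ to a cylinder of $\operatorname{CR}_0$, the selection of the non-degenerate branch of $\operatorname{CV}_0$ for $O\times E$, and the translation of $\beta(O\times O_1)$ and $\beta(D\times E)$ -- is routine bookkeeping once Theorem~\ref{CVandCP} is invoked.
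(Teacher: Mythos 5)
Your proposal follows essentially the same route as the paper's proof: reduce $\operatorname{CR}_0'$ to $\operatorname{CR}_0$ on the cylinder pair $(D\times O_1, D\times O_2)$ evaluated at $O\times E$, invoke the commuting square of Theorem~\ref{CVandCP} to replace $T_0\circ\operatorname{CR}_0$ by $\operatorname{CV}_0\circ(T_0\times\operatorname{Id})$, and then translate the three valuation values $\beta(O\times O_1)$, $\beta(D\times O_2)$ and $\beta(D\times E)$ into the stated integrals via the density hypotheses and the marginal identity $T_0(w,(r,s))(D\times O_2)=T_0(w,s)(O_2)$. Your added care about which branch of $\operatorname{CV}_0$ applies is a harmless refinement of what the paper leaves implicit.
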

{
\begin{proof}
    The following chain of equations holds:

    \begin{align*}
      &  \ T_0 (\operatorname{CR}_0' (w,(r,s)), (O_1, O_2))(O) \\
    =  & \   T_0 (\operatorname{CR}_0 (w,(r,s)), (D\times O_1, D\times O_2))(O \times E) \hfill \ \ \ \\ & \ \text{by Theorem \ref{CVandCP}}\\
    =  &  \ \operatorname{CV}_0 (T_0(w, (r,s)),   (D\times O_1, D\times O_2))(O \times E)\\
    =  & \ \frac{T_0 (w,(r,s))((O \times E) \cap (D\times O_1) )}{1 - (T_0 (w,(r,s))(D \times O_2)/ (T_0 (w,(r,s))(D \times E))}   \\
    =  &  \ \frac{T_0 (w,(r,s))(O \times O_1)}{1 - (T_0 (w,s)(O_2) / \int_\Omega  w \, d \nu )}  \\
    =  &  \  \frac{\int_{O \times O_1} f_{(w,\, (r, s))}\, d\alpha}{1 - (\int_{O_2} f_{(w, s)}\, d\alpha_E / \int_\Omega  w \, d \nu )} 
    \end{align*}
\end{proof}
}


For a real number $a\in \realLine$ and $\epsilon>0$, we use the notation $a\pm\epsilon:=[a-\epsilon,a+\epsilon]$

{
\begin{lemma}\label{int-cond-rand-var}
For any extended random variable $(w, (r, s)) \in  \mathcal{R}_0 ( D \times E, \sim)$, continuous valuation $\alpha$ on $( D \times E, \sim)$, conditional density function ${f_{(w, (r |  s))}(x,y)}$, open set $O \subseteq D$ and element $e \in E$,  such that 
$f_{(w, s)}(e) \neq 0$,
we have:
\begin{enumerate}
    \item  if $\alpha_E(\dua e) \neq 0$ and $\alpha_E(\dua e \cap ((\dua e )^c)^\circ)^c = 0$ then 
\begin{multline*}
  \int_{O} f_{(w,\, r | s)}(x,e) d \alpha_D(x) \\
  \leq (T_0 (\operatorname{CR}' (w,(r,s)),  (\dua e, ((\dua e )^c)^\circ) (O)) / \int_\Omega  w \, d \nu
\end{multline*} 
\item 
if

\begin{itemize}
\item
the conditional density function $f_{(w, r | s)}$ is bounded on the subset $O \times \{e\}$,
\item
there exists a subset $M$ of full measure in $(D \times E)$, such that the restriction of the conditional density function on $M$, i.e., $f_{(w, r | s)}|_M : M \to {\realLine^+}$, is continuous with respect to the Euclidean topology on ${\realLine^+}$,
\item
there exists a base $B$ for $E$ such that $\forall b \in B \,.\, \alpha_E(\dua b \cap ((\dua b )^c)^\circ)^c = 0$ 
\end{itemize}
then, for any $\epsilon > 0$ there exists $b \in B$ such that: $e \in \dua b$

\begin{multline*}
T_0 (\operatorname{CR}' (w,(r,s)),  (\dua b, ((\dua b )^c)^\circ) (O) / \int_\Omega  w \, d \nu  \\
  \in \left( \int_{O} f_{(w,\, r | s)}(x,e) d \alpha_D(x) \right)\pm \epsilon
\end{multline*}     
\end{enumerate}

\end{lemma}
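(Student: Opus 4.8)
The plan is to handle both items through one common reduction of the left-hand quantity to a \emph{weighted average} of the conditional density, and only then to exploit monotonicity (for item~1) and continuity (for item~2). First I would apply Lemma~\ref{CR-PDF} with $O_1=\dua e$ and $O_2=((\dua e)^c)^\circ$ to write
\[T_0(\operatorname{CR}_0'(w,(r,s)),(\dua e,((\dua e)^c)^\circ))(O)=\frac{\int_{O\times\dua e}f_{(w,(r,s))}\,d\alpha}{1-\left(\int_{((\dua e)^c)^\circ}f_{(w,s)}\,d\alpha_E\big/\int_\Omega w\,d\nu\right)}.\]
Writing $W:=\int_\Omega w\,d\nu=T_0(w,s)(E)=\int_E f_{(w,s)}\,d\alpha_E$, the boundary-null hypothesis (that the complement of $\dua e\cup((\dua e)^c)^\circ$ is $\alpha_E$-null) forces $\int_{((\dua e)^c)^\circ}f_{(w,s)}\,d\alpha_E=W-\int_{\dua e}f_{(w,s)}\,d\alpha_E$, so the denominator collapses and, after dividing by $W$, the quantity equals $\big(\int_{O\times\dua e}f_{(w,(r,s))}\,d\alpha\big)\big/\big(\int_{\dua e}f_{(w,s)}\,d\alpha_E\big)$. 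Using the product structure $\alpha=\alpha_D\otimes\alpha_E$ and Fubini, the numerator factorises as $\int_{\dua e}g(y)\,f_{(w,s)}(y)\,d\alpha_E(y)$, where $g(y):=\int_O f_{(w,r|s)}(x,y)\,d\alpha_D(x)$, since $f_{(w,(r,s))}(x,y)=f_{(w,r|s)}(x,y)\,f_{(w,s)}(y)$. Hence the whole expression is exactly the $f_{(w,s)}\,d\alpha_E$-weighted average of $g$ over $y\in\dua e$.

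For item~(1) I would then invoke monotonicity. Because $f_{(w,r|s)}$ is Scott continuous, hence monotone, and every $y\in\dua e$ satisfies $e\sqsubseteq y$, we obtain $f_{(w,r|s)}(x,e)\le f_{(w,r|s)}(x,y)$ for all $x$, so $g(e)\le g(y)$ for every $y\in\dua e$. A weighted average of a function that is pointwise at least $g(e)$ is itself at least $g(e)$, and since $g(e)=\int_O f_{(w,r|s)}(x,e)\,d\alpha_D(x)$, this is precisely the asserted inequality.

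For item~(2) I would replace $e$ by $\dua b$ with $b\in B$ and $b\ll e$ (so that $e\in\dua b$), run the identical reduction — here the hypothesis $\alpha_E((\dua b\cup((\dua b)^c)^\circ)^c)=0$ is what lets the denominator collapse — to obtain the weighted average $A(b)$ of $g$ over $\dua b$. I would then estimate $|A(b)-g(e)|\le\sup\{|g(y)-g(e)|:y\in\dua b\cap M\}$, using that the weighting measure is carried by the full-measure set $M$. Since $\{b\in B:b\ll e\}$ is directed with supremum $e$, the basic neighbourhoods $\dua b$ shrink to $e$; combining the continuity of $f_{(w,r|s)}|_M$ with the boundedness on $O\times\{e\}$ (to dominate the $x$-integral defining $g$) gives continuity of $g$ at $e$, so this supremum tends to $0$. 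Choosing $b$ high enough then yields $|A(b)-g(e)|<\epsilon$, which is the claim.

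I expect the genuine obstacle to lie in this final convergence for item~(2): bridging the \emph{domain-theoretic} shrinking of the basic opens $\dua b$ to the \emph{Euclidean} continuity of $f_{(w,r|s)}|_M$, and upgrading the pointwise boundedness on $O\times\{e\}$ to a dominating bound on a neighbourhood so that dominated convergence applies to $y\mapsto\int_O f_{(w,r|s)}(x,y)\,d\alpha_D(x)$. By contrast, the reduction of both parts to a weighted average and the monotonicity argument of item~(1) should be routine once Lemma~\ref{CR-PDF} and the boundary-null hypothesis are in hand.
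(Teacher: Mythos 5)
Your reduction of both items to a weighted average of $g(y)=\int_O f_{(w,\,r|s)}(x,y)\,d\alpha_D(x)$ over $\dua e$ (resp.\ $\dua b$) against the measure $f_{(w,s)}\,d\alpha_E$ --- via Lemma~\ref{CR-PDF}, the boundary-null hypothesis collapsing the denominator to $\int_{\dua e}f_{(w,s)}\,d\alpha_E$, Fubini, and the factorisation $f_{(w,(r,s))}=f_{(w,\,r|s)}\cdot f_{(w,s)}$ --- is exactly the computation in the paper, and your monotonicity argument for item (1) is the paper's inequality step verbatim. So the first half of your proposal coincides with the paper's proof.

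For item (2) the paper's final step is different, and the divergence occurs precisely where you flag the obstacle. You propose continuity of $g$ at $e$ via dominated convergence and then shrinking $\dua b$; but the hypotheses bound $f_{(w,\,r|s)}$ only on the slice $O\times\{e\}$, not on $O\times\dua b_0$ for any neighbourhood, so no dominating function is available and that step does not go through as stated. (A second, smaller imprecision: $\bigcap\{\dua b:b\ll e\}$ is $\{y:e\sqsubseteq y\}$, not $\{e\}$, so the ``shrinking'' is only harmless after restricting to the full-measure set $M$.) The paper instead argues by a finite cover on the double integral: with $\epsilon'=\epsilon/(3\alpha_D(O))$, joint continuity of $f_{(w,\,r|s)}|_M$ gives for each $d\in O$ a neighbourhood $O_i$ and a basis element $b_i\ll e$ with oscillation at most $\epsilon'$ on $(O_i\times\dua b_i)\cap M$; finitely many $O_i$ are chosen to cover $O$ up to $\alpha_D$-measure $\epsilon/(3k)$, where $k$ bounds $f_{(w,\,r|s)}$ on $O\times\{e\}$, and one sets $b=\bigsqcup_{i\in I'}b_i$. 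The slice bound $k$ plus the local $\epsilon'$-control then estimate both the covered part and the small leftover part of $O$, with no dominating function on a full neighbourhood ever required. To repair your route you would have to manufacture such a dominating bound from the local continuity neighbourhoods, which is essentially the paper's cover argument in disguise; as written, your item (2) has a genuine gap.
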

\begin{proof}
Point (i) is proved by the following chain of relations:
\begin{align*}
  & T_0 (\operatorname{CR}_0' (w,(r,s)),  (\dua b, ((\dua b )^c)^\circ)(O)) / \int_\Omega  w \, d \nu
\\
= &\frac{\int_{O \times \dua b} f_{(w,\, (r, s))}(x,y) \, d\alpha_D(x) d\alpha_E(y)}
    {\int_\Omega  w \, d \nu - \int_{((\dua b )^c)^\circ} f_{(w, s)}(y)\, d\alpha_E(y) } 
\\
= &\frac{\int_{O} (\int_{\dua b} f_{(w,\, (r \mid s))}(x,y) \cdot f_{(w, s)}(y)\, d\alpha_E(y)) \, d\alpha_D(x)}
    {\int_{\dua b} f_{(w, s)}(y) \, d\alpha_E(y)}  
    \\
\geq &\frac{\int_{O} (\int_{\dua b} f_{(w,\, (r \mid s))}(x,b) \cdot f_{(w, s)}(y)\, d\alpha_E(y)) \,  d\alpha_D(x)}
    {\int_{\dua b} f_{(w, s)}(y) \, d\alpha_E(y)} 
    \\
= &\int_{O}  f_{(w,\, (r \mid s))}(x,b) \cdot  \frac{\int_{\dua b}f_{(w, s)}(y) \, d\alpha_E(y)}
    {\int_{\dua b} f_{(w, s)}(y) \, d\alpha_E(y)} \, d\alpha_D(x)
    \\
= &\int_{O}  f_{(w,\, (r \mid s))}(x,b) \,  d \alpha_D(x) \\
\end{align*}

 Point (ii), let $\epsilon' = {\epsilon}/{(3 \cdot \alpha_D(O))}$, by continuity, for any $d \in O$ there exists a neighbourhood, $O_i$, of $d$,  and a base approximation $b_i$ of $e$, such that $\forall (d', e') \in (O_i \times \dua b_i) \cap S \,.\, | f_{(w,\, (r \mid s))}(d',e) - f_{(w,\, (r \mid s))}(d',e')| \leq \epsilon'$. Let $k > 1$ be a bound for the set of real values $f_{(w,\, (r \mid s))}[(O, e)]$, and let $I'$ be a finite subset of indexes such that, having defined $O' = \bigcup_{i\in I'} O_i$ one has $\alpha_D(O) - \alpha_D(O') < {\epsilon}/{(3 \cdot k)}<\epsilon/3$, let $b = \bigsqcup_{i \in I'}b_i$. By repeating the first steps in the previous chain of equations, we have: 

\begin{align*}
  & T_0 (\operatorname{CR}_0' (w,(r,s)),  (\dua b, ((\dua b )^c)^\circ)(O)) / \int_\Omega  w \, d \nu
\\
= &\frac{\int_{O} (\int_{\dua b} f_{(w,\, (r \mid s))}(x,y) \cdot f_{(w, s)}(y)\, d\alpha_E(y)) \, d\alpha_D(x)}
    {\int_{\dua b} f_{(w, s)}(y) \, d\alpha_E(y)}  
    \\
\in &\frac{\int_{O'} (\int_{\dua b} f_{(w,\, (r \mid s))}(x,e) \cdot f_{(w, s)}(y)\, d\alpha_E(y)) \,  d\alpha_D(x)}
    {\int_{\dua b} f_{(w, s)}(y) \, d\alpha_E(y)}  \pm 2 \epsilon /3
    \\
= &\int_{O'}  f_{(w,\, (r \mid s))}(x,e) \,  d \alpha_D(x) \pm 2 \epsilon /3 
\\
\subseteq  &\int_{O}  f_{(w,\, (r \mid s))}(x,e) \,  d \alpha_D(x) \pm \epsilon \\
\end{align*}

\end{proof} 
}

{
In the case where $E$ is the domain of real numbers $\realDom$, or a finite product of $\realDom$, and the measure $\alpha_E$ is the measure inherited from the Lebesgue measure on $\mathbb{R}$, the conditions in point (ii) are satisfied: the set of maximal elements has full measure and $\alpha_E(\dua b \cap ((\dua b )^c)^\circ)^c = 0$ is satisfied by any element $b$ in $E$. It follows that the condition also holds if the measure $\alpha_E$ is absolutely continuous with respect to the measure on the finite products of $\mathbb{R}$ inherited from the Lebesgue measure.
}

{
\begin{lemma}\label{relative-density}
For any extended random variable $(w, r) : \Omega \to (D, \sim)$ and a continous function $f (D, \sim ) \to \overline{\realLine^+ }$, the following equality among continuous valuations holds:
\[
 \, F (T_0(w,r)) \, f = T_0( (w \cdot (f \circ r)), \ r)
\]
\end{lemma}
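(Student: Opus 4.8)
The plan is to prove the identity by showing that the two continuous valuations take the same value on every open set $O\subseteq D$; since a continuous valuation is determined by its values on open sets, this suffices. Unfolding the right-hand side from the definitions of $T_0$ and of $\nu_w$, its value on $O$ is
\[
T_0\bigl(w\cdot(f\circ r),\,r\bigr)(O)=\nu_{w\cdot(f\circ r)}\bigl(r^{-1}(O)\bigr)=\int_{r^{-1}(O)}(f\circ r)\cdot w\;d\nu .
\]
On the left-hand side $T_0(w,r)=r_*\nu_w$ is the push-forward of $\nu_w$ along $r$, so the whole problem reduces to the single integral identity
\[
\bigl(F(T_0(w,r))\,f\bigr)(O)=\int_{r^{-1}(O)}(f\circ r)\cdot w\;d\nu .
\]

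First I would treat the case in which $T_0(w,r)$ is $\sigma$-finite. Then Proposition~\ref{functionalF}(i) gives $F(T_0(w,r))\,f=\lambda O.\int_O f\,d(T_0(w,r))$, so the left-hand side equals $\int_O f\,d(r_*\nu_w)$. Applying the change-of-variables (transfer) formula for push-forward measures --- which holds for indicators by the very definition of the push-forward and extends to all measurable $f$ by linearity and monotone convergence --- rewrites this as $\int_{r^{-1}(O)}(f\circ r)\,d\nu_w$. Finally, the Radon--Nikodym relation $d\nu_w=w\,d\nu$, namely Equation~\ref{rel-density-int} with $\nu_2=\nu_w$, $\nu_1=\nu$ and derivative $w$, turns this into $\int_{r^{-1}(O)}(f\circ r)\cdot w\,d\nu$, which is exactly the target. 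This settles the $\sigma$-finite case.

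The main obstacle is that $F$ applied to a general valuation is defined by the conservative extension of Equation~\ref{cons-ex}, while the integral representation of $F$ is only guaranteed for $\sigma$-finite valuations, and an unbounded weight $w$ can make $T_0(w,r)$ fail to be $\sigma$-finite. I would close this gap by a Scott-continuity argument: both sides of the asserted identity are Scott continuous in the triple $(w,r,f)$ --- the left by Scott continuity of $F$ (Proposition~\ref{functionalF}(i)) composed with that of $T_0$, the right by continuity of $T_0$ together with pointwise multiplication and composition --- so it is enough to verify the identity when $w$, $r$ and $f$ range over basis step functions. For step functions $T_0(w,r)$ is a simple valuation, $F$ is given by the explicit finite sum of Equation~\ref{int-on-basis}, and $\int_{r^{-1}(O)}(f\circ r)\cdot w\,d\nu$ collapses to a finite sum over the cells on which $r$, $w$ and $f$ are constant; a routine comparison shows the two finite sums coincide. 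The general identity then follows by taking suprema along bases of step functions approximating $w$, $r$ and $f$.
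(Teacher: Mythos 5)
Your argument follows the same chain of equalities as the paper's proof: the integral representation of $F$ from Proposition~\ref{functionalF}(i), the change-of-variables formula for the push-forward measure, and then Equation~\ref{rel-density-int} with $g=f\circ r$. Your additional Scott-continuity/step-function argument for the case where $T_0(w,r)$ fails to be $\sigma$-finite is a legitimate refinement that the paper's proof silently skips --- the paper invokes the integral form of $F$ without verifying $\sigma$-finiteness --- so your version is, if anything, the more careful of the two.
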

\begin{proof}
Lemma~\ref{relative-density} follows by letting $g=f\circ r$ in Equation~\ref{rel-density-int}. In fact: 
$$
\begin{aligned}
& F \, (T_0(w,r)) \, f \, O = F \, (r^*\nu_{w}) \, f \, O \ = \ \int_O f \, d (r^*\nu_{w}) \\
& = \int_{r^{-1}(O)} f \circ  r \, d \nu_w \ = \ \int_{r^{-1}(O)}(f \circ r) \cdot w \, d\nu  \\
& = T_0( (w \cdot (f \circ r)), \ r) O
\end{aligned}
$$
\end{proof}
}
There are two different approaches to evaluating conditional probability: a general approach that uses the $\operatorname{CR}'_0$ operator, discharging the traces of computation where the condition is not satisfied, and a second, more efficient approach that applies Bayes' rule. The latter assumes that the dual conditional density function is known and uses the score operator.
The following theorem states that these two approaches yield consistent results.
\begin{theorem} \label{bayes-rule}
Given an extended random variable $(w, (r, s)) \in  \mathcal{R}_0 (D \times E, \sim)$  such  that the dual random variable  $(w, (s, r)) \in  \mathcal{R}_0 (E \times D, \sim)$, has a conditional density function  $f_{(w, s | r)}$,  we have that, for any  value $e \in E$, such that $f_{(w,s)(e) \neq 0}$ we have that 
\begin{enumerate}
 \item  if $\alpha_E(\dua e) \neq 0$ and $\alpha_E(\dua e \cap ((\dua e )^c)^\circ)^c = 0$ then 
\begin{multline*}
  T_0 ((\lambda \bfomega \, .\,  w(\bfomega) \cdot f_{(w, s | r)}(b, r(\bfomega) ) / f_{(w,s)} (b)), r) \\
  \sqsubseteq T_0 (\operatorname{CR}_0' (w,(r,s)),  (\dua e, ((\dua e )^c)^\circ)  / \int_\Omega  w \, d \nu
\end{multline*} 
\item 
if 
\begin{itemize}
\item
the conditional density function $f_{(w, r | s)}$ is bounded on the subset $O \times {e}$,
\item
there exists a subset $M$ of full measure in $(D \times E)$, such that the restriction of the conditional density function to $M$, i.e., $f_{(w, r | s)}|_M : M \to \overline{\realLine^+}$, is continuous with respect to the Euclidean topology on $\overline{\realLine^+}$,
\item
there exists a base $B$ for $E$ such that $\forall b \in B \,.\, \alpha_E(\dua b \cap ((\dua b )^c)^\circ)^c = 0$ 
\end{itemize}
then, for any $\epsilon > 0$ there exists $b \in B$ such that: $e \in \dua b$ and     
\begin{multline*}
  T_0 (\operatorname{CR}_0' (w,(r,s)),  (\dua b, ((\dua b )^c)^\circ) (O)) / \int_\Omega  w \, d \nu 
  \\
 \in 
    T_0 ((\lambda \bfomega \, .\,  w(\bfomega) \cdot f_{(w, s | r)}(e, r(\bfomega) ) / f_{(w,s)} (b)), r)(O)
     \pm \epsilon
\end{multline*}     
\end{enumerate}
\end{theorem}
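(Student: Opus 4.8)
The plan is to prove both parts by showing that the two competing expressions reduce to a common quantity, namely the integral $\int_O f_{(w,\, r\mid s)}(x,e)\,d\alpha_D(x)$ of the conditional density of $r$ given the observation $s=e$ over the target open set $O$. The rejection-based side, governed by $\operatorname{CR}_0'$, is already pinned to this integral by Lemma~\ref{int-cond-rand-var}; the real work is to show that the score-based side equals (part~(i)) or approximates (part~(ii)) the same integral. This last reduction is precisely Bayes' rule rephrased at the level of densities, with Lemma~\ref{relative-density} supplying the semantics of the score reweighting.

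First I would unfold the score-based term with Lemma~\ref{relative-density}, taking $f=\lambda x.\, f_{(w,s\mid r)}(e,x)/f_{(w,s)}(b)$, which yields
\[
T_0\bigl((\lambda\bfomega.\, w(\bfomega)\cdot f_{(w,s\mid r)}(e,r(\bfomega))/f_{(w,s)}(b)),\, r\bigr)(O)=\int_O \frac{f_{(w,s\mid r)}(e,x)}{f_{(w,s)}(b)}\,d\bigl(T_0(w,r)\bigr)(x).
\]
Writing the marginal valuation as $d(T_0(w,r))=f_{(w,r)}\,d\alpha_D$ and then applying, on a common full-measure subset, the defining identity of the dual conditional density $f_{(w,s\mid r)}(e,x)\cdot f_{(w,r)}(x)=f_{(w,(s,r))}(e,x)$, the symmetry of the joint density $f_{(w,(s,r))}(e,x)=f_{(w,(r,s))}(x,e)$, and the definition $f_{(w,(r,s))}(x,e)=f_{(w,r\mid s)}(x,e)\cdot f_{(w,s)}(e)$, the integrand collapses to $f_{(w,r\mid s)}(x,e)\cdot\bigl(f_{(w,s)}(e)/f_{(w,s)}(b)\bigr)$. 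Thus the score-based side equals $\bigl(f_{(w,s)}(e)/f_{(w,s)}(b)\bigr)\int_O f_{(w,r\mid s)}(x,e)\,d\alpha_D(x)$: reweighting the prior marginal $f_{(w,r)}$ by the likelihood $f_{(w,s\mid r)}(e,\cdot)$ recovers the posterior $f_{(w,r\mid s)}(\cdot,e)$ up to the normalising ratio $f_{(w,s)}(e)/f_{(w,s)}(b)$.

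With this reduction in hand, part~(i) is immediate: evaluating and normalising at the same point $e$ makes the ratio equal to $1$, so the score-based side equals exactly $\int_O f_{(w,r\mid s)}(x,e)\,d\alpha_D(x)$, and Lemma~\ref{int-cond-rand-var}(i) states precisely that this integral is bounded above by the rejection-based side, giving the claimed $\sqsubseteq$. For part~(ii) I would fix $\epsilon>0$, apply Lemma~\ref{int-cond-rand-var}(ii) to obtain a basis element $b$ with $e\in\dua b$ for which the rejection-based quantity lies within $\epsilon/2$ of $\int_O f_{(w,r\mid s)}(x,e)\,d\alpha_D(x)$, and then shrink $b$ further so that continuity of $f_{(w,s)}$ on the full-measure set forces the ratio $f_{(w,s)}(e)/f_{(w,s)}(b)$ within a factor guaranteeing that the score-based side is also within $\epsilon/2$ of the same integral; the triangle inequality then delivers the stated $\pm\epsilon$ membership.

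The main obstacle I anticipate is exactly this normalisation mismatch between the evaluation point $e$ and the approximant $b$ in the denominator $f_{(w,s)}(b)$: controlling the error introduced by replacing $e$ by $b$ inside the integrand is what requires the boundedness of $f_{(w,r\mid s)}$ on $O\times\{e\}$ and the Euclidean-topology continuity of the density restricted to the full-measure set $M$, exactly the hypotheses listed in part~(ii). A secondary subtlety is that all densities are fixed only up to $\alpha$-null sets, so the three density identities and the joint-density symmetry must be invoked on one common full-measure subset; confining the algebra to such a subset is what makes the otherwise routine manipulations legitimate.
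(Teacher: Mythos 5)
Your proposal is correct and follows essentially the same route as the paper: both reduce the score-based term, via Lemma~\ref{relative-density} and Equation~\ref{rel-density-int} together with the density identities $f_{(w,s\mid r)}(e,x)\cdot f_{(w,r)}(x)=f_{(w,(s,r))}(e,x)=f_{(w,(r,s))}(x,e)=f_{(w,r\mid s)}(x,e)\cdot f_{(w,s)}(e)$, to the integral $\int_O f_{(w,r\mid s)}(x,e)\,d\alpha_D(x)$, and then invoke Lemma~\ref{int-cond-rand-var} for both parts. You are in fact slightly more careful than the paper in tracking the $f_{(w,s)}(b)$ versus $f_{(w,s)}(e)$ normalisation mismatch appearing in the statement, which the paper's own chain of equalities silently replaces by $e$ throughout.
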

{
\begin{proof}
By the previous lemma and Equation~\ref{rel-density-int}, one can write: 
\[
\begin{aligned}
& = T_0( (\lambda \bfomega \, .\,  w(\bfomega) \cdot f_{(w, s | r)}(e, r(\bfomega) ) / f_{(w,s)} (e)), r) O \\
& = T_0( (w \cdot (f_{(w, s | r)}(e, \_ ) / f_{(w,s)}(e)) \circ r), r)) O \\
& = \int_O (f_{(w, s | r)}(e, x ) / f_{(w,s)}(e)) \, d (r^*\nu_{\bfomega}) (x) \\
& = \int_O (f_{(w, (s,r))}(e, x ) / (f_{(w,r)}(x) \cdot f_{(w,s)}(e))) \, d ((\alpha_D)_{f_{(w,r)}}) (x) \\
& = \int_O f_{(w, (r,s)}(x, e) / f_{(w,s)}(e) \, d \alpha_D (x) \\
& = \int_O f_{(w, r|s}(x, e) \, d \alpha_D (x) \\
\end{aligned}
\]
Both points (i) and (ii) follow from the above equality and from Lemma~\ref{int-cond-rand-var}.
\end{proof}
}
In the case where $E$ is the domain of real numbers $\realDom$, or a finite product of $\realDom$, and the measure $\alpha_E$ is the measure inherited from the Lebesgue measure on $\mathbb{R}$, the conditions in point (ii) are satisfied: the set of maximal elements has full measure and $\alpha_E(\dua b \cap ((\dua b )^c)^\circ)^c = 0$ is satisfied by any element $b$ in $E$. It follows that the condition also holds if the measure $\alpha_E$ is absolutely continuous with respect to the measure on the finite products of $\mathbb{R}$ inherited from the Lebesgue measure.

\begin{example} A standard example where the above statement can be applied is when $(w, (a,b))$ is a random variable representing the prior distribution of two coefficients in a function $g_{(a,b)}(x)$, for example the linear function $g_{a,b}(x) = ax + b$. The random variable $(w, s)$ represents the prior distribution of measurements of $g_{a,b}(1)$, with the hypothesis that the measurement is obtained using instruments affected by error $e$, whose probability distribution is described by a random variable $(w,e)$ following a normal distribution with mean $0$ and variance $\sigma^2$. The random variables $(w,(a,b))$ and $(w,e)$ are independent, while $(w,(a,b))$ and $(w,s)$ are related by the formula: $s = g_{a,b}(1) + e$.
In this case, the conditional probability is given by the formula:  
\[
f(w, s \mid (a,b)) = \frac{1}{\sqrt{2\pi\sigma^2}} \exp\left(-\frac{(s - g_{a,b}(1))^2}{2\sigma^2}\right).
\]
\end{example}

\remove{
    
\subsection{R-topology in basic higher types}

To obtain computability results for conditional random variables for some basic higher order types, we need to characterise the R-topology and show that for these basic types its lattice of open sets as well as its space of disjoint event-pairs are equipped with an effective structure with which conditional probability is computable.  
\begin{proposition} \label{w=r}
The $R$-topology, on the space of random variables, $(\Omega\to D)$, where $D\in \BC$, is the weakest topology making the function $T:(\Omega\to D)\to PD$ continuous.
\end{proposition}

\remove{\begin{proof}
    It is an immediate conseguece of Theorem~\ref{random-valuation}. For any open set $O$ in the $R$-topology we have $O = T^{-1}(T(O))$, that is $O$ is the inverse image of the open set $T(O)$.
\end{proof}}
\begin{proof}
  We need to prove that $U\subset (\Omega\to D)$ is R-open iff there exists Scott open set $O\subset PD$ with $U=T^{-1}(O)$. 
  We first show that for any open set $O\subset PD$, the set $T^{-1}(O)\subset (\Omega\to D)$ is an R-open set. In fact, if $r\sim s$ then, we have: $T(r)=T(s)$ and thus: $$r\in T^{-1}(O) \iff T(r)\in O \iff T(s)\in O \iff s\in T^{-1}(O).$$ 
  Next, we show that the weakest topology induced by $T$ is finer than the R-topology. Assume the subset $U\subset (\Omega\to D)$ is R-open and let $r\in U$. By Proposition~\ref{way-below-random}, let $s\in U$ be a simple random variable with $s\ll r$ such that $T(s)\ll T(r)$.  
  Then, 
  $r \in T^{-1}(\dua T(s))$ and $T^{-1}(\dua T(s)) \subseteq U$. In fact, let $f\in T^{-1}(\dua T(s))$ be a simple random variable, i.e., $T(s)\ll T(t)$. Since $T(s)\ll T(t)$, there exists, by Lemma~\ref{eq-uniformity}, a simple random variable $f'$ with $s\sqsubseteq t'$ and $T(t')=T(t)$. Then $t'\in U$ and since $U$ is closed under $\sim$ we also have $t \in U$. Thus $ T^{-1}(\dua T(s))\subset U$. 
\end{proof}
\begin{corollary}
   A basic open set of the R-topology on $(\Omega\to D)$ for $D\in \BC$ is given by $T^{-1}(\dua \sigma)$, where $\sigma\in PD$ is a simple valuation.
\end{corollary}
Denote the lattice of R-open sets of the PER domain $(D,\sim)$ by $\gamma(D)$, by the previous proposition we immediately have. 
\begin{corollary}
 The frame map $T^{-1}: \mathcal{O}{PD}\to\gamma_{(\Omega\to D)}$ is a lattice isomorphism.   
\end{corollary}
\begin{proof}
  Since $T^{-1}(O)$ for $O\in \Sigma (\mathcal{V}(D)$, by Proposition~\ref{w=r}, gives all the R-open sets of $(\Omega\to D)$, we conclude that $T^{-1}:\Sigma(PD)\to \gamma{(\Omega\to D)}$ is a lattice isomorphism. 
\end{proof}
Remark:  Without going into the details, we recall that the Stone Duality theorem holds for continuous domains. That is, there is an isomorphism between any continuous domain \(D\) and the set of completely prime filters in the lattice of open sets of \(D\). From the preceding corollary, it follows that \(PD\) is isomorphic to the set of completely prime filters in \(\gamma_{(\Omega \to D)}\). Each completely prime filter of open sets consists of Scott open neighbourhoods of upper sets of the form $\bigcup_{T(r)=\sigma} \uparrow \hspace{-.5ex}r$ for $\sigma\in PD$.  

We can also give an intrinsic characterisation of the  basis $T^{-1}(\dua \sigma)$ of the R-topology on $(\Omega\to D)$, where $\sigma\in B_{PD}$, as follows.
\begin{corollary}\label{r-open-basis}
  Consider a finite list of pairs $(d_i,  q_i)$, with $d_i\in D$, dyadic numbers $q_i>0$ for $1\leq i\leq n$ and $\sum_{i=1}^nq_i<1$. Then, we have $r\in T^{-1}(\dua \sigma)$ for $\sigma=\sum_{0\leq i\leq n} q_i\delta(d_i)$, where $d_0=\bot$ and $q_0=1-\sum_{1\leq i\leq n}q_i$, iff there are pairwise disjoint basic open sets $U_i\subseteq \Omega$ with $q_i=\nu(U_i)$, $U_i\ll r^{-1}(\dua d_i)$ and $q_i<\nu(r^{-1}(\dua d_i))$ for $1\leq i\leq n$. 
\end{corollary}
\begin{proof}
Let $\sigma$ be given as above and $r\in T^{-1}(\dua \sigma)$. Then, by~\cite[Proposition 3.7]{DE24}, there exist disjoint open subsets $U_i\subseteq \Omega$ with $\nu(U_i)=q_i$,  $U_i\ll r^{-1}(\dua d_i)$ and $q_i<\nu(r^{-1}(\dua d_i))$ for $1\leq i\leq n$, which satisfy $s:=\sup_{0\leq i\leq n}d_i\chi_{U_i}\ll r$ and $T(s)=\sigma$. Conversely, suppose  there are pairwise disjoint basic open sets $U_i\subseteq \Omega$ with $q_i=\nu(U_i)$, $U_i\ll r^{-1}(\dua d_i)$ and $q_i<\nu(r^{-1}(\dua d_i))$ for $1\leq i\leq n$. Then, by Proposition~\ref{simp-val-way}, we have $\sigma\ll T(r)$.  
\end{proof}

\begin{proposition}\label{r-open-basis}
  Consider a finite list of pairs $(d_i,  q_i)$, with $d_i\in D$, dyadic numbers $q_i>0$ for $1\leq i\leq n$ and $\sum_{i=1}^nq_i<1$. Let $\sigma=\sum_{0\leq i\leq n} q_i\delta(d_i)$, where $d_0=\bot$ and $q_0=1-\sum_{1\leq i\leq n}q_i$, the following statement are equivalent:
\begin{enumerate}
    \item [(i)] $r\in T^{-1}(\dua \sigma)$;
    \item [(ii)] for all $J\subseteq I\setminus \{i_0\}$ we have: 
    \[\sum_{j\in J} p_j \, \, < \nu\left( r^{-1}\left(\bigcup_{j\in J} \dua d_j\right)\right);\] 
    \item [(iii)]  there are pairwise disjoint basic open sets $U_i\subseteq \Omega$ with $q_i=\nu(U_i)$, $U_i \ll r^{-1}(\dua d_i)$ and $q_i<\nu(r^{-1}(\dua d_i))$ for $1\leq i\leq n$. 
\end{enumerate}  
\end{proposition}

Note that, in the proof of Corollary~\ref{r-open-basis},  for the probability space $\Omega_0$, the relation $d_i\chi_{U_i}\ll r$ implies $q_i=\nu(U_i)<\nu(r^{-1}(\dua d_i))$ for $i\in I$ as the map $T$ is open in this case, but for probability spaces $\Sigma^\nat$ and $[0,1]$ we need the condition $q_i=\nu(U_i)<\nu(r^{-1}(\dua d_i))$, in addition to $d_i\chi_{U_i}\ll r$, for $i\in I$. 
There is a more compact form to represent the basis of the R-topology on $(\Omega\to D)$: 
\begin{corollary}\label{countable-uion-r-open}
 For a simple valuation $\sigma\in PD$, we have:
 \[T^{-1}(\dua \sigma)=\bigcup\left\{\dua r:r \mbox{ simple random variable with }\sigma\ll T(r)\right\}.\]
\end{corollary}
It is convenient to have a notation for this intrinsic definition of the R-open basis. 
\begin{equation}\label{infinite-r-open}\Xi_{1\leq i\leq n}[q_i\to \dua d_i]:=T^{-1}\left(\dua \left(\sum_{0\leq i\leq n} q_i\delta(d_i)\right)\right)
\end{equation}
for the normalised valuation $\sum_{0\leq i\leq n} q_i\delta(d_i)$, where $d_0=\bot$ and $q_0=1-\sum_{1\leq i\leq n}q_i$. When $n=1$, we drop the  sign $\Xi$ and simply write: $[q_1\to \dua d_1]$ instead of $\Xi [q_1\to \dua d_1]$. Note that, using this notation, $[q_1\to \dua d_1]$ is precisely the set defined in Definition~\ref{simple-R-open}. 
\remove{We also note that if $D=\realDom$, then $\Xi$ will simply be the pointwise extension of sum of (interval-valued) random variables.} 

\begin{proposition}\label{R-open-iff}
  The R-topology $\gamma_{(\Omega\to D)}$ satisfies:  
  \[r\sim s \iff \forall U \in \gamma_{(\Omega\to D)}. (r\in U \iff s\in U).\]
\end{proposition}
\begin{proof}
    The LHS to RHS implication is simply the definition of R-open sets. To prove the RHS to LHS implication, assume $r\not \sim s$. There exists a simple random variable $r_0\ll r$, with $r_0 {=\sup_{i\in I} d_i\chi_{O_i}}$, such that there exists no simple random variable $s_0\ll s$ with $r_0\sim s_0$, since otherwise we will obtain $r\sim s$ by the closure of the equivalence relation under taking supremums. But then, we have: $r\in \dua r_0\subseteq \bigcap_{i\in I} [\nu(O_i)\to \dua d_i]$, whereas $s\notin [\nu(O_i)\to \dua d_i]$ where the latter set is R-open. 
\end{proof}

\begin{corollary}
    Suppose $D$ and $E$ are bounded complete domains. If $f:(\Omega\to D)\to (\Omega\to E)$ is continuous with respect to the R-topology, then $r\sim s$ implies $f(r)\sim f(s)$
\end{corollary}
\begin{proof}
Suppose $r\sim s$ but $f(r)\sim f(s)$ does not hold. Then, by the previous proposition, there exists an open set $U$ in the R-topology of $(\Omega\to E)$ separating $f(r)$ from $f(s)$. But this implies that $f^{-1}(U)$ separates $r$ from $s$.
\end{proof}

\remove{From this point on, we use the samples spaces $(0,1)$ or $2^\nat_0$, i.e., we work with $A_0$. This means that the map $T:(\Omega_0\to D)\to PD$ preserves the way-below relation. We also use the basis $B$ (which we call simple random variables) of $(\Omega_0\to D)$ provided in Lemma~\ref{way-below-A-0}. In particular, this means that for any simple random variable $r\in (\Omega_0\to D)$, we have $\dua r\neq \emptyset$. We first consider a ground PER domain $\langle D,\sim_D\rangle$ with $\sim_D=\operatorname{Id}$.}

\remove{\begin{proposition}
The R-topology on $(\Omega\to D)$ is generated by a subbasis consisting of Scott open sets of the form $[q \to O]$, with $O\in \Omega D$ and dyadic $q\in [0,1]$.
\end{proposition}

\begin{proof}
By the previous Proposition~\ref{w=r}, the inverse images of the elements of a subbasis for the Scott topology on $\mathcal{P}(D)$ form a subbasis for the R-topology.
By Corollary~\ref{simplest-simple-way-below}, a subbasis of the Scott topology on $\mathcal{P}(D)$ is given by sets of the form:
$$O(q,d):=\dua (q\delta(d) +(1-q) \delta(\bot))$$ with $d$ belonging to a basis of $D$, $q$ rational number and $0\leq q\leq 1$.  
We shall show that $[q\to \dua d] = T^{-1}(O(q,d))$.

To prove the inclusion $[q\to \dua d] \subseteq T^{-1}(O(q,d))$, i.e.,  $T([q\to \dua d]) \subseteq O(q,d)$, since $[q\to \dua d]$ is Scott open, it suffices to show that for any step function $f=\sup_{i\in I} d_i\chi_{O_i}\in [q\to \dua d]$, where $d_i$'s are assumed to be distinct and $O_i$ disjopints, we have $T(f)\in O(q,d)$. But $f\in [q\to \dua d]$ implies that 
\[\nu(f^{-1}(\dua d))=\nu(\bigcup_{d\ll d_i}O_i)>q.\]
On the other hand, we have \[T(f)=\nu\circ f^{-1}=\sum_{d\ll d_i}\nu(O_i)\delta(d_i)+\sum_{d\ll\hspace{-1ex}\not \;\;d_i}\nu(O_i)\delta(d_i) \]
Hence, 
\[\sum_{d\ll d_i}\nu(O_i)=\sum_{d\ll d_i} \nu(\bigcup O_i) >q\]
and thus by Corollary~\ref{simplest-simple-way-below} we have $T(f)\in O(q,d)$. 

To prove the opposite inclusion, $[q\to \dua d]\supset T^{-1}O(q,d)$, it is sufficient to show that if a simple valuation $u=\sum_{i\in I}q_i\delta(d_i)\in O(q,d)$, then any function $f:A\to D$ with $Tf=u$ satisfies $f\in[q\to \dua d]$. In fact, any such $f$ will satisfy $\nu\circ f^{-1}=u$ and thus 
$\nu (f^{-1}(\dua d))=\sum_{d\ll d_i} q_i>q$ by Corollary~\ref{simplest-simple-way-below}. i.e., $f\in[q\to \dua d]$.
\end{proof}}


Next, we characterise the R-topology of $(\Omega\to ((\Omega\to D))$ for a domain $D$, i.e., when $\sim_D=\operatorname{Id}$. We consider the space of continuous valuation on function space $(\Omega\to D)$ with its R-topology rather than the usual Scott topology. Any continuous valuation on $((\Omega\to D),\gamma_{(\Omega\to D)})$ is of the form $\nu\circ r^{-1}$ for some random variable $r:\Omega\to (\Omega\to D) $, which means that it is the restriction to $\gamma_{(\Omega\to D)}$ of a continuous valuation on $(\Omega\to D)$ with its usual Scott topology $\mathcal{O}{(\Omega\to D)}$.

We have a map 

{
\[T_{R^2D\to P^2D}:(\Omega\to (\Omega\to D))\to P(P D),\]}
given by $T_{R^2D\to P^2D}=PT_{(\Omega\to D)\to PD}\circ T_{(\Omega\to \Omega\to D)\to P(\Omega\to D)}$.  A simple derivation shows that we have \begin{equation}\label{formula-PPD}T_{R^2D\to P^2D}(r)(O)=\nu\circ(r^{-1}(T_{RD\to PD}^{-1}(O)),\end{equation} for $O\in {\mathcal{O}{P D}}$. Since $T^{-1}_{RD\to PD}$ is a lattice isomorphism, $T_{R^2D\to P^2D}$ is a continuous map. 

\begin{proposition}\label{random-random-eq}
For $r,r'\in R^2D$, we have $r\sim r'$ iff $T_{R^2D\to P^2D}(r)=T_{R^2D\to P^2D}(r')$.
\end{proposition}
\begin{proof}
   We need to show that for any $R$-open set $U\in \gamma\foot$ we have: $\nu(r^{-1}(U))=\nu(r'^{-1}(U))$. Since $U=T_{RD\to PD}^{-1}(O)$ for some $O\in \mathcal{O}(D)$, the result follows from Equation~(\ref{formula-PPD}).   
\end{proof}

 The map {$T_{R^2D\to P^2D}$} acts on simple random variables to give simple valuations in $P(PD))$:

  \begin{multline*}\label{double-random}
    T_{R^2D\to P^2D}:\sup_{i\in I}(\sup_{j\in J_i}d_{ij}\chi_{V_{ij}})\chi_{U_i} \\
    \longmapsto \sum_{i\in I}\nu(U_i)\delta\left(\sum_{j\in J
    _i}\nu(V_{ij})\delta(d_{ij})\right),
\end{multline*}
where we have assumed the basic open sets $U_i$ and $V_{ij}$ are disjoint for $i\in I$ and $j\in J_i$. In fact:

\begin{multline*}
  T_{R^2D\to P^2D}(\sup_{i\in I}(\sup_{j\in J_i}d_{ij}\chi_{V_{ij}})\chi_{U_i}) \\
  =PT_{(\Omega\to D)\to PD}\circ T_{(\Omega\to D)\to P(\Omega\to D)}(\sup_{i\in I}(\sup_{j\in J_i}d_{ij}\chi_{V_{ij}})\chi_{U_i}) \\ =PT_{(\Omega\to D)\to PD}\left(\sum_{i\in I}\nu(U_i)\delta(\sup_{j\in J_i}d_{ij}\chi_{V_{ij}})\right) \\
  =\sum_{i\in I}\nu(U_i)\delta\left(\sum_{j\in J
    _i}\nu(V_{ij})\delta(d_{ij})\right)
\end{multline*}
From Proposition~\ref{random-random-eq}, we now  obtain: \begin{corollary}\label{eq-double-simple}
    For two simple random variables $r,r'\in R^2D$, with
   \begin{equation}\label{two-higher-random}r=\sup_{i\in I}(\sup_{j\in I_i}d_{ij}\chi_{V_{ij}})\chi_{U_i}\quad r'=\sup_{i\in I'}(\sup_{j\in I'_i}d'_{ij}\chi_{V'_{ij}})\chi_{U'_i}\quad \end{equation}
    we have $r\sim_{R^2D} r'$ iff up to a permutation of indices with $I=I'$, $I_j=I'_j$ and for $i\in I. \sup_{j\in I_i}d_{ij}\chi_{V_{ij}}\sim_{RD\to PD} \sup_{j\in I_i}d'_{ij}\chi_{V'_{ij}}$ with $\nu(U_i)=\nu(U'_i)$, where we have assumed all $d_{ij}$'s are distinct and all $d'_{ij}$'s are distinct. 
\end{corollary}
For two simple random variables as in Expression~\ref{two-higher-random}, we also obtain:
\begin{proposition} We have $\mu_D(r)\sim_{RD} \mu_D({r'})$ iff $r$ and $r'$ are defined with the same set $D_0\subset D$ of values and
\begin{equation}\label{muD-condition}\forall d\in D_0.\sum_{d_{ij}=d}\nu(V_{ij})\nu(U_i)=\sum_{d'_{ij}=d}\nu(V'_{ij})\nu(U'_i) \end{equation}
   
\end{proposition}
\begin{proof}
   It is immediate that $r$ and $r'$ must be defined with the same set $D_0$ of values in $D$ to have $\mu_D(r)=\mu_D(r')$. Assuming this,  since the events $U_i$ for $i\in I$ are disjoint, we have:
  \[\forall d\in D_0.\operatorname{Pr}(r=d)=\sum_{d_{ij}=d}\nu(V_{ij})\nu(U_i)\]
  with a similar expression:
 \[\forall d\in D_0.\operatorname{Pr}(r'=d)=\sum_{d'_{ij}=d}\nu(V'_{ij})\nu(U'_i)\]
 and the result follows.
  
\end{proof}
We note that Equation~(\ref{muD-condition}) is necessary but not sufficient to have $r\sim_{R^2D} r'$. A counter-example is given by 
\[r=\chi_O(\sup(\chi_{O_1}d_1,\chi_{O_2}d_2)), \quad r'=\sup(\chi_O\chi_{O_1}d_1,\chi_{O'}\chi_{O_2}d_2),\]
with $d_1\neq d_2$ and $\nu(O)=\nu(O')$. 
    
This makes $ T_{R^2D\to P^2D}$ onto simple valuations in $P(PD)$ and it follows by Theorem~\ref{random-valuation} that $ T_{R^2D\to P^2D}$ is a surjection. In addition, $T_{R_0^2D\to P^2D}$, where $R_0D=(\Omega_0\to D)$, preserves the way-below relation and is thus open. 

Let $r_i=\sup_{j\in J_i}d_{ij}\chi_{V_{ij}}$, for $i\in I$, in Equation~\ref{double-random}. Using the notation in Lemma~\ref{way-below-A-0}, assume $r_i\in B_{(\Omega_0\to D)}$ for $i\in I$ and,
\begin{equation}\label{double-random-b}
     r:=\sup_{i\in I}r_i\chi_{U_i}\in B_{(\Omega_0\to (\Omega_0\to D))},
\end{equation}
with $\sigma:=\sum_{i\in I}\nu(U_i)\delta\left(\sum_{j\in J_i}\nu(V_{ij})\delta(d_{ij})\right) =\sum_{i\in I}\nu(U_i)\delta(\sigma_i)$, where $\sigma_i:=\sum_{j\in J_i}\nu(V_{ij})\delta(d_{ij})$. We have $\sigma= T_{R_0^2D\to P^2D} (r)$ and similar to~\cite[Proposition 3.11]{DE24}:
\[\dua \sigma=\dua (T_{R_0^2D\to P^2D} (r))\]

\remove{By Corrollary~\ref{t-way-below}, the map $T^{-1}_{(\Omega_0\to D)}$ preserves the way below relation and thus so does $PT_{(\Omega_0\to D)}: P(\Omega_0\to D,\gamma)\to P(PD)$. In fact $PT_{(\Omega_0\to D)}:\alpha\mapsto \alpha\circ T^{-1}$ and hence $\alpha\ll_{PT_{(\Omega_0\to D)}} \beta$ implies $\alpha\circ T_{(\Omega_0\to D)}^{-1}\ll \beta\circ T_{(\Omega_0\to D)}^{-1}$.}
Similar to Theorem~\cite[Theorem 3.14]{DE24}, we have:
\begin{proposition}
  If $r\in (\Omega\to (\Omega\to D))$ for $D\in \BC$, then there exist an increasing sequence $(r_i)_{i\in \nat}$ of simple random variables $r_i\in (\Omega\to (\Omega\to D))$ with $r_i\ll r$,  $ T_{R^2D\to P^2D}(r_i)\ll  T_{R^2D\to P^2D}(r)$  and $\sup_{i\in \nat} r_i=r$ a.e.
\end{proposition}
\begin{proposition}
    A basis of the R-topology on $(\Omega\to (\Omega\to D))$ is given by $ T_{R^2D\to P^2D}^{-1}(\dua \sigma)$, where $\sigma\in B_{P^2D}$. 
\end{proposition}

  \remove{Consider two simple random variables $r,r':\Omega\to (\Omega\to D)$,
   \[r=\sup_{i\in I}(\sup_{j\in J_i}d_{ij}\chi_{V_{ij}})\chi_{U_i}\quad 
   r'=\sup_{i\in I'}(\sup_{j\in J'_i}d'_{ij}\chi_{V'_{ij}})\chi_{U'_i}\quad \]

To derive necessary and sufficient conditions so that $r\sim r'$, we can assume $U_i$, respectively $U_i'$, are disjoint crescents for $i\in I$, respectively for $i\in I'$. Similarly, for each $i\in I$, we can assume $V_{ij}$, respectively $V'_{ij}$, are disjoint crescents for $j\in J_i$, respectively for $j\in J'_i$. 

\begin{lemma}\label{double-random-d} Let $D_0:=\operatorname{Im}(r)$, where $r=\sup_{i\in I}\chi_{U_i}(\sup_{j\in J_i}d_{ij}\chi_{V_{ij}}):\Omega^2 \to D$, then
    \[T_{(\Omega\to D)\to PD}\circ \mu_D:r\mapsto\sum_{d\in D_0}\sum_{d_{ij}=d}\nu(V_{ij})\nu(U_i)\delta(d):(\Omega\to (\Omega\to D))\to PD\]
\end{lemma}
\begin{proof}
   We have $\mu_D(r)=\sup_{i\in I}\chi_{h_1^{-1}(U_i)}(\sup_{j\in J_i}d_{ij}\chi_{h_2^{-1}(V_{ij})})$. Since $h_1$ and $h_2$ are independent and measure preserving, the result follows. 
\end{proof}}

\remove{

********

We note that $T:U R\to P U$ with $T_D:=T_{R D\to PD}$ is a natural transformation between the functors $R:\BC\to \BC$ and $P:\D\to \D$, as it can be easily checked. We can now show that $T$ preserves the monadic structures of the monads $\mathcal{R}$ and $\mathcal{P}$. Recall that, as a variation of Giry's monad, the functor $P:\D\to \D$ is a monad with $\epsilon_D:D\to P(D)$ given by $\epsilon'_D(d)=\delta(d)$ and the flattening operation $\mu_D':P(PD)\to PD$ given by $\mu_D'(\kappa)(O)=\int_{x\in \mathcal{V}(D)}x(O)\,\,d\kappa$, where $O\in \mathcal{O}(D)$. 

\begin{proposition}
   Consider $\epsilon$ and $\mu$, respectively $\epsilon'$ and $\mu'$, the unit and flattening operations of $\mathcal{R}$, respectively $\mathcal{P}$. Then, for $D\in \BC$, the following two diagrams commute:
\begin{center}

\begin{tikzcd}[row sep=huge, column sep=huge]
D\arrow[rd, "\epsilon'_D" ] \arrow[r, "\epsilon_D"] & RD\arrow[d,"T_{RD\to PD}"]& \\
&PD
\end{tikzcd}

\begin{tikzcd}[row sep=huge, column sep=large]
(\Omega\to \Omega\to D)\arrow{r}{\mu_D} \arrow[swap]{d}{PT_{RD\to PD}\circ T_{(R^2D \to D))\to P(RD)}} & 
 RD\arrow{d}{T_{RD\to PD}} \\%
 PPD\arrow{r}{\mu'_D}&PD
\end{tikzcd} 
\end{center}
 
\end{proposition}
\begin{proof}
 The commutativity of the first diagram is straightforward. To show that the second diagram commutes, it is sufficient, by Scott continuity, to prove it for a simple random variable  
 \[r=\sup_{i\in I}(\sup_{j\in J_i}d_{ij}\chi_{V_{ij}})\chi_{U_i}\in (\Omega\to (\Omega\to D))\]
 By Lemma~\ref{double-random-d}, we have \[T_{RD\to PD}\circ \mu_D(r)=\sum_{d\in D_0}\sum_{d_{ij}=d}\nu(V_{ij})\nu(U_i)\delta(d)\]By Equation~(\ref{double-random}), we let:
 \[\kappa:=T_{R^2D\to P^2D}(r)=\sum_{i\in I}\nu(U_i)\delta\left(\sum_{j\in J
    _i}\nu(V_{ij})\delta(d_{ij})\right)\]
    If $O\in \mathcal{O}(D)$, then we have 
    \[\mu_D'(\kappa)(O)=\int_{x\in \mathcal{V}(D)}x(O)\,\,d\kappa=\sum_{i\in I,j\in J_i}\sum_{d_{ij}\in O}\nu(V_{ij})\nu(U_i)\]
It follows that $\mu_D'(\kappa)=\sum_{d\in D_0}\sum_{d_{ij}=d}\nu(V_{ij})\nu(U_i)\delta(d)$, as required.
\end{proof}
}

Given two simple random variables $r,r'\in (\Omega^n\to D)$ of the form 
\begin{multline*}
r=\sup_{k_1\in I}\sup_{k_2\in I_{k_1}}\ldots\sup_{k_{n-1}\in I_{k_1k_2\ldots k_{n-2}}}\sup_{k_n\in I_{k_1k_2\ldots k_{n-1}}} \\
d_{k_1k_2\ldots k_n}\chi_{W_{k_1k_2\ldots k_{n}}}\chi_{W_{k_1k_2\ldots k_{n-1}}}\ldots \chi_{W_{k_1k_2}}\chi_{W_{k_1}}    
\end{multline*}
\begin{multline*}
  r'=\sup_{k_1\in I'}\sup_{k_2\in I'_{k_1}}\ldots\sup_{k_{n-1}\in I'_{k_1k_2\ldots k_{n-2}}}\sup_{k_n\in I'_{k_1k_2\ldots k_{n-1}}} \\
  d'_{k_1k_2\ldots k_n}\chi_{W'_{k_1k_2\ldots k_{n}}}\chi_{W'_{k_1k_2\ldots k_{n-1}}}\ldots \chi_{W'_{k_1k_2}}\chi_{W'_{k_1}}
\end{multline*}

where we can assume that the crescents $W_{k_1}$ for $k_1\in I$ (respectively $W'_{k_1}$ for $k_1\in I'$) are disjoint. Similarly, we can assume, for and $1<i<n-1$ the crescents $\chi_{W_{k_1k_2\ldots k_{i}}}$ (respectively, $\chi_{W'_{k_1k_2\ldots k_{i}}}$)   are disjoint for and $k_i\in I_{k_i}$ (respectively, $k_i\in I'_{k_i}$). 
\begin{proposition}
For simple random variables $r,r'\in(\Omega^n\to D)$, we have the equivalence $r\sim r'$ iff $D_0:=\operatorname{Im}(r)=\operatorname{Im}(r')$ and 
\begin{multline*}
  \forall d\in D_0.\sum_{d_{k_1k_2\ldots k_n}=d}\nu({W_{k_1k_2\ldots k_{n}}})\nu({W_{k_1k_2\ldots k_{n-1}}})\ldots\nu(W_{k_1}) \\
= \sum_{d'_{k_1k_2\ldots k_n}=d}\nu({W'_{k_1k_2\ldots k_{n}}})\nu({W'_{k_1k_2\ldots k_{n-1}}})\ldots \nu(W'_{k_1})
\end{multline*}
\end{proposition}

We can deduce the following as in Lemma~\ref{w=r}.

\begin{corollary}
    $T_{R^2D\to P^2D}^{-1}:\mathcal{O}{P^2D}\to \gamma_{R^2D}$ is a lattice isomorphism.
\end{corollary}

The above construction can be inductively extended to $R^nD$ and $P^nD$. Inductively define
\[T_{R^{n+1}D\to P^{n+1}D}:R^{n+1}D\to P^{n+1}D,\]
by $T_{R^{n+1}D\to P^{n+1}D}=P^nT_{RD\to PD}\circ T_{R^{n+1}D\to P^nRD}$ with $T_{R^{n+1}D\to P^{n+1}D}(r)(O):=\nu\circ(r^{-1}(T_{R^nD\to P^nD}^{-1}(O))$ for $O\in {\mathcal{O}{P^n D}}$. Since $T^{-1}_{R^nD\to P^nD}$ is a lattice isomorphism, $T_{R^{n+1}D\to P^{n+1}D}$ is a continuous map.

Then, we have the following generalisations of the above results.
\begin{proposition}
  If $r\in R^nD$ for $D\in \BC$, then there exists an increasing sequence $(r_i)_{i\in \nat}$ of simple random variables $r_i\in R^nD$ with $r_i\ll r$ and $T_{R^nD\to P^nD}(r_i)\ll T_{R^nD\to P^nD}(r)$, for $i\in \nat$, and $\sup_{i\in \nat} r_i=r$ a.e.
\end{proposition}

\begin{proposition}
    A basis of the R-topology on $R^nD$ is given by $T^{-1}_{R^nD\to P^nD}(\dua \sigma)$, where $\sigma\in B_{\mathcal{V}(D}$. 
\end{proposition}

We can deduce the following as in Lemma~\ref{w=r}.

\begin{corollary}
    $T^{-1}_{R^nD\to P^nD}:\mathcal{O}{P^nD}\to \gamma_{R^nD}$ is a lattice isomorphism, and, thus, $\gamma_{R^nD}$ is countably based continuous lattice.
\end{corollary}

\subsection{Function space}

\remove{Recall that $\Omega_0=(0,1)$ or $\Omega_0=2_0^\nat$. Thus, the mapping $T:(\Omega_0\to D)\to PD$ preserves the way-below relation and is an open map. }
\remove{\begin{lemma}\label{way-below-cylinder}
    Let $D\in \BC$. 
   Consider a simple random variable in the form: $r\in B_{(\Omega\to D)_0}$. Then, we have:
\[
(r\ll f )\Rightarrow f \in T^{-1}(\dua(T(r)))
\]
   
\end{lemma}
\begin{proof}
    By Proposition~\ref{way-below-random}, there exists $s\in B_{(\Omega\to D)_0}$ with $r\sqsubseteq s\ll f$ and $T(s)\ll T(f)$. Thus, $f\in T^{-1}(\dua (T(s)))\subseteq T^{-1}(\dua (T(r)))$.
\end{proof}}
In this subsection, we show that the R-topology on $(RD_1\to RD_2)$ for $D_1,D_2\in \BC$ coincides with the point-open topology on the function space when $RD_2$ is equipped with its R-topology. We then show that the lattice of open subsets of the R-topology on the function space is countably based. 

We now consider the function space $((RD_1,\sim)\to (RD_2,\sim))$. We note that a step function of type \begin{equation}\label{step-R-open}
    g=\sup_{i\in I} \chi_{O_i}t_i:(RD_1,\sim)\to (RD_2,\sim),
\end{equation}  where $O_i\in \mathcal{O}({D_1})$, and $t_i:\Omega\to D_2$ for $i\in I$, are self-related step functions, is R-continuous if $O_i\in \gamma_{D_1}$ for $i\in I$.
However, the sub-basis of $\gamma_{D_1}$ consisting of R-open sets of the form $\Xi_{1\leq i\leq n}[q_i\to \dua d_i]$ is the countable union of upper sets given in Corollary~\ref{countable-uion-r-open} and has thus an infinite structure. We therefore endeavour to approximate self-related elements in $(\Omega\to D_1)\to (\Omega\to D_2)$ by suitable step functions. 

\begin{lemma}\label{s-step}~\cite[Proposition 9]{davari2019convex}
    Suppose $f\in D_1\to D_2$ is a Scott continuous map with $D_1,D_2\in \Dom$. If $u\in D_1$ and $r\in 
     D_2$ are elements with $r\ll f(u)$, then $r\chi_{\dua u}\ll f$.
\end{lemma}

\remove{We use the work in Section~\ref{rand-from-simple} to develop a finitary construction of equivalence of two functions in $((RD_1,\sim)\to (RD_2,\sim))$. 

\begin{definition}
   For two simple random variables $r,s:A_0\to D$, we write $r\sim_D^ns$ if they have equivalent representations of degree $n$. The least such $n$ is called the {\em equivalence degree} of $r$ and $s$. We say $f,g\in((RD_1,\sim)\to (RD_2,\sim))$ are {\em $n$-equivalent},  written $f\approx^n g$, if whenever $r$ and $s$ have equivalence degree at most $n$, we have $f(r)\sim g(s)$. We say $f$ is {\em $n$-self-equivalent} if $f\approx^nf$.
\end{definition}
We note that given simple random variables $r$ and $s$, we have:  $r\sim^n s$ implies $r\sim^{m} s$ for all $m\geq n$ by taking finer representations of $r$ and $s$. In addition, if $r\sim s$ then there exists $n\in \nat$ such that $r\sim^ns$. 
\begin{proposition}
Two functions in $((RD_1,\sim)\to (RD_2,\sim))$ are equivalent iff they are $n$-equivalent for all $n\in \nat$. 
\end{proposition}
\begin{proof}
If $f\sim g$, then by definition whenever $r\sim^ns$ we have $f(r)\sim g(s)$. For the converse, suppose $f$ and $g$ are $n$-equivalent for all $n\in \nat$. Assume that we have two random variables $r\sim s$. We invoke Theorem~\ref{two-eq-rand} to have two increasing sequences of simple random variables $r_i$ and $s_i$, for $i\in\nat$, such that $r_i\sim^{n_i} s_i$, where $n_i $ is the common equivalence degree of $r_i$ and $s_i$ with $r=\sup_{i\in \nat}r_i$ a.e., and $s=\sup_{i\in \nat}s_i$. By assumption, since $f\approx^{n_i} g$, we have $f(r_i)\sim g(s_i)$, for $i\in \nat$, and thus by taking supremum we have $f(r)\sim g(s)$. 
\end{proof}
To have a finitary element for the function space, we introduce, for each $n\in \nat$ and dyadic $0<q<1$, the subset
\begin{equation}\label{finitary-cyl-set} [q\to \dua d]^n\subseteq [q\to \dua d]\end{equation}
defined by $f\in [q\to \dua d]^n$ if there exists a basic open set $U\subseteq A$ of degree $n$ such that $U\ll f^{-1}(\dua d)$ with $\nu(U)=q $. 

\begin{proposition}\label{cylinder-union}
We have  $[q\to \dua d]^n=\bigcup_{j\in J}\dua (d\chi_{U_j})$ where $U_j$ is a Scott open set with $\nu(U_i)=q$ for $j\in J$ with $J$ finite. 
\end{proposition}
\begin{proof}
 Let $U_j$, for $j\in J$, be the finite set of basic open sets with $\nu(u_j) =q$ and degree $n$, then $[q\to \dua d]^n=\bigcup_{j\in J}\dua (d\chi_{U_j})$. 
\end{proof}  
Thus, $[q\to \dua d]^n$ is Scott open. Observe that $[q\to O]^n\subseteq [q\to O]^{n+1}$ for $n\in \nat$.

Based on Proposition~\ref{cylinder-union} and the notations there, we define $[q\to \dua d]^n_j:=\dua (d \chi_{U_j})$ so that  $[q\to \dua d]^n=\bigcup_{j\in J(n,q,d)}\dua (d\chi_{U_j})$.

\remove{ It follows that a semi-self-related step function preserves the equivalence relation on random variables belonging exclusively to a ground-level input open set. It is however easy to show that the equivalence relation is not in general preserved by semi-self-related step function. To see this, it is sufficient to consider two pairs $(r_1,r_1')$ and $(r_2,r_2')$ of equivalent simple random variables in $(\Omega\to D_2)$, i.e., $r_1\sim r_1'$ and $(r_2\sim r_2')$ such that the pairs $(r_1,r_2)$ and $(r_1',r_2')$ are both consistent with maximal binary lubs and $(r_1\sqcup r_2)\not\sim (r_1'\sqcup r_2')$.
\begin{example}
   Consider $A=2^\nat$  and $D_2=\realDom$. Denote the complement of a basic open set $O$ by $O^c$. Let 
   \[\begin{array}{lllllll}
   r_1&=&\sup (d_{11}\chi_{O_1},d_{12}\chi_{O_1^c})&&r_1'&=&\sup(d_{11}\chi_{U_1},d_{12}\chi_{U_1^c})\\
   r_2&=&\sup (d_{21}\chi_{O_2},d_{22}\chi_{O_2^c})&&r_2'&=&\sup(d_{21}\chi_{U_2},d_{22}\chi_{U_2^c})
   \end{array} \]
   and assume the lubs $r_1\sqcup r_2$ and $r_1'\sqcup r_2'$ exist as in the following:
   \[\begin{array}{lll}
   r_1\sqcup r_2&=&\sup((d_{11}\sqcup d_{21})\chi_{O_1\cap O_2},(d_{11}\sqcup d_{22})\chi_{O_1\cap O_2^c}, (d_1\sqcup d_2)\chi_{O_1^c\cap O_2}, (d_{12}\sqcup d_{22})\chi_{O_1'\cap O'_2})\\
  r'_1\sqcup r'_2&=&\sup((d_{11}\sqcup d_{21})\chi_{U_1\cap U_2},(d_{11}\sqcup d_{22})\chi_{U_1\cap U2^c}, (d_1\sqcup d_2)\chi_{U_1^c\cap U_2}, (d_{12}\sqcup d_{22})\chi_{U_1'\cap U'_2}),\\ 
       
   \end{array}\]
   where we have assumed all intersections of basic open sets in the above are non-empty and the binary lubs of the pairs of elements in the above two expressions exist in $(\Omega\to D_2)$ and are pairwise distinct. We now let the four basic open sets have the values: $O_1=00$, $O_2=0$ and $U_1=000\cup 100$ and $U_2=1$. Then $O_1\cap O_2=00$ and $U_1\cap U_2=100$. Since $\nu(00)=1/4\neq 1/8=\nu(100)$, it follows that $(r_1\sqcup r_2)\not\sim(r'_1\sqcup r'_2)$. If we now choose $d_{11}=[0,1]$ and $d_{21}=[1,2]$, then $d_{11}\sqcup d_{12}=\{1\}$ which is maximal. We conclude that there is no pair of equivalent random variables above the pair $r_1\sqcup r_2$ and $r_1'\sqcup r_2'$.
\end{example}
The above example shows that in general the non-ground-level values of a semi-self-related step function cannot be redefined to create a semi-self-related step function from a semi-self-related step function. }
\begin{lemma} We have the following relations:
    \begin{itemize}
        \item $[q\to  O]\subseteq  [q'\to  O']$ iff $q'\le q$ and $O\subseteq O'$.
        \item $[q\to  O]\ll  [q'\to  O']$ iff $q'< q$ and $O\ll O'$.
        \item $[q\to  O]^n\subseteq  [q'\to  O']^{n'}$ iff $q'\le q$, $O\subseteq O'$ and $n\le n'$.\item $[q\to  O]^n\ll  [q'\to  O']^{n'}$ iff $q'< q$, $O\ll O'$ and $n\le n'$.
        \item $[q\to O]=\bigcup_{n\in \nat}[q\to O]^n$
    \end{itemize}
\end{lemma}


\remove{\begin{definition}
Given a finite indexing set $I$ and two sets of $I$-indexed rational $\{q_i\}$ and open sets $\{O_i\}$, we define the set $\Sigma_{i\in I}[q_i\to O_i]$, by $r \in \Sigma_{i\in I}[q_i\to O_i]$ if there exist a finite number of disjoint open sets $(Q_i)_{i \in I}$ in $A$ such that, for every $i$, $\nu(Q_i) > q_i$ and $r(Q_i) \subseteq O_i$.  
\end{definition}

\begin{lemma}
    The set $\Sigma_{i\in I}[q_i\to O_i] \subseteq (\Omega \to D)$ is an R-open set.
\end{lemma}

\begin{proof}
Rephrasing the proof of analogous Lemma 4.7, we first prove that  $\Sigma_{i\in I}[q_i\to O_i]$ is Scott-open. Clearly $\Sigma_{i\in I}[q_i\to O_i]$ is upper closed. Let $r_n \in (\Omega \to D)$ be an increasing sequence with $r = \sup_{n \in \nat} \in \Sigma_{i\in I}[q_i\to O_i]$. 

For every $i \in I$, we can write $\nu (Q_i \cap r^{-1}(O_i) )$
\end{proof}}

For positive dyadic numbers $q_i<1$ ($1\leq i\leq m$) with $\sum_{1\leq i\leq m}q_i<1$, we define  the subset $\sum_{1\leq i\leq m}[q_i\to \dua O_i]^n\subseteq (\Omega\to D)$ by $r\in\sum_{1\leq i\leq m}[q_i\to \dua O_i]^n$ if there exist a finite number of disjoint open sets $(U_i)_{1\leq i\leq m}$ of degree $n$ such that $\mu(U_i) = q_i$ and $U_i \ll r^{-1}(O_i)$ for every $1\leq i\leq m$. Letting $\sigma=\sum_{0\leq i\leq m} q_i\delta(d_i)$, with $d_0=\bot$ and $q_0=1-\sum_{1\leq i\leq m} q_i$, we have: 
\begin{multline}
  \label{finite-r-open} \sum_{1\leq i\leq m}[q_i\to \dua O_i]^n=\bigcup\left\{\dua b:  b\mbox{ simple random variable} \right.
  \\
  \mbox{with minimal degree }n,\;T(b)=\sigma \left. \right\}  
\end{multline}
We note that there are only a finite number of simple random variables $b$ of degree $n$ with $T(b)=\sigma$ and thus the union in Equation~\ref{finite-r-open} is over a finite set. We call $\dua b$ a {\em band} of $\Sigma_{1\leq i\leq m}[q_i\to \dua O_i]^n$.

\begin{lemma} 
Any two bands $\dua b_1$ and $\dua b_2$ of the set $\sum_{1`\leq i\leq m}[q_i\to \dua d_i]^n$ are isomorphic through a measure preserving homeomorphism. 
\end{lemma} 
\begin{proof}
 Since $T(b_1)=T(b_2)$ we have $b_1\sim b_2$ and there exits an isomorphim $h:\{0,1\}^n\to \{0,1\}^n$ such that $b_1=b_2\circ h$. We extend $h$ to a measure-preserving homeomorphism $\hat{h}:\{0,1\}^\bfomega\to \{0,1\}^\bfomega$ by defining $\hat{h}(x_1\ldots x_n x_{n+1}\ldots)=\hat{h}(x_1\ldots x_n):x_{n+1}x_{n+2}\ldots$ as required. 
\end{proof}

 \remove{\begin{lemma}Given two semi-self-related step functions below of type $ (\Omega\to D_1)\to (\Omega\to D_2)$,
 \[f=\sup_{i\in I,j\in J_i} r_{ij}\chi_{[q_i\to \dua d_i]^{n_i}_j}\qquad f'=\sup_{i\in I',j\in J'_i} r'_{ij}\chi_{[q'_i\to \dua d'_i]^{n'_i}_j}\]
 the relation $f\approx^n f'$ holds iff, up to a permutation of indices, we have:
\[I=I'\quad \forall i\in I.\,J_i=J'_i\quad \forall i\in I,j\in J_i.\,r_{ij}\sim r'_{ij}\,\quad \forall i\in I.\,q_{i}=q'_{i}\;\&\; d_{i}=d'_{i}\; \& \;n_{i}=n'_{i}\;\&\;n\leq  n_{i} \]
\end{lemma}}
Let $S_{(\Omega\to D_1)\to (\Omega\to D_2)}\subseteq ((\Omega\to D_1)\to (\Omega\to D_2)) $ be the set of semi-self-related elements of $((\Omega\to D_1)\to (\Omega\to D_2))$ with the partial orderdefined by: $f\sqsubseteq_s f'$ if $f\sqsubseteq f'$ and there exist $n,n'$ s.t. $f\approx^nf$ and $f'\approx^{n'}f'$ and $n\leq n'$.

\begin{lemma}
  The poset $S_{(\Omega\to D_1)\to (\Omega\to D_2)}$ is a dcpo.   
\end{lemma}
\begin{proof}
Suppose $(f_i)_{i\in I}$  with $f_i\approx^{n_i} f_i$ is a directed set of semi-self-related  elements of $((\Omega\to D_1)\to (\Omega\to D_2))$ with $f=\sup_{i\in I}f_i$. Consider the two possible cases: 

(i) Let $\sup_{i\in I}n_i=n<\infty$. By removing a finite subset of $I$, we can assume $n_i=n$ for all $\in I$. We claim that $f\approx^n f$. Suppose $r\sim^n s$ for simple random variables $r,s\in(\Omega\to D_1)$. Then, since $f_i\approx^nf_i$ for all $i\in I$, we have: $f_i(r)\sim f_i(s)$ for $i\in I$. As $\sim_{(\Omega\to D_2)}$ is closed under directed sets we obtain $f(r)\sim f(s)$. Thus, $f\approx^nf$.

(ii) Next, let $\sup_{i\in I}n_i=\infty$. Assume $r\sim s$ for simple random variables $r$ and $s$. We show that $f(r)\sim f(s)$. We have $r\sim^m s$ for some $m\in \nat$. Let $i\in \nat$ be large enough such that $m\leq n_i$, which implies $r\sim^{n_i} s$. Then, from $f_i\approx^{n_i}f_i$ we obtain $f_i(r)\sim f_i(s)$ and thus $f(r)\sim f(s)$ by taking supremum.
\remove{Suppose $r\in [q\to \dua d]\iff s\in [q\to \dua d]$. Then, there exists a basic open set $C_r$ with $r[C_r]\subseteq \dua r$ and $\nu(C_r)>q$, and a basic open set $C_s$ with $s[C_s]\subseteq \dua d$ and $\nu(C_s)>q$. Let $m=\max\{d(C_r),d(C_s)\}$ and take $i\in I$ with $m\leq n_i$. Then, $f_i\sim^{n_i}f_i$ For each $m\in \nat$, the number of basic open sets of degree at most $m$ is finite, and thus there exists $i_m\in \nat$ such that $f_{i_m}\sim^m f_{i_m}$ and hence $f_i\sim^m f_i$ for $i\geq i_m$ which implies, by the first part (i), that $f\sim^m f$.} Hence, $f\sim f$.
\end{proof}
\remove{For a step function $\sup_{i\in I}d_i\chi_{O_i}:X\to D$ from a topological space $X$ to a bounded complete domain $D$, we call the open set $O_i$ for $i\in I$ a {\em ground or zero-level input open set} and the value $d_i$ for $i\in I$ a {\em ground or zero-level output value}. Similarly the intersection $\bigcap\{ O_{i_j}: {i_j\in I, 1\leq j\leq n}\}$ is called an {\em $n$-level input open set}. }
We now formulate a representation for semi-self-related step functions and show that they provide a basis for $S_{(\Omega\to D_1)\to (\Omega\to D_2)}$. 
We say a step function $\sup_{i\in I}d_i\chi_{O_i}:X\to D$ from a topological space $X$ to a bounded complete domain $D$ is in its {\em canonical form} if the collection of open sets $\{O_i:i\in I\}$ is closed by non-empty binary intersection. As an example, given two intersecting open sets $O_1$ and $O_2$ and two consistent elements $d_1,d_2\in D$, the step function  $(d_1\chi_{O_1})\sqcup (d_2\chi_{O_2})\sqcup ((d_1\sqcup d_2)\chi_{O_1\cap O_2})$
is in canonical form whereas $(d_1\chi_{O_1})\sqcup (d_2\chi_{O_2})$, representing the same step funciotn, is not. If $X$ is also bounded complete domain, then a sub-basic open set is given by $\dua u$ for $u\in X$ and we have $\dua u_1 \cap \dua u_2= \dua (u_1\sqcup u_2)$. Thus, the canonocal form of a step function in $((\Omega\to D_1)\to (\Omega\to D_2))$ can be represented as 
    \begin{equation}\label{canonical-form-step}\sup_{i\in I} r_i\chi_{\dua u_i} 
    \end{equation}  
where $\{\dua u_i: i\in I\}$ is the collection of all non-empty open sets in the semilattice generated by the open sets in the input of the step function satisfying the monotone condition: $r_i\sqsubseteq r_j$ whenever $u_i\sqsubseteq u_j$ for $i,j\in I$. It then follows that any step function has a unique canonocal repersentaiton. Recall that for each level $n\in \nat$ there are only a finite number of equivalent simple random varibales in $(\Omega\to D)$ equivalent to a given simple random variable. We define the {\em $n$-equivalence completion} of a set $E$ of $n$-equivalent simple random variables as the set $\hat{E}=\{r\in (\Omega\to D): \exists s\in E. r\sim^n s\}$. We say $E$ is {\em $n$-equivalence complete} if $E=\hat{E}$. \remove{More generally, if $F$ is a finite set of simple random variables then the {\em $n$-equivalence completion} of $F$ is the union $\hat{F}=\bigcup\{\hat{E}:E\mbox{ closed under $n$-equivalnce }\& E\subseteq F\}$ and $F$ is said to be {\em $n$-equivalence complete} if $F=\hat{F}$. }

Given a finite set $E\subseteq B_D$ of basis elements $B_D$ of a bounded complete domain $D$, we denote by $\operatorname{Lub}(E)$ the closure of $e$ by all lubs, i.e.,
\[\operatorname{Lub}(E)=E\cup\left\{\bigsqcup E':\mbox{consistent }E'\subseteq E\right\}.\]
\begin{proposition}\label{lub-complete}
 For a finite subset $E$, of simple random variables in $(\Omega\to D)$, we have:
 \[\operatorname{Lub}(\hat {E})=\widehat{\operatorname{Lub}}({\hat{E})}\]
\end{proposition}
\begin{proof}
    It is sufficient to show that for any pair of consistent simple random variables $r_1,r_2\in E$, whenever $r\sim (r_1\sqcup r_2)$, there exist $r_1',r_2'\in (\Omega\to D)$ with $r_1\sim r_1'$, $r_2\sim r_2'$ and $r=r_1'\sqcup r'_2$. So let $r_1=\sup_{i\in I}d_{1i}\chi_{C_{1i}}$ and $r_2=\sup_{j\in J}d_{2j}\chi_{C_{2j}}$, where $C_{1i}$ for $i\in I$ are pairwise disjoint cresents and $d_{1i}$ for $i\in I$ are distinct values of $r_1$ (including $\bot$), and similarly $C_{2j}$ for $j\in J$ are pairwise disjoint cresents and $d_{2j}$ for $j\in J$ are distinct values of $r_2$ (including $\bot$). For each $i\in I$, let $j_{i1},j_{i2},\ldots,j_{in_i}\in J$ be the set of indices $j\in J$ such that $C_{1i}\cap C_{2j}\neq\emptyset$. Then, $r_1\sqcup r_2=\sup_{i\in I}\sup_{1\leq k\leq n_i}(d_{1i}\sqcup d_{2j_{ik}})\chi_{U_{ij_{ik}}}$. Put $C'_i:=\bigcup_{1\leq k\leq n_i}U_{ij_{ik}}$ for $i\in I$. Similarly, define $C'_j$ for $j\in J$. Then, let $r_1':=\sup_{i\in I} d_{1i}\chi_{C'_i}$ and $r_2':=\sup_{j\in J} d_{2j}\chi_{C'_j}$. We have $r=r_1'\sqcup r_2'$ with $r_1\sim r'_1$ and $r_2\sim r'_2$.
\end{proof}
Consider a step function in its canonical form as in Equation~\ref{canonical-form-step}.
\begin{definition}
    We say the step function $\sup_{i\in I} r_i\chi_{\dua u_i}:(\Omega\to D_1)\to (\Omega\to D_2)$ is {\em $n$-equivalence complete} if for all $i\in I$ and $u\sim^n u_i$, there exists $i'\in I$ with $u=u_{i'}$. It is also {\em lub complete} if whenever $i_1,i_2\in I$ and $u_{i_1}\sqcup u_{i_2}$ exists then there exists $i\in I$ with $u_i=u_{i_1}\sqcup u_{i_2}$. The step function in {\em $n$-complete} if it is both lub complete and $n$-equivalence complete. 
\end{definition}
For example, consider a step function of the function space $((D_1,\sim)\to(D_2,\sim))$ made up of single-step functions with domain of definition given by the Scott open sets $[q\to \dua d]_j^{n} $ for $j\in J(n,q,d)$. It follows from Proposition~\ref{lub-complete} that the set $\operatorname{Lub}\{d\chi_{O_j}:j\in J(n,q,d)\}$ is $n$-equivalence complete. The non-empty open subsets of the semi-lattice of open sets generated by the above Scott open sets can be writen them as $\{\dua u_i:i\in I(n,q,d)\}$. Hence, the step function
$\sup_{i\in I(n,q,d)} r_i\chi_{\dua u_i}$ is an $n$-equivalence complete step function. 

More generally, consider a step function with input open subsets given by $[q_k\to \dua d_k]_j^{n_k} $ for $j\in J(n_k,q_k,d_k)$ and $k\in K$ with $n\leq n_k$ for $k\in K$. In its cannonical form, the collectoin of input open sets can be expressed as $\{\dua u_i: i\in I(n_k,q_k,d_k),k\in K\}$. Let $F=\{ u_i:i\in I(n_k,q_k,d_k),k\in K\}$ and put $\operatorname{Lub}(\hat{F})=\{u_i:i\in \hat{I}(n_k,q_k,d_k,K)\}$. Then the step function
\[\sup_{i\in \hat{I}(n_k,q_k,d_k,K)}r_i\chi_{\dua u_i}\]
is $n$-complete. 
\remove{\begin{lemma}\label{way-below-set}
  Let $r_i\in (\Omega\to D)$, for $i\in I$, be a finite collection for simple random variables.  Then for each $i\in I$, there exists a simple random variable $r\in (\Omega\to D)$ such that $r_i\ll r$, and the relation $r_j\ll r$ implies $r_j\sqsubseteq r_i$.
\end{lemma}}

\begin{proposition} Suppose $g=\sup_{i\in I}r_{i}\chi_{\dua u_i}$ is an $n$-complete step function. If $r_i\sim r_{i'}$ whenever $u_i\sim u_{i'}$, then $s\sim^n s'$ implies $g(s)\sim g(s')$. 
\end{proposition}
\begin{proof}
 Let and $s\sim^n s'$ and $u_i=\sup\{u_j:u_j\ll s\}$. Then $u_i\ll s$. Since $s\sim^n s'$, there exists $i'\in I$ with $u_{i'}\ll s'$ and $u_i\sim u_{i'}$. Thus $g(s)=r_i\sim r_{i'}=g(s')$.

\end{proof}

\remove{A {\em semi-self-related step function} of degree $n$ takes as its domain of open sets the finite union of open sets generated under $n$-equivalnce completion by $[q_k\to \dua d_k]_j^{n_k} $ for $j\in J(n_k,q_k,d_k)$ and $k\in K$, which we can write as \begin{equation}\label{sssss}g=\sup_{i\in \hat{I}(n_k,q_k,d_k,K)}r_{i}\chi_{\dua u_i}\end{equation} satisfying $g(t_1)\sim g(t_2)$ whenever $t_1\sim^nt_2$. }

We take the collection of all semi-self-related step functions and aim here to show that the set of semi-self-related step functions way-below any semi-self-related $f$ is directed with a supremum equivalent to $f$. }

\begin{theorem}\label{semi-self-from-step}
 Given any Scott continuous function $f:D_1\to D_2$ for $D_1,D_2\in \BC$, step functions of the form $\sup_{i\in I} r_{i}\chi_{\dua u_{i}}$, with $r_{i}\ll f(u_{i})$ for each $i\in I$, are way-below $f$  and are directed, with supremum $f$. 
\end{theorem}
\begin{proof}

\end{proof}
\remove{Consider the dcpo $S_{(\Omega\to D_1)\to (\Omega\to D_2)}$ of all semi-self-related functions in $(\Omega\to D_1)\to (\Omega\to D_2)$. 
\begin{definition}
   The binary relation $\ll_s$ is defined on  $S_{(\Omega\to D_1)\to (\Omega\to D_2)}$ by $g\ll_s f$ if
   for all directed set of semi-self-related functions $(h_i)_{i\in I}$ the relation $f\sqsubseteq \sup_{i\in I}h_i$ implies there exists $i\in I$ with $g\sle h_i$.
\end{definition}

\begin{corollary}
The two relations $\ll$ and $\ll_s$ coincide on $S_{(\Omega\to D_1)\to (\Omega\to D_2)}$.    
\end{corollary}
\begin{proof}
    Clearly $g\ll f$ implies $g\ll_s f$.  Next suppose $g\ll_s f$. By the theorem, $f=\sup\{g_0\ll f: g_0 \mbox{ semi-self-related step function}\}$. Thus, $g\ll_s f$ implies there exists $g_0\ll f$ with $g\sqsubseteq g_0$. Hence, for any directed set of functions $(h_i)_{i\in I}$ with $f\sqsubseteq \sup_{i\in I}h_i$ in $((\Omega\to D_1)\to (\Omega\to D_2))$, there exists $i\in I$ such that $g_0\sqsubseteq h_i$ which implies $g\sqsubseteq h_i$ as required. 
\end{proof}
\begin{corollary}
   The dcpo  $S_{(\Omega\to D_1)\to (\Omega\to D_2)}$ is a domain.
\end{corollary}

\begin{proposition}
   Consider a pair of self-related functions $f,f':(\Omega\to D_1)\to (\Omega\to D_2)$ with $f'= f\circ h$ a.e., where $h$ is a measure-preserving homeomorphirsm of the sample space $A$. Then, there exist two increasing chains of semi-self-related step functions $f_i,f'_i:(\Omega\to D_1)\to (\Omega\to D_2)$, for $i\in \nat$, with $f_i\ll f$, $f'_i\ll f'$, $f_i(t)\sim f_i'(t)$ for $t\in (\Omega\to D_1)$ and $\sup _{i\in \nat}f_i=f$  and $\sup_{i\in \nat} f'_i= f'$ a.e.
\end{proposition}
\begin{proof}
    Let $f_i\ll f$, for $i\in \nat$, with $f_i= \sup_{j\in J_i}\chi_{\dua u_{ij}}r_{ij} \ll f$ where $u_{ij}: A\to D_1$ and $r_{ij}:A\to D_2$ are step functions for $i\in\nat$ and $j\in J_i$, with $r_{ij}\ll f(u_{ij})$ be an increasing sequence of semi-self-related step functions with $\sup_{i\in \nat}f_i=f$. Put $f'_i=f_i\circ h$ for $i\in \nat$;  then  $f'_i$ is an increasing chain for $i\in \nat$. The continuity of the composition operation yields: $\sup_{i\in \nat} f'_i=\sup_{i\in \nat} (f_i\circ h)=(\sup_{i\in \nat} f_i)\circ h=f_i\circ h=f'_i$ a.e.

\remove{Since $f(u_{ij})\sim f'(u_{ij})$, there exists, by Proposition~\ref{way-below-random}, for $i\in I$ and $j\in J_i$, a step funciton $r'_{ij}$ with $r_{ij}\sim r'_{ij}$ and $r'_{ij}\ll f'(u_{ij})$, implying $f'_i:=\sup_{j\in J_i}\chi_{\dua u_{ij}}r'_{ij}\ll f'$. Since for any $t\in (\Omega\to D_1)$, we have $f_i(t)\sim f_i'(t)$, it follows that $f_i'(t)$ is an increasing sequence with $f(t)=\sup_{i\in \nat} f_i(t)\sim \sup_{i\in\nat}f_i'(t)\sqsubseteq f'(t)$. By Lemma~\ref{rand-below-eq}, it follows that $\sup f'_i=f'$ a.e.}

\end{proof}}
Finally, we consider the R-topology on function spaces of two PER domains.

 \begin{definition}
The {\em point-R-open} topology on the function space $((D_1,\sim_1)\to (D_2,\sim_2))$ is defined as the point-open topology when $D_2$ is equipped with its R-topology, i.e., it has open sets of the form $(t\to O)_p$, where $t\in D_1$ with $t\sim t$ and $O\in\gamma_{(D_1,\sim_1)}$ is R-open.
 
 \end{definition}

\begin{proposition}\label{R-open-func}
    Any open set of the point-R-topology of $((D_1,\sim_1)\to (D_2,\sim_2))$ is R-open.
\end{proposition}
\begin{proof}
    Consider the point-R-open set $(t\to O)_p$, where $t\sim_1 t$ and $O\in gamma_{(D_1,\sim_1)}$. Suppose $f\in (t\to O)_p$ and $f\sim g$. Then $f(t)\sim_2 g(t)$ and since $f(t)\in O$, it follows that $g(t)\in O$. Hence $g\in (t\to O)_p$ as required.
\end{proof}
\begin{proposition}\label{p-open-f-s}
In the function space $((\Omega\to D_1)\to (\Omega\to D_2))$, for $D_1,D_2\in \BC$, the R-topology coincides with the point-R-topology.
\end{proposition}
\begin{proof}
   First, by Proposition~\ref{R-open-func}, we already know that $(t\to O)_p$ is R-open for each $t\in (\Omega\to D_1)$ and each R-open set $O\subseteq (\Omega\to D_2)$. Now suppose $W\subseteq ((\Omega\to D_1)\to (\Omega\to D_2))$ is R-open and $f\in W$. We construct a point-R-open set of the form $\bigcap_{i\in I} (t_i\to O_i)_p\subseteq W$ with $f\in \bigcap_{i\in I} (t_i\to O_i)_p$. By Theorem~\ref{semi-self-from-step}, there exists, a step function 
   \[g=\sup_{i\in I}r_{i}\chi_{\dua u_{i}}\ll f\]
   with simple random variable $r_{i}\ll f(u_{i})$ for $i\in I$ and $g\in W$. By Proposition~\ref{way-below-random-random}, for each $i\in I$, there exists an increasing sequence of simple random variables $(r_{ij})_{j\in \nat}$ 
with $r_{ij}\ll r_{i(j+1)}$ and $T_{RD_2\to PD_2}r_{ij}\ll T_{RD_2\to PD_2}r_{i(j+1)}$ for all $j\in \nat$, satisfying $f(u_i)\sim \sup_{j\in \nat}r_{ij}$. Invoking Proposition~\ref{way-below-random}, there exists a simple random variable $s_i\in (\Omega\to D_2)$ with $s_i\ll \sup_{j\in \nat} r_{ij}$ and $r_i\sim s_i$. Hence, there exists $j\in \nat$ with $s_i\sqsubseteq r_{ij}$. Put $r_i':=r_{ij}$ and define the R-open set  $O_i:=T_{RD_2\to PD_2}^{-1}(T_{RD_2\to PD_2}r_i')$. It follows that we have $f(u_{i})\in O_{i}$ for each $i\in I$. Thus, $f\in \bigcap_{i\in I} (u_{i}\to O_{i})_p$, where the latter is a point-R-open set.
    It remains to show that $h\in \bigcap_{i\in I} (u_{i}\to O_{i})_p$ implies $h\in W$. But $h\in \bigcap_{i\in I} (u_{i}\to O_{i})_p$ gives $h(u_{i})\in O_{i}$. By Proposition~\ref{way-below-random}, there exists $r_i''\sim r'_i$ with $r_i''\ll h(u_i)$ for $i\in I$. Let $g':=\sup_{i\in I} r_i''\chi_{\dua u_i}$. Then $g'\sim g$ and hence $g'\in W$. On the other hand, $g'(t)\sqsubseteq h(t)$ for $t\in (\Omega\to D_1)$ and thus $g'\sqsubseteq h$. Hence, $h\in W$ as $W$ is upper closed.

\end{proof}

\remove{For the next result, We use the probability space $\Omega_0$. 
\begin{proposition}\label{p-open-f-s}
In the function space $((\Omega_0\to D_1)\to (\Omega_0\to D_2))$, for $D_1,D_2\in \BC$, the R-topology coincides with the point-R-topology.
\end{proposition}
\begin{proof}
   First, by Proposition~\ref{R-open-func}, we already know that $(t\to O)_p$ is R-open for each $t\in (\Omega_0\to D_1)$ and each R-open set $O\subseteq (\Omega\to D_2)$. Now suppose $W\subseteq ((\Omega_0\to D_1)\to (\Omega_0\to D_2))$ is R-open and $f\in W$. We construct a point-R-open set of the form $\bigcap_{i\in I} (t_i\to O_i)_p\subseteq W$ with $f\in \bigcap_{i\in I} (t_i\to O_i)_p$. By Theorem~\ref{semi-self-from-step}, there exists, a step function 
   \[g=\sup_{i\in I}r_{i}\chi_{\dua u_{i}}\ll f\]
   with simple random variable $r_i\in B_{(\Omega_0\to D_2)}$  and $r_{i}\ll f(u_{i})$ for $i\in I$ and $g\in W$.  Define R-open set $O_{i}=T^{-1}(\dua T(r_i))$. Since $r_{i}\ll f(u_{i})$, by Proposition~\ref{random-simple-way-above}, we have $f(u_{i})\in O_{i}$ for each $i\in I$. Thus, $f\in \bigcap_{i\in I} (u_{i}\to O_{i})_p$, where the latter is a point-R-open set. It remains to show that $h\in \bigcap_{i\in I} (u_{i}\to O_{i})_p$ implies $h\in W$. But $h\in \bigcap_{i\in I} (u_{i}\to O_{i})_p$ gives $h(u_{i})\in O_{i}$. By Proposition~\ref{way-below-random}, there exists $r_i'\sim r_i$ with $r_i'\ll h(u_i)$ for $i\in I$. Let $g':=\sup_{i\in I} r_i'\chi_{\dua u_i}$. Then $g'\sim g$ and hence $g'\in W$. On the other hand, $g'(t)\sqsubseteq h(t)$ for $t\in (\Omega_0\to D_1)$ and thus $g'\sqsubseteq h$. Hence, $h\in W$ as $W$ is upper closed.

\end{proof}}

\begin{corollary}
 The R-topology $\gamma_{((\Omega\to D_1)\to (\Omega\to D_2)}$ satisfies:  
  \[f\sim g \iff \forall O\in \gamma_{((\Omega\to D_1)\to (\Omega\to D_2)}. (f\in O \iff g\in O).\]
\end{corollary}
\begin{proof}
    We only need to check the RHS to LHS implication. Suppose $f\not\sim_{((\Omega\to D_1)\to (\Omega\to D_2)} g$. Then by definition there exists $r\in (\Omega\to D_1)$ such that $f(r)\not \sim g(r)$. Thus, by Corollary~\ref{R-open-iff}, there exists an R-open set $O\in \gamma_{(\Omega\to D_2)}$ such that $f(r)\in O$ and $g(r)\notin O$.  It follows that $f\in (r\to O)_p$ and $g\notin (r\to O)_p$ as required.
\end{proof}
\begin{corollary}
    The lattice of the R-topology of the function space $((\Omega\to D_1)\to (\Omega\to D_2))$ is continuous and countably based.
\end{corollary}
\begin{proof}
    We have $(t\to O)_p\subseteq (t\to O')_p$ iff $t\sqsubseteq t'$ and $O\subseteq O'$. It follows that $\gamma_{((\Omega\to D_1)\to (\Omega\to D_2))}\cong (RD_1)\times \gamma_{RD_2}$. Since $(\Omega\to D_2)\in \BC$ and $\gamma_{RD_2}$ is a countably based continuous lattice, the result follows. 
\end{proof}

Next, we consider continuous valuations with respect to the R-topology on the function space $((\Omega\to D_1)\to (\Omega\to D_2))$. First, we note the following general result.
\begin{proposition}
For any PER domain $(D,\sim)$, the set $\mathcal{P}(D)$ of continuous valuations on the topological space $(D,\gamma_D)$ is a dcpo and any such continuous valuation is the supremum of simple valuations in $(D,\Sigma_D)$ way-below it. 
\end{proposition}
 \begin{proof}
 Any continuous valuation of  $(D,\gamma_D)$ is of the form $\nu\circ r^{-1}$, where $r:A\to (D,\Sigma)$ is a random variable. Thus, any such continuous valuation is the restriction of a continuous valuation on $P(D,\Sigma)$ to the sub-topology $\gamma_{(D,\sim)}\subseteq \Sigma_D$. Thus, if $\nu\circ r^{-1}_i$, for $i\in I$, is a directed set of continuous valuations on $(D,\gamma_D)$, then the $(r_i)_{i\in I}$ is directed with $r=\sup_{i\in I}r_i$ as its supremum, implying that $\nu\circ r^{-1}$ is the supermum of the continuous valuations $\nu\circ r^{-1}_i$, for $i\in I$. Since $P(D,\Sigma)$ is a continuous, any continuous valuation is the supermum of simple valuations way-below it. 
\end{proof}

\begin{corollary}
    Any continuous valuation on  $(((\Omega\to D_1)\to (\Omega\to D_2)),\gamma)$ is the supremum of an increasing sequence of simple valuations on $(((\Omega\to D_1)\to (\Omega\to D_2)),\gamma)$.
\end{corollary}
\begin{proof}
 By Theorem~\ref{semi-self-from-step}, any element of the function space is equal to the supremum of a set of step functions way-below it. Thus, given any continuous valuation $\alpha$ on $((\Omega\to D_1)\to (\Omega\to D_2),\gamma)$, the collection of simple valuations of the form
 \[\beta=\sum_{i\in I}m_i\delta(f_i),\]
 where $f_i$ are step functions, way below $\alpha$ is directed with supremum $\alpha$. Let $f_i=\sup_{j\in J_i}\chi_{O_{ij}}r_{ij}$ for $O_{ij}=\bigcap_{j\in J_i}{[q_j\to \dua d_j]_{n_j}}$ and $r_{ij}:A\to D_2$ a step function for $i\in I$ and $j\in J_i$. Put $\hat{f_i}=\sup_{j\in J_i}\chi_{\hat{O}_{ij}}r_{ij}$ for $\hat{O}_{ij}=\bigcap_{j\in J_i}{[q_j\to \dua d_j]}$ and \[\hat{\beta}=\sum_{i\in I}m_i\delta(\hat{f}_i)\] Then $\hat{f_i}$ is an R-continuous step function for each $i\in I$ and thus $\hat{\beta}$ is a simple evluation in $((\Omega\to D_1)\to (\Omega\to D_2),\gamma)$ with $\beta\sqsubseteq \hat{\beta}\sqsubseteq \alpha$. It follows that $\alpha$ is the supremum of the directed set of simple valuations $\hat{\beta}$ for simple valuations $\beta\ll \alpha$.
\end{proof}

\remove{Now consider the domain $(\Omega\to (RD_1\to RD_2))$ for domains $D_1,D_2\in \BC$. 

\begin{proposition}
  The R-topology $\gamma$ on $(\Omega\to (RD_1\to RD_2))$ has sub-basic open sets $[q\to O]$ where $O\in \gamma_{(RD_1\to RD_2)}$.
\end{proposition}
\begin{proof}
    We already know that $[q\to O]$ where $O\in \gamma_{(RD_1\to RD_2)}$ is R-open. Suppose $M\subseteq (\Omega\to (RD_1\to RD_2))$ is R-open and $f\in M$. Let \[g=\sup_{k\in K}\chi_{U_k}s_k\in (\Omega\to (RD_1\to RD_2))\] be a step function built from semi-self-related complete step functions \[s_k=\sup_{i\in I_k} r_{ki}\chi_{\dua u_{ki}}\] for $k\in K$ in $(RD_1\to RD_2)$ with $g\ll f$, $r_{ki}\ll f(u_{ki})$, for $k\in K$ and $i\in I_k$, and $g\in M$. Then, for each $\bfomega\in A$, we have $g(\bfomega)\ll f(\bfomega)$. Let $r_{ki}=\sup_{j\in J_{ki}} \chi_{W_{kij}}t_{kij}$. Put $O_k=\bigcap_{i\in I_k}(u_{ki}\to O(r_{ki}))_p$, for each $k\in K$, where $O(r_{ki})=\bigcap_{j\in J_{ik}}[\nu(W_{kij})\to \dua t_{kij}]$. Then , we have: $f\gg g\in \bigcap_{k\in K}[\nu(U_k)\to O_k]\in \gamma_{(RD_1\to RD_2)}$. 
\end{proof}}
\remove{\section{PER domain with basis}

An effective construction of an PER domain can be given by
$(D, B, \equiv, C_o, C_p)$  where D is a continuous domain, B is a base, $\sim$ is an equivalence relation on the base and $C_o$ and $C_p$  are set composed by subsets of the base B, with the condition that any $c \in C_o$ is a subset of an equivalence class of $B$.

The self-related elements in $D$ are the elements that can be defined as 
$\bigsqcup (\bigcup_{i \in I} c_i)$ with $c_i \in C_p$.

While the $\bigcup_{b \in c} \dua b$ with $c \in C_o$ define a base for the R-topology.

On the hierarchy of domains constructed using the probabilistic powerdomain and the function space constructors the base can be defined as follows:

For standard domains, $\equiv$ is the identity relation and $C_o$ and $C_p$ are composed of the whole singletons.

For $PD = A \to D$ the elements of the base, written in canonical form, are step functions in the form:
$\sup_{j\in J}b_j\chi_{C_j}$, with $C_j$ disjoint rational crescent and $b_j$ element of the base of $D$.

The equivalence relation on the base is defined by:
\[
\sup_{j\in J}b_j\chi_{C_j} \equiv \sup_{j\in J}b'_j\chi_{C'_j} \mbox{ iff }
\forall j\in J . b_j \equiv b'_j , \mu(C_j) = \mu(C_j) 
\]

The set $C_o$ is composed of sets in the form:
\[
\{ \sup_{j\in J}b_j\chi_{C_j}  \mid  \mu(C_j) = q_j, b_j \in c_j
\}
\]
with $q_j$ rational number and $c_j$ in $C_o$ of the domain $D$.

The set $C_p$ is composed of sets in the form:
\[
\{ \sup_{j\in J}b_j\chi_{C_j}  \mid  b_j \in c_j
\}
\]
with $C_j$ fixed and $c_y$ in $C_p$ of the domain $D$.

For $D \to E$ the elements of the base, written in canonical form, are step functions in the form:
$\sup_{j\in J}e_j\chi_{d_j}$, with $d_j$ disjoint rational crescent in $D$ and $e_j$ element of the base of $E$.

The equivalence relation on the base is defined by:
\[
\sup_{j\in J}e_j\chi_{d_j} \equiv \sup_{j\in J}e'_j\chi_{d'_j} \mbox{ iff }
\forall j\in J . d_j \equiv d'_j , d_j \equiv d'_j
\]

The set $C_o$ is composed of sets in the form:
\[
\{ \sup_{j\in J}e_j\chi_{d_j}  \mid e_j \in c_{Ej},  d_j \in c_{Dj} 
\}
\]
with $c_{Dj}$ in $C_p$ of the domain $D$ and  $c_{Dj} $ in  $C_o$ of the domain $E$.

The set $C_p$ is composed of sets in the form:
\[
\{ \sup_{j\in J}e_j\chi_{d_j}  \mid e_j \in c_{Ej},  d_j \in c_{Dj} 
\}
\]
with $c_{Dj}$ in $C_o$ of the domain $D$ and  $c_{Dj} $ in  $C_p$ of the domain $E$.}

}
\section{A probabilistic functional language}

Next, we present a simple functional language, which we call PFL, having a probabilistic evaluation mechanism and enriched with two primitive operators for nondeterminism: $\sample$ to sample real values from a uniform distribution on the interval $[0,1]$ and a primitive $\score$ that changes the weight associated with different traces of the probabilistic computation. The language can be described as Plotkin's PCF, a simply typed $\lambda$-calculus with constants, extended with a type for real numbers and probabilistic choice.
In defining the language, we choose a minimalistic approach and introduce only those types and constructors necessary to illustrate our approach to probabilistic computation semantics.

The type of real numbers is necessary to deal with probabilistic distribution over continuous spaces.  PFL performs a simple form of exact real number computation.
In this way we can define a precise correspondence, i.e., adequacy, between the denotational semantics base on continuous domain, and the operational semantics without bothering about the errors generated by the floating-point computation, but also without introducing unrealizable assumption in the language as having a language constant for each real-number or a decidable order relation $<$ on real-numbers.
In defining the language, we choose a minimalistic approach and introduce only those types and constructors necessary to illustrate our approach to probabilistic computation semantics.


The types of the language are defined by the grammar: 
\[\tau \ ::=  \iota \ \mid \ \ o \  \mid \ \nu \ \mid \  \rho \ \mid\  \ \tau \rightarrow \tau \]
where $\iota$ is the unit type, $o$ is the type of booleans, $\nu$ is the type of natural numbers, $\rho$ is the type of real numbers.

Some languages, e.g.|~\cite{GoubaultJT23}, introduce a type constructor $D$ for probabilistic computation: in this setting, the terms of type $\rho$ can perform only deterministic computations and return single values, while terms of type $D \rho$ can also perform probabilistic computation and return a distribution.  For simplicity, we consider all terms as potentially probabilistic and therefore denoting distributions.  By standard techniques, it would be straightforward, to transform our type system into a more refined one. 
The set of expressions in the language is defined by the grammar: 
\begin{equation} \label{grammar}
   e ::=  \lo{c} \ \mid\ x^\tau \ \mid\ e_1 e_2 \ \mid\ \lambda x^\tau \ldot e \ \mid\  \lo{ite} e e_1 e_2 \ \mid \pr e \
\end{equation}
where $x^\tau$ ranges over a set of typed variables, and $\lo{c}$ over a set of constants. The constructor $\lo{ite}$ is an if-then-else operator.  The constructor $\pr :  \rho \rightarrow \rho$ represents a function projecting a value on the unit interval $[-1,1]$. The behaviour of $\pr$ is defined by $\pr(x) = \rmax(-1, \rmin(x , 1))$.  The constructor $\pr$ plays an essential role in allowing the termination of recursive definition of functions on the real type $\rho$.  

The main constantsin probabilistic computation, are:
 
 \begin{itemize}
 \item For any pair of rational numbers $a \leq b$, a real constant $\inter{[a,b]} : \rho$ representing the result of a partial computation on real numbers, returning the interval $[a,b]$ as an approximation of the exact result.

\item a sample function $\sample : \rho$ implementing a probabilistic computation returning a real number uniformly distributed on interval $[0,1]$  

\item for any function type $\sigma \to \tau$ a fixed point operator $Y_{\sigma \to \tau} : ({\sigma \to \tau} \to \sigma \to \tau) \to \sigma \to \tau$

\item  an integration functional $\integral_\rho : (\pi \rightarrow \rho) \rightarrow \rho$, giving the integral of a function on the interval $[0,1]$, 

\end{itemize}

We do not fully specify the set remaining constants. They should be chosen in such a way to guarantee the  definability of any computable function on natural and real numbers. Typically the set of constants contains the test for greater that zero functions on natural and real numbers  $(0<)^{\rho} : \rho \to o$ and some basic arithmetic operations on natural numbers and real numbers such as $\cplus^{\nu}: \nu \to \nu \to \nu$, $\cplus^{\rho} \rho \to \rho \to \rho$.

\hypertarget{operational-semantics}{%
\subsection{Operational semantics}\label{operational-semantics}}

We define a small-step operational semantics. In the operational semantics, we need to address two main problems: modeling probabilistic computation and implementing exact real number computation.

The modeling of probabilistic computation is approached using sample-based operational semantics \cite{Park2008,Lago2019}. The intuitive idea is that probabilistic computation is modeled by deterministic computation with an extra parameter that introduces a source of randomness; in our case, this extra parameter is a sequence of random bits. Each bit is independent of the previous ones and has equal probability of being $0$ or $1$.

The $\score$ operator adds an extra level of complexity, since different execution traces, generated by different sequences of random bits, may have different weights. This is handled by explicitly maintaining the weight as an additional parameter during evaluation. The probability of obtaining a value $v$ as the result of the computation is then computed by integrating the weights over all random bit sequences that lead to that result $v$.

In contrast to other approaches to probabilistic computation, we directly address the issue of real number computation by defining a way to implement exact computation on real numbers.  Exact real number computation is considered also in \cite{GoubaultJT23}, but there the operational semantics is simplified by assuming the existence of a constant for each real number. To avoid extra complexity, we use a simplistic approach. The valuation of an expression $e$ receives an extra parameter, a natural number $n$, representing the level of precision, i.e., the effort with which the computation of $e$ is performed. If $e$ is an expression of type real, the result of the computation will be a rational interval $[a,b]$. Greater values of $n$ will lead to smaller intervals and, therefore, more precise results. In a practical implementation of exact real number computation, to avoid unnecessary computation, it is necessary to choose the precision used for the intermediate result carefully. Here we neglect the efficiency problem and content ourselves with the fact that sufficiently great values of $n$ will return the final result with arbitrary precision.  This approach to the operational semantics is somewhat similar to and inspired by~\cite{ShermanMC19,bauer08}.

We implement the three points outlined above using two techniques. First, within the operational semantics rules, we extend the syntax of PFL by adding expressions of the form $\langle e \mid s, n \rangle$.
 Here $e$ is the PFL term to evaluate, and $s$ and $n$ are two parameters used in the evaluation:
\begin{itemize}
\item $s$ is the source of randomness, a string of digits generated by the random bit generator.
\item $n$ 
(or the pair $(m,n)$) 
is the level of accuracy for real value calculations when evaluating the expression $e$.
\end{itemize}
Formally the extended language is obtained by adding to the grammar defined in Equation \ref{grammar} the new production 
\[ e::=  \langle e \mid s, n \rangle   \]
with $ s \in \{0,1\}^\star$ and $n \in \nat$.

Second, the operational semantics reduction steps are of the form $(w,e) \to (w',e')$, where $e$ is an extended PFL expression and $w$ is a rational number representing the estimated weight of the current computational trace.

The operational semantics makes it possible to derive judgments of the form $(\langle e \mid s, n \rangle, w) \to^* (\inter{[a,b]}, w')$, whose intended meaning is that the expression $e$, given the source of randomness $s$ and accuracy level $n$, reduces to the interval $[a,b]$. During this reduction, the weight of the computation trace changes from $w$ to $w'$.

Higher values of the parameter $n$ imply more effort in the computation so it will always be the case that if  $n_1 \leq n_2$, and $s_1 \sqsubseteq s_2$,  $\langle e \mid s_1, n_1 \rangle \to \inter{[a,b]}$ and $\langle e \mid s_2, n_2 \rangle \to \inter{[c,d]} $ then $[a,b]  \supseteq {[c,d]}$.  We prove that the valuation of $\langle e \mid s_0,  0 \rangle, \langle e \mid s_1, 1 \rangle, \langle e\mid s_2,   2\rangle$, \ldots,  with $s_0 \sqsubseteq s_1 \sqsubseteq s_2$ \ldots,  produces a sequence of intervals each one contained in the previous one and converging to the denotational semantics of $e$.

Given a standard term $e$, the reduction steps start from an extended term in the form $(\langle e \mid s, n \rangle, 1)$. The reduction steps are of two kinds. Some reduction steps substitute some subterm $e'$ of $e$ with some extended terms of the form $\langle e' \mid s', n' \rangle$; essentially, these steps define the parameters used by the computation on $e'$. Other reduction steps perform more standard computation steps, for example, $\beta$-reduction.

 The valuation mechanism in PFL is call-by-value. A call-by-name mechanism of valuation will have a problem when a term representing a real number is given as an argument to a function, the formal argument of the function can appear in several instances inside the body of the function, and there is no way to force these several instances of the formal argument to evaluate to the same real number.
This behaviour is unnatural since it introduces in the computation more non-determinism than one would like to have.   

The computation halts on pairs $(e, w)$, where $e$ is a value. The values, $v$, on the types $o$ and $\nu$, are the Boolean values and natural number constants. The values for type $\rho$ are the rational intervals, $\inter{[a,b]}$. The values for function types are the $\lambda$-expressions and the functions constant in the language. 

The valuation contexts are standard ones for a call-by-value reduction:
\[ 
E[\_] ::= [\_] \ \mid\ E[\_] \, e  \ \mid\ v \, E[\_] \ \mid\ \lo{op} v \, E[\_]  
\ \mid\ \lo{ite} E[\_]  \, e_1 \, e_2 
\] 

For any reduction context $E[\ ]$, there is the rule:
\[
\frac{(e_0, w_0) \to (e_1, w_1)}{(E[e_0], w_0) \to (E[e_1], w_1)} 
\]
The reduction rules for application are: 
\begin{align*}
(\langle e_1 e_2 \mid s, n \rangle, w) & \to  (\langle e_1 \mid s_e, n \rangle \langle e_2 \mid s_o, n \rangle , w) 
\\ 
(\langle \lambda x^\tau \ldot e_1 \mid s, n \rangle, w) & \to  (\lambda x^\tau \ldot \langle  e_1 \mid s, n \rangle, w)
\\
((\lambda x^\tau . e) v, \, w) & \to (e [v / x^\tau],\, w)
\end{align*}
where $s_e$ and $s_o$ denote the sequences of bits appearing in the even, respectively odd, positions in $s$.

The constant $\sample$ is not a value and the associated reduction rule is:
\begin{multline*}
(\langle \sample \mid \langle s_i \rangle_{1 \leq i \leq m} , n \rangle, w) \\
\to ([\Sigma_{i=1}^{o} s_i \cdot 2^{-i}, \ (\Sigma_{i=1}^{o} s_i \cdot 2^{-i}) + 2^{-n} ] , w)  
\end{multline*}
with $o = \min(m, n)$

The rule for $\score$ is the only one explicitly modifying the weight associated to a computation:
\[
(\score [a,b], w) \to ( (), a \cdot w)
\]

The constructor $\pr$, as $\lo{ite}$, does not always force the valuation of its argument:
\begin{align*}
(\langle \pr \mid s, 0 \rangle e, \, w) & \to ([-1,1], \, w) \\
(\langle \pr \mid s, (n+1) \rangle [a,b], w) & \to
\begin{cases}
    ([-1,-1], w) & \mbox{if } b \leq -1 \\
    ([1,1], w) & \mbox{if } a \geq 1 \\
    ([a,b] \cap [-1,1], w) & \mbox{otherwise}
\end{cases}
\end{align*}

For the other constructors and constants $c$ we have the rules: 
$$
( \langle c \mid s, n \rangle, w) \to (c, w),
$$
together with specific rules concerning the constants in the language that denotes functions. 

For the fixed point operator, $Y_\sigma$, we distinguish whether the type $\sigma$ is in the form 
$\sigma = \sigma_1 \rightarrow \ldots \sigma_n \rightarrow \rho$, or not; that is, if the final return value is a real number or not.  
In the first case, the operational semantics makes use of the parameter $n$, and it is defined as
\begin{align*}
(Y_{\sigma} (\lambda F \ldot \langle e \mid s, n \rangle),\, w) & \to 
  (\langle e [(Y_{\sigma} \, \lambda F \ldot e)/ F] \mid s, (n - 1) \rangle,\, w)
\end{align*}
Intuitive explanation: in order to have a convergent computation, in an expression of the form $Y_{\sigma} (\lambda F \ldot e)$ any recursive call of $F$ should be guarded by an application of the projection operator $\pr$, that is any instance of $F$ in $e$, should be inside an expression in the form $\pr e'$. In this way, when the parameter $n$ in the valuation of the recursive call
$\langle e [(Y_{\sigma} \, \lambda F \ldot e)/ F] \mid s, n \rangle$ reaches the value $0$, the recursive call 
$(Y_{\sigma} \, \lambda F \ldot e)$ is avoided and $\pr e'$ generates the value $[-1,1]$.

In the second case we use the more standard rule: 
\[
(Y_{\sigma} (\lambda F \ldot \langle e \mid s, n \rangle),\, w) \to 
  (\langle e [(Y_{\sigma} \, \lambda F \ldot e)/ F] \mid s, n \rangle,\, w) 
\]
The operational semantics rules for the constants representing the operation on real numbers $(0 <), +, -, *, /, \rmin, \rmax$ are the standard ones, for example:
\begin{align*}
\\
  ((0 <)\inter{[a,b]},\, w) & \to (\ttt,\, w) \ \ \text{ if } a > 0
\\
  ((0 <)\inter{[a,b]},\, w) & \to (\ff,\, w) \ \ \, \text{ if }  b < 0
\\
 (\inter{[a_1, b_1]} + \inter{[a_2, b_2]},\, w) & \to (\inter{[a_1 + a_2, b_1 + b_2]},\, w)
\\
\ldots  
\end{align*}
The remaining rules are standard for a call-by-value typed $\lambda$-calculus. 


\subsection{Denotational semantics}\label{denotational-semantics}

A call-by-value $\lambda$-calculus is usually modelled considering a category of domains without bottom~\cite{Win93}, in which to interpret values, and raise objects, with bottom, in which to interpret general expressions. In our setting it is quite straightforward to define a category of PER domains without bottom $\PD_s$; for lack of space we omit the explicit definition.
The semantics interpretation is given in terms of the computational monads.  In our case, the monad of computations, transforming the domain of values, on a given type, to the domain of computation, is defined as $\mathcal{C} = \mathcal{R}_0 \circ (\ )_\bot$ where $(\ )_\bot$ is the lifting monad, necessary to give semantics to recursion, and $\mathcal{R}_0$ is the monad for random variables. The lifting monad $(\ )_\bot$ is generated by an adjunction pair from the category of PER domains without bottom $\PD_s$ to the category of PER domains with bottom, $\PD$, and strict functions as morphism, while $\mathcal{R}_0$ is a monad on the category of $\PD_\bot$, it follows that $\mathcal{C}$ can be defined as the composition of an adjunction pair and therefore is a monad. Since both $\mathcal{R}_0$ and $(\ )_\bot$ are strong commutative monads, $\mathcal{C}$ is a commutative monad too.

The natural transformations 
$\eta^\mathcal{C}_{\langle D, \sim_D\rangle} : \langle D, \sim_D\rangle \to \mathcal{R}_0(\langle D, \sim_D\rangle)_\bot$ and 
$\mu^\mathcal{C}_{\langle D, \sim_D\rangle} :   \mathcal{R}_0(\mathcal{R}_0(\langle D, \sim_D\rangle)_\bot)_\bot \to \mathcal{R}_0(\langle D, \sim_D\rangle)_\bot$ making $\mathcal{C}$ a monad are:  
\begin{align*}
   & \eta^\mathcal{C}_{\langle D, \sim_D\rangle} (d) =  (\mathbf{1}, \lambda \bfomega \,.\,\lifted{d}) \\
   & \mu^\mathcal{C}_{\langle D, \sim_D\rangle} (w,r) =  
    \\
   & \hspace{1em} \begin{aligned}[t]
           & (\lambda \bfomega \,.\, \mathrm{let}\ (w_1, r_1) \Leftarrow r (h_1 (\bfomega)) \ \mathrm{in} \ w(h_1 (\bfomega)) \cdot w_1(h_2 (\bfomega)), \\
     & \ \lambda \bfomega \,.\, \mathrm{let}\ (w_1, r_1) \Leftarrow r(h_1 (\bfomega)) \ \mathrm{in} \ r_1(h_2 (\bfomega))) 
    \end{aligned}
\end{align*}
where $\lifted{d}$ denotes the lifted element of $d$, while \\ $\mathrm{let}\, x \Leftarrow d_1 \, \mathrm{in} \, d_2(x)$, denotes the value $\bot$ if $d_1$ is $\bot$, and the value $d_2(d)$ if $d_1 = \lifted{d}$.

The  PER domain $\Dom_\tau$, used to give semantic interpretation to values having type $\tau$, is recursively defined as follows: $\Dom_o = \langle \{ \lo{ff}, \lo{tt} \}, \id_o \rangle$, $\Dom_\nu = \langle \nat, \id_\nu \rangle$, $\Dom_\rho = \langle \realDom, \id_\rho \rangle$,  $\Dom_{\sigma \to \tau} = (\Dom_\sigma \to C \, \Dom_\tau)$, where $\id_o, \id_\nu, \id_\rho$ are the identity relations on the corresponding domains.

The semantic interpretation of \emph{deterministic} constants is derived by the standard deterministic semantics of this operator by the composition of the natural transformation $\eta$.  For example the denotation of the successor operator $\lo{succ}$ on natural numbers, $\bsem{\lo{succ}}$ is an element on $\Dom_{\nu \to \nu} = \Dom_\nu \to \mathcal{C} \, \Dom_\nu$ defined by  $\eta_{\Dom_\nu} \circ \bsem{\lo{succ}}^{\mathbf{st}}$, where $\bsem{\lo{succ}}^{\mathbf{st}}$ is the standard, deterministic semantics of $\lo{succ}$.

The semantic interpretation of $\sample$, that it is not considered a value, is parametric on the sample space $\Omega$ and on a function $u : \Omega \to \realDom$ which is measure preserving:
for $\Omega = [0,1]$, $u$ is the immersion function, for $\Omega = 2^\nat$,
$u (\langle x_i \rangle_{i \in \nat}) = {\Sigma_{i=1}^{\infty} x_{i} \cdot 2^{-i}}$

\[ \esem   
{\sample}_{\sigma} : \mathcal{C} \Dom_\rho = ( \mathbf{1}, \,\lambda \bfomega \,. \lifted{u(\bfomega))}
\] 
The semantics of $\score$ satisfies the condition of being a continuous function from $\realDom \to (\Omega \to \overline{\realLine^+}$), and it is consistent with the weight function providing a lower bound for the valuation function associated with a random variable. 
\[ \esem   
{\score}_{\sigma}([a,b] ) = (\lambda \bfomega \,.\, a, \, () )
\] 

The semantic interpretation of the fixed-point operator is: 
\begin{align*}
&  \bsem{\lo{Y}_{\tau} } F  = \bigsqcup_{i \in \nat} (\mu^\mathcal{C}_{\Dom_\tau} \circ F)^i (\bot_\tau) 
\end{align*}

with $F : \Dom_{\tau \to \tau} = (\Dom_\tau \to \mathcal{C} \Dom_\tau)$

The semantic interpretation function \(\mathcal E\) is defined, by structural induction, in the standard way. 
More specifically, given a (well typed) term $e$ of type $\tau$ with free variables on the list $x_1^{\tau 1}, \ldots x_n^{\tau n}$, the denotational semantics of $e$, $\esem{e}$ is a continuous function from PER domain $\Dom_{\tau 1} \times \ldots \times \Dom_{\tau n}$ into $C \Dom_{\tau}$ preserving the partial equivalence relation, i.e. $[\esem{e}]$ is a morphism on the category of PER domains. 

\[\begin{aligned}
  \esem{c}_\rho &= \eta (\bsem{c})\\
  \esem{x^\tau}_\rho &= \eta (\rho(x^tau))\\
  \esem{e_1 e_2} &= \mu \circ (T \lo{eval}) \circ t \circ  \langle \esem{e_1}, \esem{e_2}\rangle \\
  \esem{\lambda x^\tau.e}_\rho &= \lambda d \in \mathcal D_\rho.\esem{e}_{(\rho[d/x])}\\
\end{aligned}\]

\subsection{Adequacy}\label{adequacy}

The correspondence between the denotational and the operational semantics is shown by the following result. 
Let $S$ be a finite set of disjoint binary sequences, we define the measure of $S$ as 
$\mu(S) = \Sigma_{s \in S} 2^{- \mathrm{lengh}(s)}$. 

For a rational number $q$, constant $c$ and closed FPL expression  $e$ of ground type $\tau$, i.e. $\tau \in \{ \iota, o, \nu, \rho \}$  we write \( (c,q) \ll_o Eval(e)\) if there exists a natural number $n$, a finite set of disjoint binary sequences $S$ and an $S$-indexed set of rational numbers $\{ q_s | s \in S\}$ such that 
for any $s \in S$, $(\langle e, n, s \rangle,\, 1) \to^* (d,\, q_s) $ with  $q < \Sigma_{s \in S}\, q_s$ and $c = d$ for the ground types $ \iota, o, \nu$ , for the ground type $\rho$  $d$ has to be an interval contained in the interior of the interval $c$   ($c= [a,b], d =  [a',b'] $ and $a < a' < b' < b$).

Similarly, for any PER domain $(D, \sim)$, element $d \in (D, \sim)$, $(w, r) \in \mathcal{C}(D, \sim)$, and rational number $q$, we write 
$(d, q) \ll_d (w, r)$ if $q < \nu_w r^{-1}(\dua d)$.
\begin{theorem} \label{adequacyT}
The operational semantics is sound and complete with respect to the denotational semantics, that is for any  ground type $\tau$,  constant$c$,  closed expression $e$ of type $\tau$  and dyadic number $q$, $(c, q) \ll_o Eval(e)$ iff 
$(\bsem{c}^{\mathbf{st}}, q) \ll_d \esem{e}$ 
\end{theorem}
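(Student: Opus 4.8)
The plan is to prove the two implications separately: operational soundness, $(c,q)\ll_o \texttt{Eval}(e) \Rightarrow (\bsem{c}^{\mathbf{st}},q)\ll_d \esem{e}$, and adequacy, the converse, which is substantially harder and requires a logical-relations argument. For soundness I would first establish a single-step invariant showing that every reduction $(\langle e\mid s,n\rangle, w)\to(e',w')$ is consistent with the denotation, in the sense that the weighted pushforward $T_0$ of the extended configuration stays below $\esem{e}$ in the valuation order. The combinatorial heart of this direction is that a family of successful reductions $(\langle e\mid s,n\rangle,1)\to^*(\inter{[a,b]},q_s)$ over a finite set $S$ of pairwise-disjoint bit-strings corresponds to a family of disjoint cylinder sets in $\Omega=\{0,1\}^\nat$ of measures $2^{-\mathrm{length}(s)}$, each carrying accumulated weight $q_s$ and producing an interval with $c\subseteq[a,b]^\circ$.

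Writing $\esem{e}=(w,r)$, summing these contributions and using that $T_0(w,r)(\dua\bsem{c}^{\mathbf{st}})=\nu_w(r^{-1}(\dua\bsem{c}^{\mathbf{st}}))$ integrates exactly these weights yields $\sum_{s\in S}q_s\le \nu_w(r^{-1}(\dua\bsem{c}^{\mathbf{st}}))$, which is precisely $(\bsem{c}^{\mathbf{st}},q)\ll_d\esem{e}$. This proceeds by induction on reduction length, with the $\score$ rule accounting for the weight multiplication, the $\sample$ rule generating the cylinder structure, and the application rule splitting $s$ into its even and odd subsequences in agreement with the measure-preserving $h$ underlying $\mu^{\mathcal{C}}$.

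For adequacy I would define, for each type $\tau$, a pair of formal approximation relations--one between value-domain elements $d\in\Dom_\tau$ and operational values, and one between computations $(w,r)\in\mathcal{C}\,\Dom_\tau$ and closed terms--following the usual recipe: at ground types relate $d$ to $e$ by requiring every $(d,q)$-approximant to be operationally realizable, and at $\sigma\to\tau$ set $f\vartriangleleft g$ iff $a\vartriangleleft_\sigma b$ implies $f(a)\vartriangleleft_\tau g\,b$. The core is the Fundamental Lemma: every well-typed term, under any substitution of related arguments, has $\esem{e}$ related to its operational behaviour. This is shown by induction on the structure of $e$, handling application, abstraction, $\lo{ite}$ and $\pr$ through the monadic and lifting structure of $\mathcal{C}$, and treating $\sample$ and $\score$ directly from their operational and denotational clauses.

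The main obstacle will be the fixed-point operator interacting with the precision parameter. For function types returning a real number, the operational rule decrements $n$ at each unfolding and lets $\pr$-guarded recursive calls bottom out at $[-1,1]$ once $n=0$, whereas denotationally $\bsem{\lo{Y}_{\tau}}F=\bigsqcup_{i\in\nat}(\mu^{\mathcal{C}}_{\Dom_\tau}\circ F)^i(\bot)$. I must show these coincide through the relation, which forces me to prove that each approximation relation is \emph{admissible}--it contains $\bot$ and is closed under suprema of increasing chains--so that relatedness transfers to the Kleene limit, and simultaneously that raising $n$ operationally corresponds to climbing this chain. Making this interaction precise, so that the finite-precision, finite-trace operational approximations converge from below to the integral $\nu_w r^{-1}(\dua\bsem{c}^{\mathbf{st}})$ defining the denotational valuation, is the crux of the argument; the measure-theoretic approximation supplied by continuity of $T_0$ and approximation by simple valuations supported on cylinders, as in Proposition~\ref{functionalF}, is what guarantees that the supremum of operationally reachable weights actually attains that integral.
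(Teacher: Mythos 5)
Your proposal is correct and follows essentially the same route as the paper: the paper's proof is a brief sketch invoking the standard computability-predicate (logical relations) technique, defining the predicate at ground type as exactly the biconditional of the theorem statement and extending it to arrow types by preservation and closure. The only cosmetic difference is that you factor the soundness direction out as a separate induction on reduction sequences, whereas the paper folds both directions into the single predicate; your plan is otherwise a faithful (and considerably more detailed) elaboration of what the paper leaves implicit, including the genuinely delicate points --- admissibility of the relation, the interaction of the fixed-point operator with the precision parameter, and the measure-theoretic convergence of the cylinder-supported operational approximants to $\nu_w r^{-1}(\dua \bsem{c}^{\mathbf{st}})$.
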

{
\begin{proof}
We use the standard proof technique of computability predicates.

We define a computability predicate $\Comp$, first on closed terms of groung type \( \tau\), by requiring that the denotational and operational semantics coincide. That is, 
$\Comp(e)$ when for all $q, c$ $(c, q) \ll_o Eval(e)$ iff 
$(\bsem{c}^{\mathbf{st}}, q) \ll_d \esem{e}$ 

The computability predicate is then extended to closed expressions of arrow type by requiring the preservation, to closed elements of any type, and by closure, to arbitrary elements.

Using the standard technique of computability predicates, it is possible to prove that all constants are computable, and that \(\lambda\)-abstraction preserves the computability of expressions. Therefore all expressions are computable.
\end{proof}
}

\subsection{Conditional probability }
 The conditional probability can be expressed in our language in the following way.
Given a term $e_r : \tau$ describing an extended random variable $r$ on $(\Dom_\tau)_\bot$, that is, $r \in \mathcal{C}(\Dom_\tau) = (\Omega \to \overline{\realLine^+}, \Omega \to (\Dom_\tau)_\bot)$ such that $\esem{e_r} = r$, and
a deterministic term $e_t : \tau \to o$ describing a disjoint event-pair $(O_1, O_2)$, that is $\esem{e_t} \in C(\Dom_\tau \to C(\Dom_o)) = \eta^{\mathcal{C}}_{\Dom_\tau \to C(\Dom_o)}(f)$ with $f : \Dom_\tau \to C(\Dom_o)$ and $O_1 = f^{-1}(\eta^{\mathcal{C}}_{\Dom_o}(\ttt))$ and $O_2 = f^{-1}(\eta^{\mathcal{C}}_{\Dom_o}(\ff))$,
the conditional random variable $r_{(-,(O_1,O_2))}$ is equal to the denotational semantics of the term:
\[Y_{\tau} (\lambda x^{\tau} \,.\, ( \lambda y^{\tau} \,.\, \operatorname{if} \, e_t \, y^{\tau} \, \operatorname{then} \, y^{\tau} \, \operatorname{else} \, x^{\tau}) \, e_r )\]
A general function whose arguments are two variables $x^{\iota \to \tau}_r$ and $x^{\tau \to o}_t$, with $x^{\iota \to \tau}_r$ the descriptions of a random variable on $r$ $\Dom_\tau$ and $x^{\tau \to o}_t$ the description of a disjoint event-pair in $\Dom_\tau$ returns the conditional probability can be define as follows:  
\begin{multline*}
    \lambda x^{\iota \to \tau}_r \,,\, x^{\tau \to o}_t \,.\, \\
    Y_{\tau} (\lambda x^{\tau} \,.\, ( \lambda y^{\tau} \,.\, \operatorname{if} x^{\tau \to o}_t y^{\tau} \operatorname{then} y^{\tau} \operatorname{else} x^{\tau}) (x^{\iota \to \tau}_r *) )
\end{multline*}
 It is convenient to describe random variable $r$ in the domain $\Omega \to D$ using thunks, that is terms in the form $\lambda x^\iota \, . \, e$, a thunk is a to pass, in call-by-value language an expression $e$ to function $f$ without forcing its valuation, or in other terms, to simulate a call-by-name passage of arguments in a call-by-value language.  

\section{Conclusion}
We have presented a domain-theoretic framework for probabilistic programming that addresses four main issues: the treatment of conditional probability, continuous distributions,  exact computation, and higher-order functional programming. 

In particular, we have shown that in the domain-theoretic framework of open sets as observable events, the conditional probaility and conditional random variables are computable on effectively given domains.

Our framework provides a constructive definition of conditional probability through observable events and proves the consistency between scoring-based and rejection sampling approaches.

Future work includes developing interval notions of computability to effectively normalize general valuations into probabilistic valuations. Additionally, we plan to apply our semantic framework to develop formal reasoning principles for establishing equivalence between probabilistic programs.


\bibliography{CCCRandomVariables,library_ord}
\bibliographystyle{ACM-Reference-Format}

\end{document}